\tikzset{meter/.append style={draw, inner sep=8, rectangle, font=\vphantom{A}, minimum width=30, line width=.8,
 path picture={\draw[black] ([shift={(.1,.3)}]path picture bounding box.south west) to[bend left=50] ([shift={(-.1,.3)}]path picture bounding box.south east);\draw[black,-latex] ([shift={(0,.1)}]path picture bounding box.south) -- ([shift={(.3,-.1)}]path picture bounding box.north);}}}
\theoremstyle{plain}
\newtheorem{theorem}{Theorem}
\newtheorem{lemma}[theorem]{Lemma}
\newtheorem{corollary}[theorem]{Corollary}
\theoremstyle{definition}
\newtheorem{remark}[theorem]{Remark}
\newtheorem{assumption}[theorem]{Assumption}
\newcommand*{\Id}{\mathrm{id}}
\newcommand*{\E}{\mathbb{E}}
\newcommand*{\Var}{\mathrm{Var}}
\newcommand*{\cE}{\mathcal{E}}
\newcommand*{\cL}{\mathcal{L}}
\newcommand*{\cO}{\mathcal{O}}
\newcommand*{\cR}{\mathcal{R}}
\newcommand*{\cS}{\mathcal{S}}
\newcommand*{\cU}{\mathcal{U}}
\newcommand*{\cW}{\mathcal{W}}
\newcommand*{\N}{\mathbb{N}}
\newcommand*{\R}{\mathbb{R}}
\newcommand*{\eps}{\varepsilon}
\newcommand*{\diag}{\mathrm{diag}}
\newcommand{\norm}[1]{\left\lVert#1\right\rVert}
\definecolor{mylightgreen}{RGB}{219,255,192}
\definecolor{mylightblue}{RGB}{240,255,252}
\definecolor{mylightyellow}{RGB}{255,248,180}
\definecolor{mylightorange}{RGB}{252,248,236}
\definecolor{mygraywhite}{RGB}{243, 243,243}
\definecolor{mylightred}{RGB}{255, 209,192}
\definecolor{mylightpink}{RGB}{255,240,252}
\DeclareMathOperator{\sign}{sign}
\newcommand{\Pegasos}{\textsc{Pegasos}}
\renewcommand{\arraystretch}{1.25}
\definecolor{myBlue}{RGB}{31,119,180}
\definecolor{myOrange}{RGB}{255, 127, 14}
\definecolor{myGreen}{RGB}{44,160,44}
\definecolor{myRed}{RGB}{214,39,40}
\definecolor{myViolet}{RGB}{148,103,189}
\definecolor{myGrey}{RGB}{127,127,127}
\definecolor{myCyan}{RGB}{23, 190, 207}
\definecolor{featureMap}{RGB}{23, 190, 207}
\definecolor{variationalForm}{RGB}{148,103,189}
\definecolor{highlight}{RGB}{214,39,40}
\crefname{equation}{}{}
\title{The complexity of quantum support vector machines}
\author{Gian Gentinetta}
\affiliation{Institute of Physics, \'Ecole Polytechnique F\'ed\'erale de Lausanne}
\affiliation{IBM Quantum, IBM Research Europe -- Zurich}
\author{Arne Thomsen}
\affiliation{Department of Physics, ETH Zurich}
\affiliation{IBM Quantum, IBM Research Europe -- Zurich}
\author{David Sutter}
\affiliation{IBM Quantum, IBM Research Europe -- Zurich}
\thanks{dsu@zurich.ibm.com; Gian Gentinetta and Arne Thomsen contributed equally and are listed in alphabetical order.}
\author{Stefan Woerner}
\affiliation{IBM Quantum, IBM Research Europe -- Zurich}
\begin{document}
\maketitle

\begin{abstract}
Quantum support vector machines employ quantum circuits to define the kernel function. It has been shown that this approach offers a provable exponential speedup compared to any known classical algorithm for certain data sets. The training of such models corresponds to solving a convex optimization problem either via its primal or dual formulation. Due to the probabilistic nature of quantum mechanics, the training algorithms are affected by statistical uncertainty, which has a major impact on their complexity. We show that the dual problem can be solved in $\cO(M^{4.67}/\eps^2)$ quantum circuit evaluations, where $M$ denotes the size of the data set and $\eps$ the solution accuracy compared to the ideal result from exact expectation values, which is only obtainable in theory. We prove under an empirically motivated assumption that the kernelized primal problem can alternatively be solved in $\cO(\min \{ M^2/\varepsilon^6, \, 1/\eps^{10} \})$ evaluations by employing a generalization of a known classical algorithm called \Pegasos. Accompanying empirical results demonstrate these analytical complexities to be essentially tight. In addition, we investigate a variational approximation to quantum support vector machines and show that their heuristic training achieves considerably better scaling in our experiments.
\end{abstract}

\maketitle

\section{Introduction}
Finding practically relevant problems where quantum computation offers a speedup compared to the best known classical algorithms is one of the central challenges in the field. Quantifying a speedup requires a provable convergence rate of the quantum algorithms, which restricts us to studying algorithms that can be analyzed rigorously.  
The impressive recent progress on building quantum computers gives us a new possibility: We can use heuristic quantum algorithms that can be run on current devices to demonstrate the speedup empirically. This however requires a hardware friendly implementation, i.e., a moderate number of qubits and shallow circuits.

In recent years, more and more evidence has been found supporting machine learning tasks as good candidates for demonstrating quantum advantage~\cite{QML_lloyd17,Havlicek2019,Abbas2020a,Liu2021}. 
In particular, the so-called \emph{supervised learning} setting, where in the simplest case the goal is to learn a binary classifier of classical data, received much attention. The reasons are manifold:
\begin{enumerate}[(i)]
    \item The algorithms only require \emph{classical access to data}. This avoids the common assumption that data is provided in the amplitudes of a quantum state which is hard to justify~\cite{Aaronson2015}. It has been observed that if classical algorithms are provided with analogous sampling access to data, claimed quantum speedups may disappear~\cite{tan19,tan20}.
    \item Kernelized support vector machines offer a framework that can be analyzed rigorously. In addition, this approach can be immediately lifted to the quantum setting by using a quantum circuit parametrized by the classical data as the feature map. This then defines a \emph{quantum support vector machine} (QSVM)~\cite{Havlicek2019, Liu2021}.
    \item It is known that, for certain (artificially constructed) data sets, QSVMs can offer a provable exponential speedup compared to any known classical algorithm~\cite{Liu2021}. Note, however, that according to current knowledge such QSVMs require deep circuits and, hence, rely on a fault-tolerant quantum computer.
\end{enumerate}
Despite the growing interest in QSVMs, a rigorous complexity-theoretic treatment comparing the dual and primal approaches to training models employing arbitrary quantum kernels is missing (the analysis presented in \cite{li2019} considers kernel functions defined by quantum states, but these are limited to classical polynomial- and RBF-kernels).
In contrast to the classical case where the kernel function can be computed exactly, the runtime analysis of QSVMs is complicated by the fact that all kernel evaluations are unavoidably subject to statistical uncertainty stemming from finite sampling, even on a fully error corrected quantum computer. This is the case since the quantum kernel function is defined as an expectation value, which in practice can only be approximated as a sample mean over a finite number of measurement shots. 
Furthermore, it has recently been shown that quantum kernels can suffer from exponential concentration, if one is not careful with the construction of the feature map \cite{Thanasilp2022}. Then, an exponential number of measurements is necessary to distinguish the kernel evaluations. However, in this work, we focus on the training of the QSVM and, hence, assume that the feature map which embeds the data through a quantum circuit has been chosen reasonably (see assumption on noisy halfspace learning in \cite[Lemma 14, Definition 15]{Liu2021}).
Being able to successfully train these models is a necessary prerequisite for the ultimate goals of quantum advantage and good generalization, which is why we think the training stage of the algorithms warrants a detailed investigation of its own.

The computational complexity of QSVMs is then defined as the total number of quantum circuit evaluations\footnote{Given a fixed feature map, the number of circuit evaluations is computationally dominant and therefore this definition of complexity is equivalent to the total runtime of finding such a solution.} necessary to find a decision function which, when evaluated on the training set, is $\eps$-close to the solution employing an infinite number of measurement shots. Note that this definition solely in terms of training is legitimate since fitting the models is computationally dominant compared to evaluating the final classifier.


The beauty of QSVMs is that they allow the classification task to be characterized by an efficiently solvable convex optimization problem. Classically, the problem is especially simple in its dual form, where the kernel trick can be used to reduce the optimization to solving a quadratic program. The main drawback of the dual method is that the entire $M\times M$ kernel matrix needs to be calculated. When the kernel entries are affected by finite sampling noise, we find that the scaling is even steeper. In order to find an $\eps$-accurate classifier $\cO(M^{4.67}/\eps^2)$ quantum circuit evaluations are found to be necessary in the setting of \emph{noisy halfspace learning} \cite[Lemma 14, Definition 15]{Liu2021}.\footnote{This is an polynomial improvement compared to the finding in~\cite[Lemma~19]{Liu2021} which requires $\cO(M^{6}/\eps^2)$ circuit evaluations.} Here and throughout the manuscript, $\epsilon$ denotes an upper bound on the absolute difference between two fully trained classifiers, where one is idealized and built using exact expectation values, and the other is constructed with a finite number of quantum circuit evaluations, which result in sample means that are subject to noise. The dependence on $M$ poses a major challenge for problems with large data sets.

There are, however, two attempts to circumvent this potentially prohibitive scaling. Instead of solving the dual, we can make use of the probabilistic \Pegasos~algorithm~\cite{Shalev-Shwartz2008}, which minimizes the primal formulation of the problem using stochastic gradient descent. Unlike standard solvers of the primal, the algorithm can be kernelized. Its main advantage is that instead of requiring the whole kernel matrix, \Pegasos~is an iterative algorithm that only evaluates the kernel entries needed to compute the gradient in every step. We show that this results in $\cO(\min \{ M^2/\varepsilon^6, \, 1/\varepsilon^{10} \})$ measurement shots required to achieve an $\eps$-accurate classification. Note that this runtime can be independent of $M$. However, compared to the dual, this comes at the cost of worse scaling in $\eps$.

A second approach is to employ a variational quantum circuit containing $d$ parameters that are trained to minimize a loss function evaluated on the training set. It has been shown that just like QSVMs, such models implement a linear decision boundary in feature space and can therefore be viewed as approximate QSVMs~\cite{Havlicek2019, Arne}. For an appropriately chosen ansatz and a suitable training algorithm, the approximation quality gets better with increasing $d$. For this heuristic approach to training, we find an empirical scaling of $\cO(1/\varepsilon^{2.9})$, which is again independent of $M$ due to the use of stochastic gradient descent and marks a significant improvement over \Pegasos\ with respect to $\eps$. The price to pay for this reduction in complexity is that the underlying optimization problem is non-convex and its solution therefore loses the guarantee of global optimality.

\paragraph{Results:} Our findings are summarized in~\Cref{tab:qnn_qsvm_scaling}, where additional empirical scalings are included which show that the analytically bounds are essentially tight for practical data sets.
\begin{table}[!htb]
\centering

\bgroup
\def\arraystretch{1.5}

    \begin{tabular}{c | c | c | c}
			           & dual                           & primal (\Pegasos)                                                   & approx.~QSVM                            \\ \hline
              classical equivalent & $\cO(M^2)$ & $\cO(\min\{M/\varepsilon^2, \, 1/\varepsilon^4 \} )$& \ding{55}
              \\
			 analytical complexity & $\cO(M^{4.67}/\varepsilon^2)$ & $\cO(\min \{ M^2/\varepsilon^6, \, 1/\varepsilon^{10} \})$ & 	\ding{55}		    \\ 
			 empirical complexity &	 $\cO(M^{4.8 \pm 0.4}/\varepsilon^{2})$ & $\cO(1/\varepsilon^{9.5 \pm 1.0})$  &	$\cO(1/\varepsilon^{2.9 \pm 0.3})$ 		
			 
    \end{tabular} 
\egroup

		\caption{\textbf{Complexity of QSVM training,} i.e.~the asymptotic number of quantum circuit evaluations required to achieve an $\eps$-accurate decision function on a training set of size $M$.
}
		\label{tab:qnn_qsvm_scaling}
	\end{table}

In~\Cref{sec:svm} we give a brief overview of QSVMs, the optimization problems underlying their training, and the heuristic model designated approximate QSVM. The derivation of the analytical complexity statements for the dual and primal methods can be found in~\Cref{sec:complexity_dual,sec:complexity_pegasos}, respectively. In \Cref{sec:empirical_scaling}, we present the numerical experiments justifying the empirical complexity statements.



\section{Support vector machines}\label{sec:svm}
The machine learning task considered in this paper is the supervised training of binary classifiers on classical data using support vector machines (SVMs)~\cite{Vapnik1992,Cortes95,Vapnik2000}. Given an unknown probability distribution $P(\mathbf{x},y)$ for data vectors $\mathbf{x} \in \R^r$ and class membership labels $y \in \{-1,1\}$, we draw a set of training data $X = \{\mathbf{x}_1,\dots,\mathbf{x}_M\}$ with corresponding labels $y = \{y_1,\dots,y_M\}$. Using this set, the SVM defines a classification function $c: \R^r \to \{-1,1\}$ implementing the trade-off between accurately predicting the true labels and maximizing the orthogonal distance (called \emph{margin}) between the two classes. Once trained, the classification function can be applied to classify previously unseen data drawn from the same probability distribution. While this \emph{generalization performance} of models is crucial when solving a concrete machine learning task, the sole focus of this work lies elsewhere, namely in the training of the models. Therefore, throughout this work only a training and not a test is considered.

We start by introducing classical kernelized support vector machines. The quantum case then constitutes a straightforward extension where the feature maps are defined by quantum circuits. To conclude the section, we present variational quantum circuits and interpret them as approximate quantum support vector machines.

\subsection{Kernelized support vector machines}
Support vector machines can implement nonlinear decisions in the original data space by transforming the input vectors with a nonlinear feature map. Here, we consider the finite dimensional case where the feature map is given by $\varphi: \R^r \to \R^s$. The decision boundary implemented by some $\mathbf{w} \in \R^s$ is then given by
\begin{equation*}
    c_{\textnormal{SVM}}(\mathbf{x}) = \sign\big[\mathbf{w}^\top\varphi(\mathbf{x})\big],
\end{equation*}
which is linear in feature space.
We define the hyperplane $\mathbf{w}^\star$ as the solution to the primal optimization problem posed by support vector machines, which is given by
\begin{align}\label{opt:svm_soft_margin_primal_hinge_loss}
 \textnormal{(primal problem)} \qquad			\min_{\mathbf{w} \in \R^s} \left \lbrace \frac{\lambda}{2} \norm{\mathbf{w}}^2 +  \sum_{i=1}^M \max \left\lbrace 0, 1 - y_i \big(\mathbf{w}^\top \varphi(\mathbf{x}_i)\big) \right\rbrace \right \rbrace \, ,
\end{align}
where the term containing $\lambda>0$ provides regularization and the second term is a sum over the hinge losses of the data points contained in the training set.

Instead of solving the primal optimization problem, modern implementations~\cite{sklearn} usually optimize the dual\footnote{For technical reasons in the complexity analysis, we are not considering the exact dual of~\eqref{opt:svm_soft_margin_primal_hinge_loss} here. Instead of using the $\ell^1$ norm on the slack variables (one can show that this is equivalent to using the hinge loss like in~\Cref{opt:svm_soft_margin_primal_hinge_loss}) , the dual in \Cref{opt:svm_soft_margin_feature_map_dual} corresponds to the use of the $\ell^2$ norm in the primal. See \cite[Equation C6]{Liu2021} for a derivation.} of~\eqref{opt:svm_soft_margin_primal_hinge_loss}
\begin{align} \label{opt:svm_soft_margin_feature_map_dual}
\textnormal{(dual problem)} \qquad  \left \lbrace
\begin{array}{r l}
\max \limits_{\alpha_i \in \R} & \sum_{i=1}^M \alpha_i - \frac{1}{2} \sum_{i,j=1}^M \alpha_i \alpha_j y_i y_j \, k(\mathbf{x}_i, \mathbf{x}_j) - \frac{\lambda}{2} \sum_{i=1}^M \alpha_i^2 \\
\textnormal{s.t.} & 0 \leq \alpha_i \quad \forall i=1,\ldots,M
\end{array} \, .
\right.
\end{align}
The dual problem is typically favored because the \emph{kernel trick} can be straightforwardly applied by defining the symmetric, positive semidefinite kernel function 
\begin{equation}\label{eq:class_kernel}
    k(\mathbf{x},\mathbf{y}) := \varphi(\mathbf{x})^\top\varphi(\mathbf{y})\, ,\quad \mathbf{x}, \mathbf{y} \in \R^r
\end{equation}
as the inner product between feature vectors. 

The solutions of the primal and dual optimization problems are connected via the Karush-Kuhn-Tucker condition as $\mathbf{w} = \sum_{i=1}^M\alpha_iy_i\varphi(\mathbf{x}_i)$~\cite{Liu2021} and hence the classification function can be rewritten as
\begin{equation}
    c_{\textnormal{SVM}}(\mathbf{x}) 
    = \sign\left[\sum_{i=1}^M\alpha_i y_i \varphi(\mathbf{x}_i)^\top\varphi(\mathbf{x})\right] 
    = \sign\left[\sum_{i=1}^M\alpha_i y_i k(\mathbf{x}_i,\mathbf{x})\right] \, ,
    \label{eq:class_qsvm_dual}
\end{equation}
from which it is apparent that both the dual optimization problem and the classification function only depend on the kernel function and feature vectors need not be explicitly computed at any point. 


From~\Cref{opt:svm_soft_margin_feature_map_dual}, it follows that solving the dual requires the evaluation of the full kernel matrix $K \in \R^{M \times M}$ with entries 
\begin{equation}\label{eq:svm_kernel_entries}
	K_{ij} = k(\mathbf{x}_i, \mathbf{x}_j) \quad \textnormal{for} \quad i,j=1,\ldots,M \, .
\end{equation}
Note that given $K$, the dual optimization can be restated as a convex quadratic program (see~\Cref{opt:svm_l2_matrix_dual} in~\Cref{app:daniel}) and hence solved in polynomial time~\cite{Boyd2004}.

\subsection{Primal Estimated sub-GrAdient SOlver (\Pegasos)}
Alternatively, we can solve the primal problem with an algorithm called \Pegasos~\cite{Shalev-Shwartz2011a}\footnote{An implementation recently added to Qiskit~\cite{qiskit} can be found \href{https://qiskit.org/documentation/machine-learning/locale/ta_IN/stubs/qiskit_machine_learning.algorithms.PegasosQSVC.html}{here}.}, which arises from the application of stochastic sub-gradient descent on the objective function $f(\mathbf{w})$ in the primal optimization problem~\Cref{opt:svm_soft_margin_primal_hinge_loss}. Starting with initial weights $\mathbf{w}^0 = \mathbf{0}$, we iteratively optimize $\mathbf{w}^t$ for $t \in \{1,\dots,T\}$. In the following, we analyse how the weights are updated in every step.

As is standard in stochastic gradient descent, we start the optimization step $t$ by uniformly sampling a random index $i_t \in \{1,\dots,M\}$. We then define the partial objective $f^t$ as the hinge-loss for the chosen datum $\mathbf{x}_{i_t} \in X \subset \R^r$ added to the regularization term
\begin{equation*}
    f^t(\mathbf{w}) = \frac{\lambda}{2}\norm{\mathbf{w}}^2 + \max\left[0,1-y_{i_t}\mathbf{w}^T\varphi(\mathbf{x}_{i_t})\right],
\end{equation*}
where $\varphi: \R^r \to \R^s$ denotes the feature map and $\lambda$ is the regularization parameter. In order to find the steepest descent in the loss landscape defined by the partial objective $f^t$, we calculate the gradient of $f^t$ with respect to $\mathbf{w}$ as
\begin{equation*}
    \frac{\partial f^t}{\partial \mathbf{w}} = 
    \begin{cases}
        \lambda \mathbf{w}, & \textnormal{if } y_{i_t}\mathbf{w}^T\varphi(\mathbf{x}_{i_t}) > 1 \\
        \lambda \mathbf{w} - y_{i_t}\varphi(\mathbf{x}_{i_t}), & \textnormal{otherwise}.
    \end{cases}
\end{equation*}

Next, we update the weights
$
     \mathbf{w}^t = \mathbf{w}^{t-1} - \eta^t \left.\frac{\partial f^t}{\partial \mathbf{w}}\right|_{\mathbf{w}=\mathbf{w}^{t-1}},
$
where $\eta^t > 0$ is the learning rate. To further facilitate the calculation, we expand $\mathbf{w}^{t-1} = \frac{1}{\lambda(t-1)}\sum_{j=1}^M\alpha^{t-1}_jy_j\varphi(\mathbf{x}_j)$, leading to 
\begin{equation*}
    \mathbf{w}^t
    = \begin{cases}
        
        \frac{1 - \lambda\eta^t}{\lambda(t-1)}\displaystyle\sum_{j=1}^M\alpha^{t-1}_jy_j\varphi(\mathbf{x}_j), 
        & \textnormal{if } \frac{y_{i_t}}{t\lambda}\sum_{j=1}^M\alpha^{t-1}_jy_jk(\mathbf{x}_{i_t},\mathbf{x}_j) > 1 \\
        
         \frac{1 - \lambda\eta^t}{\lambda(t-1)}\displaystyle\sum_{j=1}^M\alpha^{t-1}_jy_j\varphi(\mathbf{x}_j) + \eta^t y_{i_t}\varphi(\mathbf{x}_{i_t}), 
         & \textnormal{otherwise}.
    \end{cases}
\end{equation*}
We can now fix the learning rate $\eta^t = 1/\lambda t$ to simplify the expression as 
\begin{equation*}
    \mathbf{w}^{t} = \frac{1}{t\lambda}\sum_{j=1}^M\alpha^{t}_jy_j\varphi(\mathbf{x}_j),
\end{equation*}
where the coefficients $\boldsymbol{\alpha}^t$ are defined as $\alpha^t_j = \alpha^{t-1}_j$ for $j \neq {i_t}$ and
\begin{equation*}
    \alpha^t_{i_t} = 
    \begin{cases}
        \alpha^{t-1}_{i_t}, & \textnormal{if } \frac{y_{i_t}}{t\lambda}\sum_{j=1}^M\alpha^{t-1}_jy_jk(\mathbf{x}_{i_t},\mathbf{x}_j) > 1 \\
        \alpha^{t-1}_{i_t} + 1, & \textnormal{otherwise}.
    \end{cases}
\end{equation*}
Crucially, the feature map is only accessed via the kernel entries, i.e. the \Pegasos\ algorithm allows for a kernelization of the primal optimization problem. While in theory we calculate the gradient of $f^t$ with respect to $\mathbf{w}$, in practice $\mathbf{w}$ is never explicitly calculated. Instead, it suffices to calculate and store the integer coefficients $\boldsymbol{\alpha}^t$ both for training and prediction.

\begin{algorithm}[!htb]
	\caption{Kernelized \Pegasos\ \cite[Figure 3]{Shalev-Shwartz2011a}} 
	\label{algo:kernelized_pegasos} 
	\begin{algorithmic}[1]
		\STATE \textbf{Inputs:}
		\STATE training data $T = \{\mathbf{x}_1, \mathbf{x}_2, ..., \mathbf{x}_M\}$
		\STATE labels $L = \{y_1, y_2, ..., y_M\}$
		\STATE regularization parameter $\lambda \in \R^+$
		\STATE number of steps $T \in \N$
		\STATE
		\STATE \textbf{Initialize:} $\boldsymbol{\alpha}_1 \gets \mathbf{0} \in \N^M$
		\STATE
		
		\FOR{$t = 1, 2, ..., T$}
		\STATE Choose $i_t \in \{0, ..., M\}$ uniformly at random.
		
		\FORALL{$j \neq i_t$}
		\STATE $\boldsymbol{\alpha}_{t+1}[j] \gets \boldsymbol{\alpha}_t[j]$
		\ENDFOR
		
		\IF{$y_{i_{t}} \frac{1}{\lambda t} \sum_{j=1}^M \boldsymbol{\alpha}_{t}[j] y_j \, k\left(\mathbf{x}_{i_{t}}, \mathbf{x}_{j}\right)<1$} \label{pegasos:condition}
		\STATE $\boldsymbol{\alpha}_{t+1}[i_t] \gets \boldsymbol{\alpha}_t[i_t] + 1$\label{pegasos:increase_alpha}
		
		\ELSE
		\STATE $\boldsymbol{\alpha}_{t+1}[i_t] \gets \boldsymbol{\alpha}_t[i_t]$
		
		\ENDIF

		\ENDFOR
		
		\STATE
		\STATE \textbf{Output:} $\boldsymbol{\alpha}_{T+1}$
	\end{algorithmic}
\end{algorithm}

\subsection{Quantum support vector machines} \label{sec:svm_quantum}
The classical SVM formulation can be straightforwardly adapted to the quantum case by choosing a feature map
\begin{equation*}
			\begin{aligned}
				\psi \colon \R^r & \to \cS(2^q)                                          \\
				\mathbf{x}       & \mapsto \ketbra{\psi(\mathbf{x})}{\psi(\mathbf{x})} \, ,
			\end{aligned}
\end{equation*}
where $\cS(2^q)$ denotes the space of density matrices on $q$ qubits~\cite{Havlicek2019}.
The kernel function is then given by the Hilbert-Schmidt inner product
\begin{equation}\label{eq:quantum_kernel}
		k(\mathbf{x}, \mathbf{y})  = \mathrm{tr} \big[ \ketbra{\psi(\mathbf{y})}{\psi(\mathbf{y})} \, \ketbra{\psi(\mathbf{x})}{\psi(\mathbf{x})} \big]
		                            = |\!\braket{\psi(\mathbf{x})}{\psi(\mathbf{y})}\!|^2 \, ,
\end{equation}
and can be estimated with the help of a quantum computer as illustrated in~\Cref{fig:svm_quantum_kernel_circuit}~\cite{Liu2021,Arne}.  QSVMs are an active research topic~\cite{Rebentrost2014,KBS21,HZ22}.  
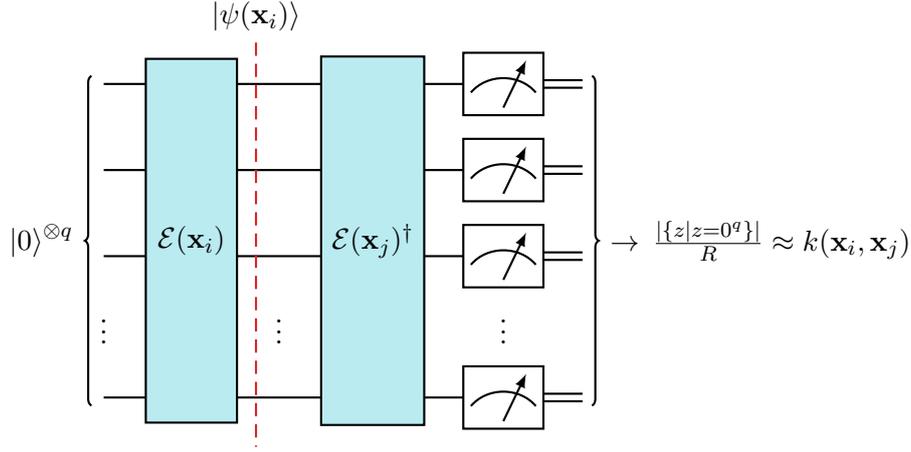
\begin{figure}[!htb]
            \centering
            \begin{quantikz}[row sep=0.2cm]
                \lstick[wires=5]{$\ket{0}^{\otimes q}$} 
                & \gate[5, nwires={4}, style={fill=myCyan!30}][1.2cm]{\cE(\mathbf{x}_i)} \slice[style=myRed]{$\ket{\psi(\mathbf{x}_i)}$} & \qw & \gate[5, nwires={4}, style={fill=myCyan!30}][1.2cm]{\cE(\mathbf{x}_j)^\dagger} & \meter{} & \cw \rstick[wires=5]{$\rightarrow \, \frac{|\{ z | z = 0^q\} |}{R} \approx k(\mathbf{x}_i, \mathbf{x}_j)$}\\
                & & \qw & & \meter{} & \cw\\
                & & \qw & & \meter{} & \cw\\
                \vdots & & \vdots & & \vdots &\\
                & & \qw & & \meter{} & \cw
            \end{quantikz}
            \caption{\textbf{Quantum kernel estimation~\textnormal{\cite{Liu2021,Arne}}:} Let $\cE(\mathbf{x}_i)$ denote a parametrized unitary fixed by the datum $\mathbf{x}_i$, which defines the feature map $\ket{\psi(\mathbf{x}_i)}=\cE(\mathbf{x}_i) \ket{0}^{\otimes q}$. By preparing the state $\cE(\mathbf{x}_j)^\dagger \cE(\mathbf{x}_i)\ket{0}^{\otimes q}$ and then measuring all of the qubits in the computational basis, a bit string $z \in \{0, 1\}^q$ is determined. When this process is repeated $R$-times, the frequency of the all zero outcome approximates the kernel value $k(\mathbf{x}_i, \mathbf{x}_j)$ in~\cref{eq:quantum_kernel}.}
            \label{fig:svm_quantum_kernel_circuit}
        \end{figure}

The major difference between the classically computed kernel in~\Cref{eq:class_kernel} and the quantum one in~\Cref{eq:quantum_kernel} is that the latter expression can only be evaluated approximately with a finite number of measurement shots, due to the probabilistic nature of quantum mechanics. Hence, to prove a complexity statement for QSVMs, it is crucial to understand the robustness of the primal and dual optimization problems with respect to noisy kernel evaluations. A detailed analysis is included in~\Cref{sec_complexity}. 

Note that while such noisy kernel evaluations are necessary for both training and prediction, the complexity of the prediction step is $\cO(S^2 / \varepsilon^2)$ \cite{Arne} and subdominant compared to training (see \cref{tab:qnn_qsvm_scaling}) as $S \leq M$ and often even $S \ll M$ for the number of support vectors $S$ \cite{Burges1998}. 
Therefore, our present analysis is restricted to fitting the support vector machines to a training set. 

\subsection{Approximate quantum support vector machines} 
Compared to the analytically tractable QSVM formulations introduced in~\Cref{sec:svm_quantum}, the  heuristic model we denote \emph{approximate QSVM} defined in the following, potentially offers favorable computational complexity. The main idea is that, in addition to the feature map encoding the data $\ket{\psi(\mathbf{x})} = \cE(\mathbf{x}) \ket{0}^{\otimes q}$ analogously to the QSVM, we implement a variational unitary $\cW (\theta)$, whose parameters $\theta \in \R^d$ are trained to solve the classification problem. \Cref{fig:qnn_circuit} shows a quantum circuit implementing this architecture. 

\begin{figure}[!htb]
    \centering
    \begin{quantikz}[row sep=0.2cm]
			\lstick[wires=5]{$\ket{0}^{\otimes q}$} 
			& \gate[5, nwires={4}, style={fill=myCyan!30}][1.2cm]{\cE(\mathbf{x})} \slice[style=myRed]{$\ket{\psi(\mathbf{x})}$} & \qw & \gate[5, nwires={4}, style={fill=myViolet!30}][2cm]{\cW(\theta)} & \meter{} & \cw \rstick[wires=5]{$\rightarrow h_\theta(\mathbf{x})$}\\
			& & \qw & & \meter{} & \cw\\
			& & \qw & & \meter{} & \cw\\
			\vdots & & \vdots & & \vdots &\\
			& & \qw & & \meter{} & \cw
		\end{quantikz}
    \caption{\textbf{Approximate QSVM~\textnormal{\cite{Havlicek2019,Arne}}:} The classical datum $\mathbf{x}$ is encoded with a feature map circuit $\cE(\mathbf{x})$ analogous to the one in~\Cref{fig:svm_quantum_kernel_circuit}. However, the resulting state is then acted upon by a variational circuit $\cW(\theta)$, after which measurement in the computational basis is performed to determine the expectation value in \Cref{eq_h_theta}.}
    \label{fig:qnn_circuit}
\end{figure}
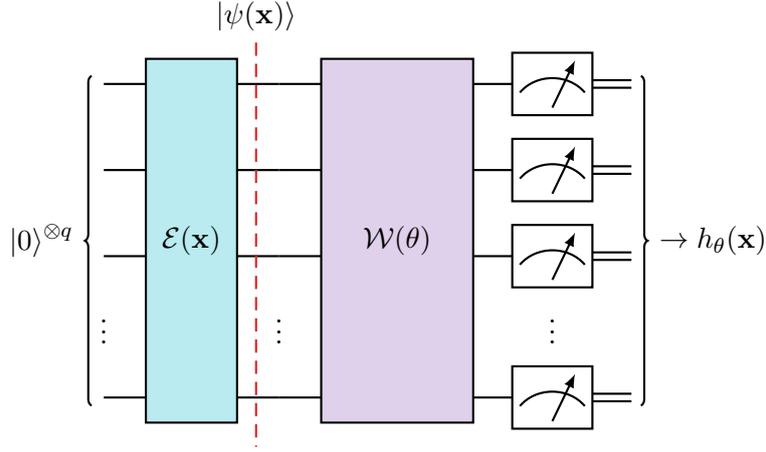
We define the decision function as the expectation value of the $q$-fold $Z$ Pauli-operator~\cite{Havlicek2019}
\begin{equation} \label{eq_h_theta}
	h_\theta(\mathbf{x}) \coloneqq \bra{\psi(\mathbf{x})} \cW(\theta)^\dagger \; Z^{\otimes q} \; \cW(\theta) \ket{\psi({\mathbf{x}})} \in [-1, +1] \, ,
\end{equation}
which together with the introduction of a learnable bias $b \in \R$ is used to classify the data according to
\begin{equation}
    \label{eq:qnn_c}
	c_{\theta}(\mathbf{x}) \coloneqq \text{sign}\left[h_{\theta}(\mathbf{x}) + b\right].
\end{equation}
Given a loss function $\cL$, we use a classical optimization algorithm to minimize the empirical risk on the training data $X$
\begin{equation*}
    \hat{\cR}(c_{\theta}, X) = \frac{1}{M} \sum_{i=1}^M \cL\big(y_i, c_{\theta}(\mathbf{x}_i)\big)\, .
\end{equation*}
The resulting model implements a linear decision boundary in the feature space accessed through the feature map circuit $\cE$ and in this sense approximates a QSVM employing the same feature map (see~\cite{Havlicek2019} and~\cite[Section 2.3.1]{Arne}). 
Note that the model above is just one possible definition to perform binary classification with the circuit depicted in~\Cref{fig:qnn_circuit}. One promising alternative is considering a local instead of a global observable. For both cases the model has to be chosen carefully to avoid unfortunate properties such as Barren plateaus~\cite{marcoBarren21}.

The approximate QSVM model introduced above are also referred to as \emph{quantum neural networks} \cite{Abbas2020a, Belis2021} or as \emph{variational quantum circuits}~\cite{Havlicek2019,review21} in the literature.
\section{Analytical complexity} \label{sec_complexity}
Having introduced the models and associated optimization problems considered in this work, we now turn to analyzing their complexity. The goal of this section is to justify the analytical part of \Cref{tab:qnn_qsvm_scaling}. For the scaling of the dual approach we are able to provide a mathematical proof. The scaling of the primal approach using \Pegasos~can also be analytically derived as long as an additional assumption on the data and feature map is fulfilled. While this assumption is not mathematically proven, we provide empirical evidence to support it. 

\subsection{Dual optimization} \label{sec:complexity_dual}
To solve the dual problem~\eqref{opt:svm_soft_margin_feature_map_dual}, the full kernel matrix $K$, whose elements are given as expectation values which are calculated on a quantum computer (see~\Cref{fig:svm_quantum_kernel_circuit}), needs to be evaluated. As we can only perform a finite number of measurement shots in practice, the estimated kernel entries will always be affected by statistical errors, even when a fault tolerant quantum computer is employed. More precisely, we approximate each kernel entry by taking the sample mean
\begin{align}\label{eq:kernel_sample_mean}
	k_R(\mathbf{x}_i, \mathbf{x}_j) = \frac{1}{R} \sum_{l=1}^R \hat{k}_{ij}^{(l)},
\end{align}
over $R$ i.i.d.~realizations of the random variable $\hat{k}_{ij}^{(l)}$, which is defined to be equal to 1 if the original all zero state $\smash{\ket{0}^{\otimes q}}$ is recovered in the measurement, and equal to 0 for all other measurement outcomes. These entries define an approximate kernel matrix $K_R$, which in the limit $R \to \infty$ converges to the true kernel matrix $K$ by the law of large numbers.
To prove a complexity statement for the dual approach, we need to quantify the speed of this convergence.

A detailed analysis which is shifted to~\Cref{app:latala} for improved readability shows that the expected error on the kernel when using $R$ samples per entry scales as
\begin{equation}\label{eq:latala_result}
 \E \left[ \norm{K_R - K}_2 \right] = \cO\left( \sqrt{\frac{M}{R}} \right).
\end{equation}
Note that throughout this work, the notation $\norm{\cdot}$ denotes the Euclidean vector norm and $\norm{\cdot}_2$ the operator norm it induces. 

In a next step, we analyze how this error propagates through the optimization problem~\Cref{opt:svm_soft_margin_feature_map_dual} to the decision function \Cref{eq:class_qsvm_dual}. To make this step mathematically precise we need to employ an assumption on our kernel and data set.
\begin{assumption} \label{ass_noisy_halfspace}
Our kernel and data set satisfy the noisy halfspace learning assumption defined in \cite[Lemma 14]{Liu2021}.
\end{assumption}
Note that Assumption~\ref{ass_noisy_halfspace} is necessary to rigorously analyze the error propagation as done in~\Cref{app:daniel}. Furthermore, as shown in~\cite{Liu2021} the assumption is reasonable for some kernels. 

Denote by $\alpha_i^\star$ and $\alpha_{R,i}^\star$ the solution to the optimization problem utilizing an exact and noisy kernel matrix respectively. These solutions then lead to the terms 
\begin{equation*}
	h(\hat{\mathbf{x}}) \coloneqq \sum_{i=1}^M \alpha_i^\star y_i \, k(\hat{\mathbf{x}}, \mathbf{x}_i) \qquad \textnormal{and} \qquad
	h_R(\hat{\mathbf{x}}) \coloneqq \sum_{i=1}^M \alpha_{R,i}^\star y_i \, k_R(\hat{\mathbf{x}}, \mathbf{x}_i)\,,
\end{equation*}
inside the sign operator in the classification function \eqref{eq:class_qsvm_dual}, where $\hat{\mathbf{x}}$ is an arbitrary feature vector. We also refer to $h(\mathbf{x})$ as the decision function. We find that
\begin{equation*}
    \left|h(\hat{\mathbf{x}}) - h_R(\hat{\mathbf{x}})\right| = \cO \left(\frac{M^{4/3}}{\sqrt{R}}\right).
\end{equation*}
Thus, in order to achieve an error of $\left|h(\hat{\mathbf{x}}) - h_R(\hat{\mathbf{x}})\right| \leq \varepsilon$, we need a total of
\begin{equation}
    \label{eq:dual_result}
	R_{\textnormal{tot}} = \cO\left(\frac{M^{4.67}}{\varepsilon^2}\right) 
\end{equation}
measurement shots, as the full symmetric kernel matrix consists of $\cO(M^2)$ independent kernel entries. Once the approximated kernel matrix $K_R$ has been calculated, the corresponding quadratic problem can be solved on a classical computer in time $\cO\left(M^{7/2}\log(1/\eps)\right)$\footnote{Quadratic programs are a special case of second order cone problems which can be solved in $\cO\left(M^{7/2}\log(1/\eps)\right)$ using interior point methods~\cite{Boyd2004}.}. Hence, estimating the kernel matrix is the computationally dominant part of the algorithm and the result in~\Cref{eq:dual_result} defines the overall complexity for the dual approach.

\subsection{Primal optimization via \Pegasos}
\label{sec:complexity_pegasos}
In order to analyze the computational complexity of solving the primal optimization problem~\eqref{opt:svm_soft_margin_primal_hinge_loss}, let $f(\mathbf{w})$ denote its objective function. 
Classically, where the kernels are known exactly, the runtime of the kernelized \Pegasos-algorithm scales as $\cO(M/\delta)$, where 
\begin{equation}
    \label{def:delta}
    \delta \coloneqq \left|f(\mathbf{w^\star}) - f(\mathbf{w}^P)\right|
\end{equation}
is the difference in the objective between the true optimizer $\mathbf{w^\star}$ and the result of \Pegasos\, $\mathbf{w}^P$~\cite{Shalev-Shwartz2011a}. When using quantum kernels, however, we have to consider the inherent finite sampling noise afflicting their evaluation.
In order to carry out the complexity analysis for such noisy kernels, we next impose an assumption on Algorithm~\ref{algo:kernelized_pegasos} for which we have empirical evidence as provided in~\Cref{sec:experiments_pegasos}.
\begin{assumption} \label{ass_Pegasos_noise}
The convergence of Algorithm~\ref{algo:kernelized_pegasos} is unaffected if the sum in line~\ref{pegasos:condition} of Algorithm~\ref{algo:kernelized_pegasos} is only $\delta$-accurate for a sufficiently small $\delta>0$.
\end{assumption}

In other words, this means that above a certain approximation accuracy, Algorithm~\ref{algo:kernelized_pegasos} is robust to the statistical uncertainty inherent to quantum kernels which have been evaluated using a finite number of measurement shots.

Thus, we now analyze how many measurement shots are required for the sum in line \ref{pegasos:condition} to be $\delta$-accurate in terms of the standard deviation. First, note that initially $\boldsymbol{\alpha}_1[j] = 0$ for all $j$ and in every iteration at most one component of $\boldsymbol{\alpha}$ becomes non-zero (in line \ref{pegasos:increase_alpha}). At the $t$-th iteration, the sum therefore contains at most $t$-terms. In order for the sum to be $\delta$-accurate, every individual term should thus be $(\delta/\sqrt{t})$-accurate, so we need $t/\delta^2$ measurement shots per term and $t^2/\delta^2$ shots to evaluate the whole sum. Bearing in mind that the total number of iterations $T$ is bounded by $T=\cO(1/\delta)$ \cite{Shalev-Shwartz2011a}, this results in a total of $\cO(T^3/\delta^2) = \cO(1/\delta^5)$ quantum circuit evaluations to train a QSVM with \Pegasos. 
Alternatively, the number of non-zero terms in the sum conditioned on in line~\ref{pegasos:condition} can be bounded by $M$, leading to a total of $\cO(T M^2/\delta^2)= \cO(M^2/\delta^3)$ measurement shots. Combining these results implies $R_{\textnormal{tot}} = \cO(\min \{ M^2/\delta^3, \, 1/\delta^5 \})$.

To directly compare this scaling with~\eqref{eq:dual_result}, we need to clarify the relation between the error
\begin{equation}
\label{def:epsilon}
    \varepsilon \coloneqq \max_{\hat{\mathbf{x}} \in X}|h(\hat{\mathbf{x}}) - h_P(\hat{\mathbf{x}})|
\end{equation}
and $\delta$ defined in~\eqref{def:delta}, where $h$ is the ideal decision function and $h_P$ the decision function resulting from the noisy \Pegasos\, algorithm. As the SVM optimization problem is $\lambda$-strongly convex for $\lambda$ the regularization constant, this connection can be derived as
\begin{equation}     \label{eq:epsilon-delta}
    \varepsilon \leq \sqrt{\frac{2\delta}{\lambda}}\, ,
\end{equation}
see~\Cref{sec:appendix_pegasos_delta_eps}.

Combining these results, the number of measurement shots required to achieve $\varepsilon$-accurate decision functions is bounded by
\begin{equation*}
 R_{\textnormal{tot}} = \cO\left(\min \Big \{ \frac{M^2}{\lambda^3\varepsilon^6}, \, \frac{1}{\lambda^5\varepsilon^{10}} \Big \} \right).
\end{equation*}


\section{Empirical complexity}\label{sec:empirical_scaling}
In this section we provide experiments justifying the empirical part of~\Cref{tab:qnn_qsvm_scaling}. In~\Cref{sec:training_data}, we define the training data used in the following experiments. In~\Cref{app:dual_scaling,app:pegasos_scaling}, we respectively show that the analytical complexity for the dual and primal methods are confirmed in our experiments as close to being tight. Additionally, we provide an empirical scaling for the approximate QSVMs in~\Cref{sec_approximate_QSVM}. \Cref{tab:empirical_scaling} includes the results of the experiments performed in this section for the two settings described in \Cref{sec:training_data}.
\begin{table}[!htb]
\centering

\bgroup
\def\arraystretch{1.5}

    \begin{tabular}{c | c | c | c}
         & dual   & primal (\Pegasos) & approximate QSVM \\ \hline
		separable data &	 $\cO(M^{4.8 \pm 0.4}/\varepsilon^{2})$ & $\cO(1/\varepsilon^{8.3 \pm 1.6})$  & 	$\cO(1/\varepsilon^{2.9 \pm 0.3})$ 		 \\
		overlapping data & $\cO(M^{4.5 \pm 0.3}/\varepsilon^{2})$ & $\cO(1/\varepsilon^{9.5 \pm 1.0})$  & 	$\cO(1/\varepsilon^{2.8 \pm 0.3})$ 
			 
    \end{tabular} 
\egroup

		\caption{\textbf{Empirical complexities of QSVMs}, i.e.~the total number of quantum circuit evaluations required to achieve an $\eps$-accurate solution for the decision function resulting from the dual, primal, and approximate QSVM approach. We consider a data set of size $M$.
}
		\label{tab:empirical_scaling}
	\end{table}

\subsection{Training data}\label{sec:training_data}
The training data used throughout the following experiments is artificially generated according to~\Cref{algo:generating_artificial_data} in the Appendix with respect to a fixed feature map of the architecture illustrated in~\Cref{fig:feature_circuit}. Unless declared otherwise, the experiments are run on 8 qubits and the data thus consists of 8 features. In particular, we consider two settings: data sets that are linearly separable in feature space and ones where there is an overlap between the classes. One realization per setting is illustrated in~\Cref{fig:data_sets}.
\begin{figure}[!htb]
    \centering
     \begin{quantikz}
        \lstick{$\ket{0}$} & \gate{H}\gategroup[3,steps=4,style={dashed,rounded corners,fill=myCyan!30, inner xsep=2pt},background,label style={label position=below,anchor=north,yshift=-0.2cm}]{{Repeat 4 times}} & \gate{R_z(\pi \cdot x_1)} & \gate[wires=2]{\begin{array}{c} ZZ \\ (\pi \cdot x_1 \cdot x_2) \end{array}} & \qw & \meter{} & \cw \\
        \lstick{$\ket{0}$} & \gate{H} & \gate{R_z(\pi \cdot x_2)} & &\gate[wires=2]{\begin{array}{c} ZZ \\ (\pi \cdot x_2 \cdot x_3) \end{array}}&  \meter{} & \cw \\
        \lstick{$\ket{0}$} & \gate{H} & \gate{R_z(\pi \cdot x_3)} &\qw & &\meter{} & \cw
    \end{quantikz}
    \caption{\textbf{Feature map circuit:} This quantum circuit (scaled to 8 qubits with nearest-neighbour entanglement) is employed as the feature map in the QSVMs throughout \Cref{sec:empirical_scaling}. $H$ denotes a Hadamard-gate, $R_z(\theta)$ a single-qubit rotation about the $z$-axis, and $ZZ(\theta)$ a parametric 2-qubit $Z\otimes Z$ interaction (maximally entangled for $\theta = \pi/2$). The feature components $x_1$ and $x_2$ (where $\mathbf{x} = (x_1, x_2)$ is a classical input datum) provide the angles for the rotations. This feature map was chosen due to its previous use in the literature~\cite{Abbas2020a,Havlicek2019}.}
    \label{fig:feature_circuit}
\end{figure}
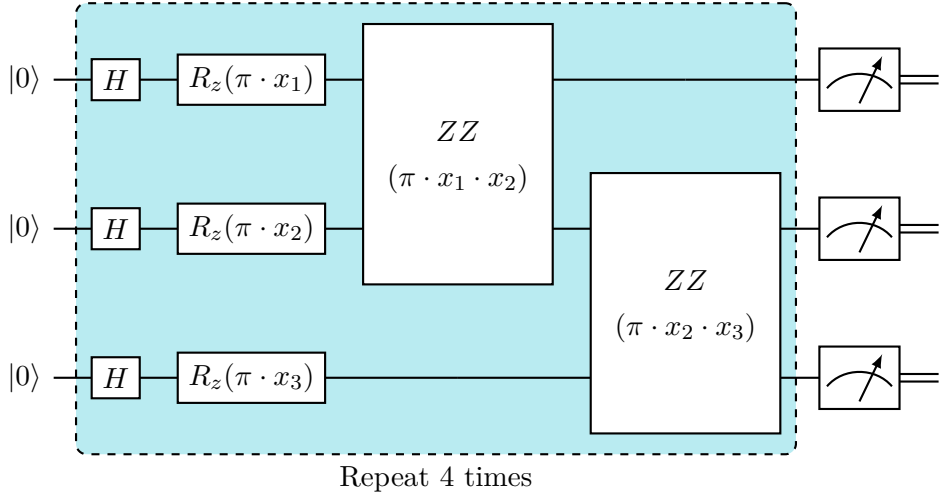

\begin{figure}[!htb]
\centering
\begin{subfigure}{.5\textwidth}
  \centering
  \includegraphics[width=\linewidth]{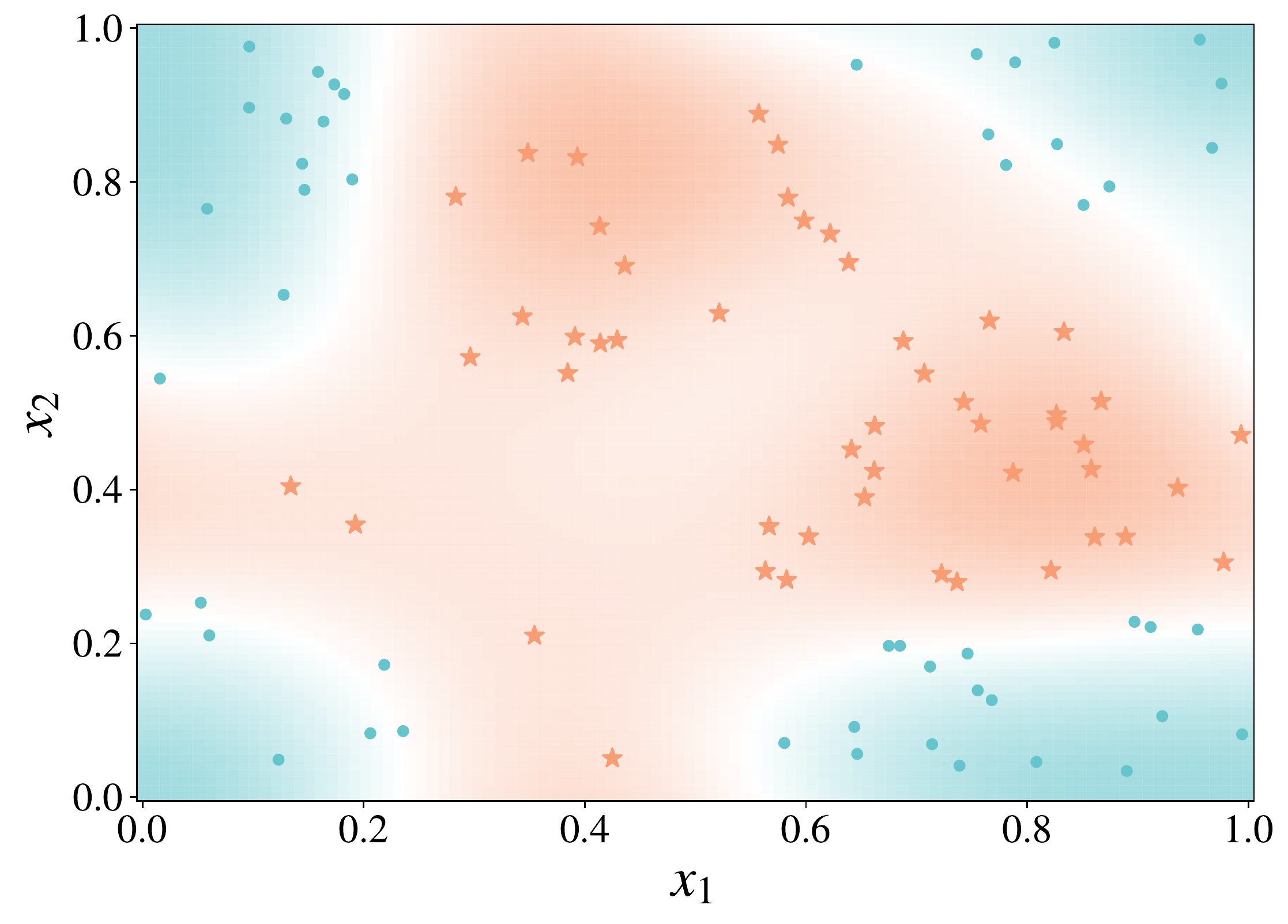}
  \caption{Linearly separable data}

\end{subfigure}%
\begin{subfigure}{.5\textwidth}
  \centering
  \includegraphics[width=\linewidth]{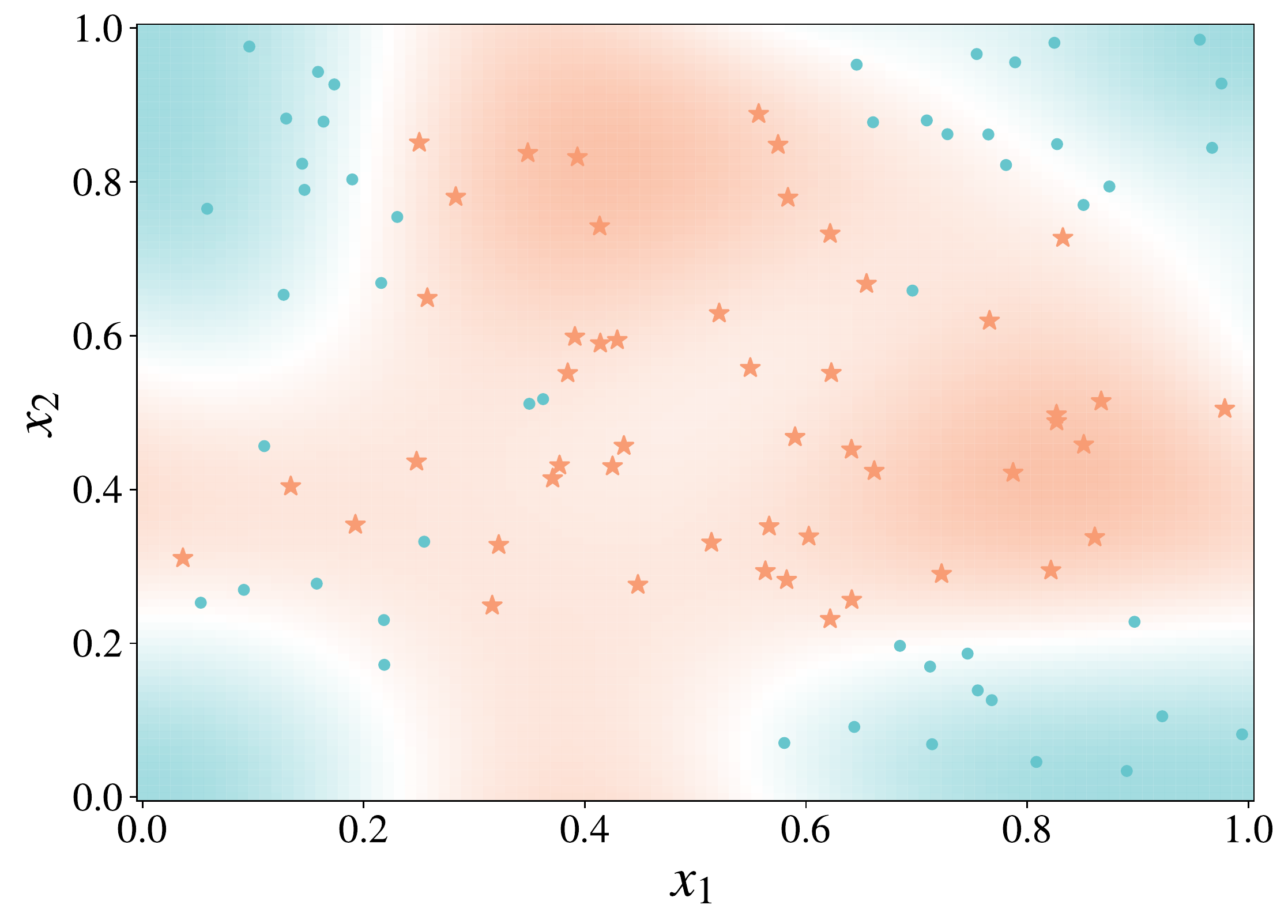}
  \caption{Overlapping data}

\end{subfigure}
\caption{\textbf{Artificial data:} The training data is generated using \Cref{algo:generating_artificial_data} with the feature map from~\Cref{fig:feature_circuit} and a sample size $M=100$. The linearly separable data is achieved by setting the margin positive ($\mu=0.1$ in this case), while the margin for the overlapping data is negative ($\mu=-0.1$). The colouring in the background corresponds to the fixed classifier used to generate the data (note that this is not necessarily the ideal classifier for the generated data points). To allow visualisation, the data displayed here is generated for 2 qubits according to the same algorithm.}
\label{fig:data_sets}
\end{figure}
\newpage
\subsection{Dual optimization}
\label{app:dual_scaling}
In addition to the complexity-theoretic scaling derived in~\Cref{sec:complexity_dual}, we provide an empirical scaling for the dual optimization problem based on numerical experiments. Using shot based noisy kernels, we first determine the $\eps$-dependence of the runtime. In a separate experiment, we analyze the $M$-dependence, which the theory indicates to pose the main limitation of the method. 


For the $\eps$-dependence, we fix the data size to $M = 256$. First, the exact kernel is calculated with a statevector simulator~\cite{qiskit}. The quadratic program~\Cref{opt:svm_l2_matrix_dual} is solved with help of the \texttt{quadprog} python library~\cite{quadprog} in order to find a reference decision function $h_{\infty}(\mathbf{x})$ corresponding to an infinite number of measurement shots. In the next step, we emulate the shot based kernel entries\footnote{Emulating the noisy kernel in this way is equivalent to using a QASM simulator but computationally more efficient.}. We know that for the exact kernel entries $K_{ij}$, the probability of measuring the all $\ket{0}$ outcome is equal to $|\!\braket{\psi(\mathbf{x}_i)}{\psi(\mathbf{x}_j)}\!|^2 = K_{ij}$. We can thus emulate a sample mean of $R$ measurement shots using the binomial distribution $B(R,K_{ij})$. This procedure is repeated for values of $R$ ranging from $10^9$ to $10^{19}$~\footnote{The numbers of shots are chosen so large in order for the condition $||K - K_R|| < \mu$ in~\Cref{th:daniels} to be fulfilled. For smaller $R$, we would need to pre-process $K_R$ to be positive definite in some way, further complicating the analysis.} to find the noisy decision functions $h_{R}(\mathbf{x})$. The error is then defined as
\begin{equation}
    \label{eq:eps_dual_heuristic}
    \eps \coloneqq \max_{\mathbf{x}_i \in X} \left|h_R(\mathbf{x}_i) - h_{\infty}(\mathbf{x}_i)\right|\,.
\end{equation}
Analyzing the plots in~\Cref{fig:dual_exponent}, we find that the bound $R = \cO(1/\eps^{2})$ predicted in~\Cref{sec:complexity_dual} is tight and in precise agreement with the experiments. 


\begin{figure}[!htb]
\centering
\begin{subfigure}{.5\textwidth}
  \centering
  \includegraphics[width=\linewidth]{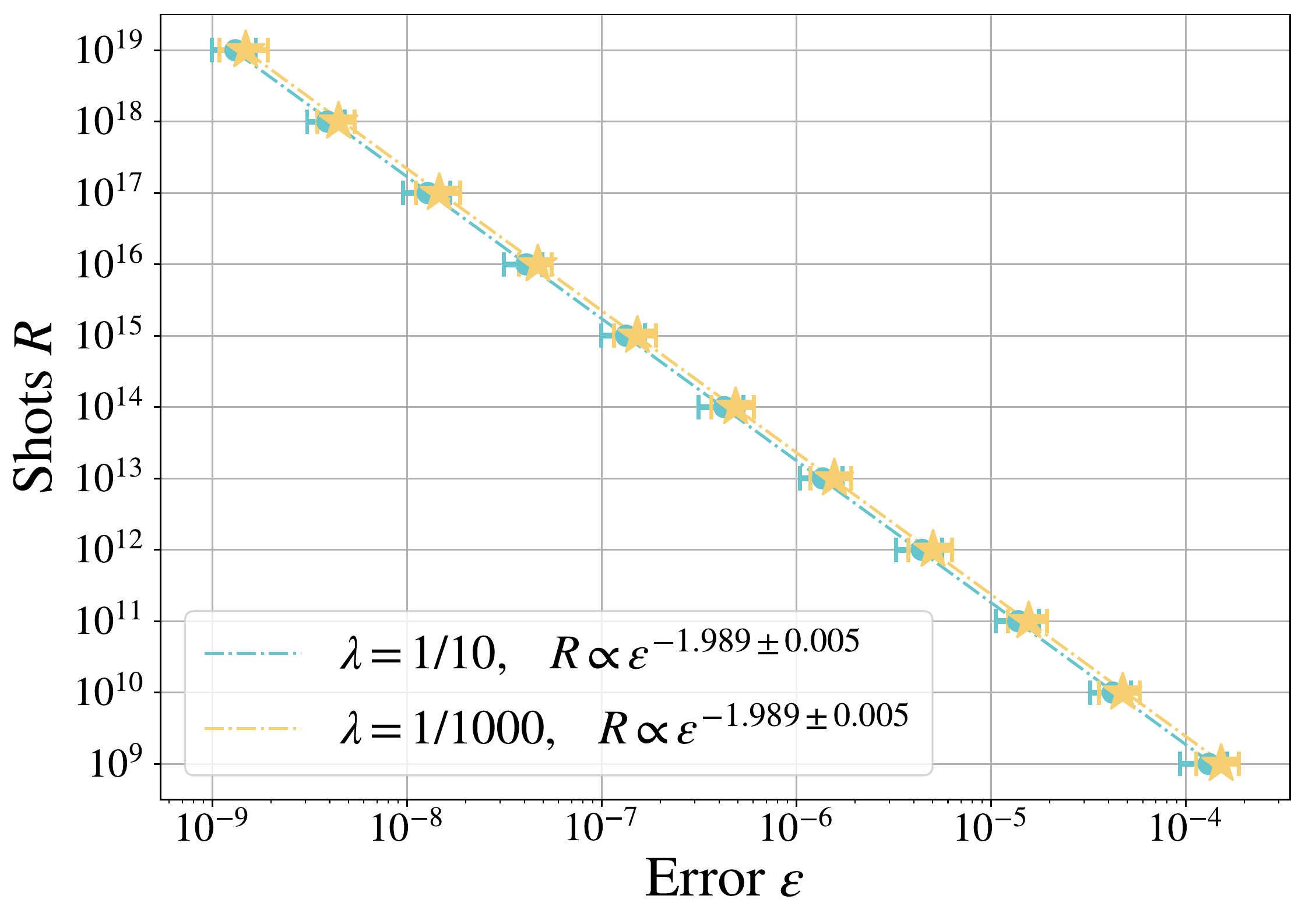}
  \caption{Linearly separable data}

\end{subfigure}%
\begin{subfigure}{.5\textwidth}
  \centering
  \includegraphics[width=\linewidth]{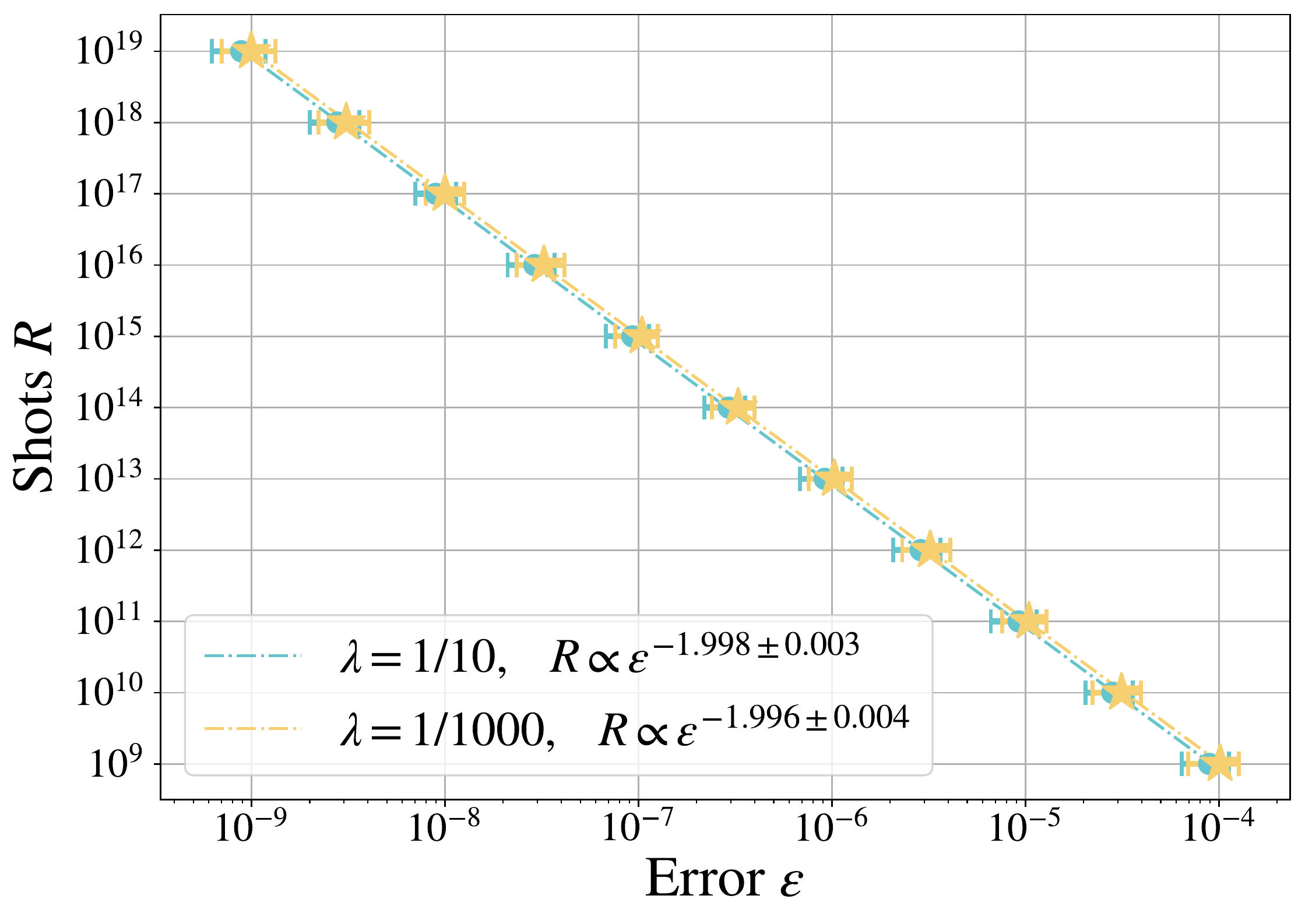}
  \caption{Overlapping data}

\end{subfigure}
\caption{\textbf{$\boldsymbol{\eps}$-scaling of the dual:} Using noisy kernel evaluations drawn from a binomial distribution to simulate the number of shots $R$, the kernel matrix $K_R$ is constructed and the dual optimization problem solved. The number of shots is plotted as a function of $\eps$ on a doubly logarithmic scale, where $\eps$ is calculated according to~\Cref{eq:eps_dual_heuristic}. A linear fit inside the log-log plot is then used to determine the empirical exponent. The experiment is repeated for the different regularization parameters $\lambda = 1/10$ and $\lambda = 1/1000$ and run on $n=100$ different realizations of the training data (see~\Cref{sec:training_data}). The markers shown are the means over the different runs and the horizontal error bars correspond to the interval between the 15.9 and 84.1 percentile.}
\label{fig:dual_exponent}
\end{figure}

In the second experiment, we analyze how the runtime of the dual optimization problem scales with $M$. To this end, the dual problem is solved for different data sizes $M$ and shots per kernel evaluation $R$ to calculate the corresponding error $\eps(M,R)$ according to~\Cref{eq:eps_dual_heuristic}. We then fix some $\eps_0 > 0$ and define $R_{\eps_0}(M)$ as the smallest $R$ such that $\eps(M,R) < \eps_0$, i.e.~$R_{\eps_0}(M)$ is the minimal number of shots needed to achieve an $\eps_0$-accurate solution. In \Cref{fig:dual_M}, a log-log plot of $R_{\textnormal{tot}} = R_{\eps_0}(M)\left[M(M+1)/2\right]$\footnote{The factor $[M(M+1)/2]$ comes from the number of independent entries in the symmetrix $M \times M$ kernel matrix.} as a function of $M$ for different $\eps_0$ is included. From a linear least squares fit in this plot the empirical scaling is determined as $R_{\textnormal{tot}} = \cO(M^{4.8 \pm 0.4})$ for separable data and $R_{\textnormal{tot}} = \cO(M^{4.5 \pm 0.3})$ for overlapping data.

\begin{figure}[!htb]
\centering
\begin{subfigure}{.5\textwidth}
  \centering
  \includegraphics[width=\linewidth]{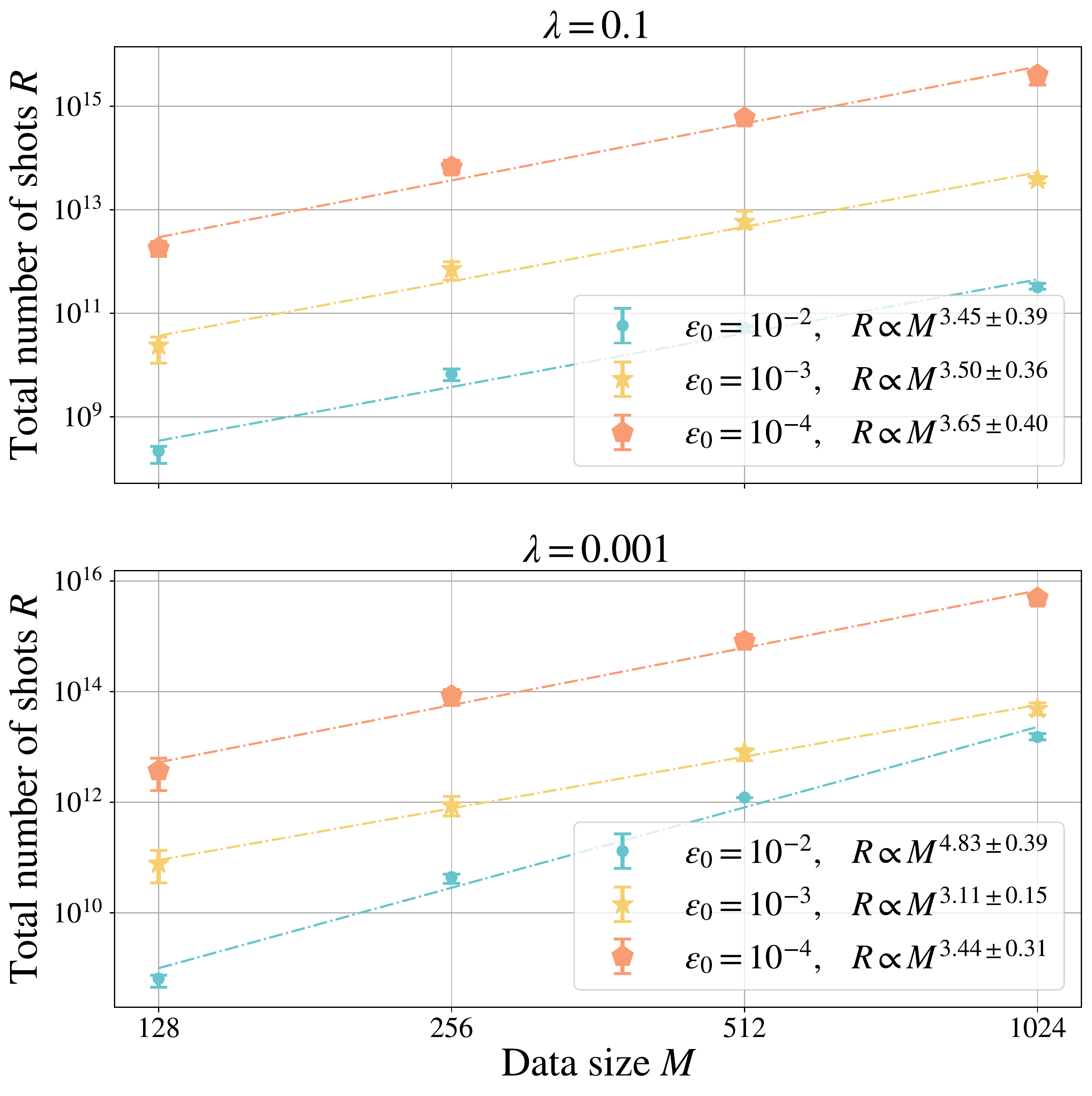}
  \caption{Linearly separable data}

\end{subfigure}%
\begin{subfigure}{.5\textwidth}
  \centering
  \includegraphics[width=\linewidth]{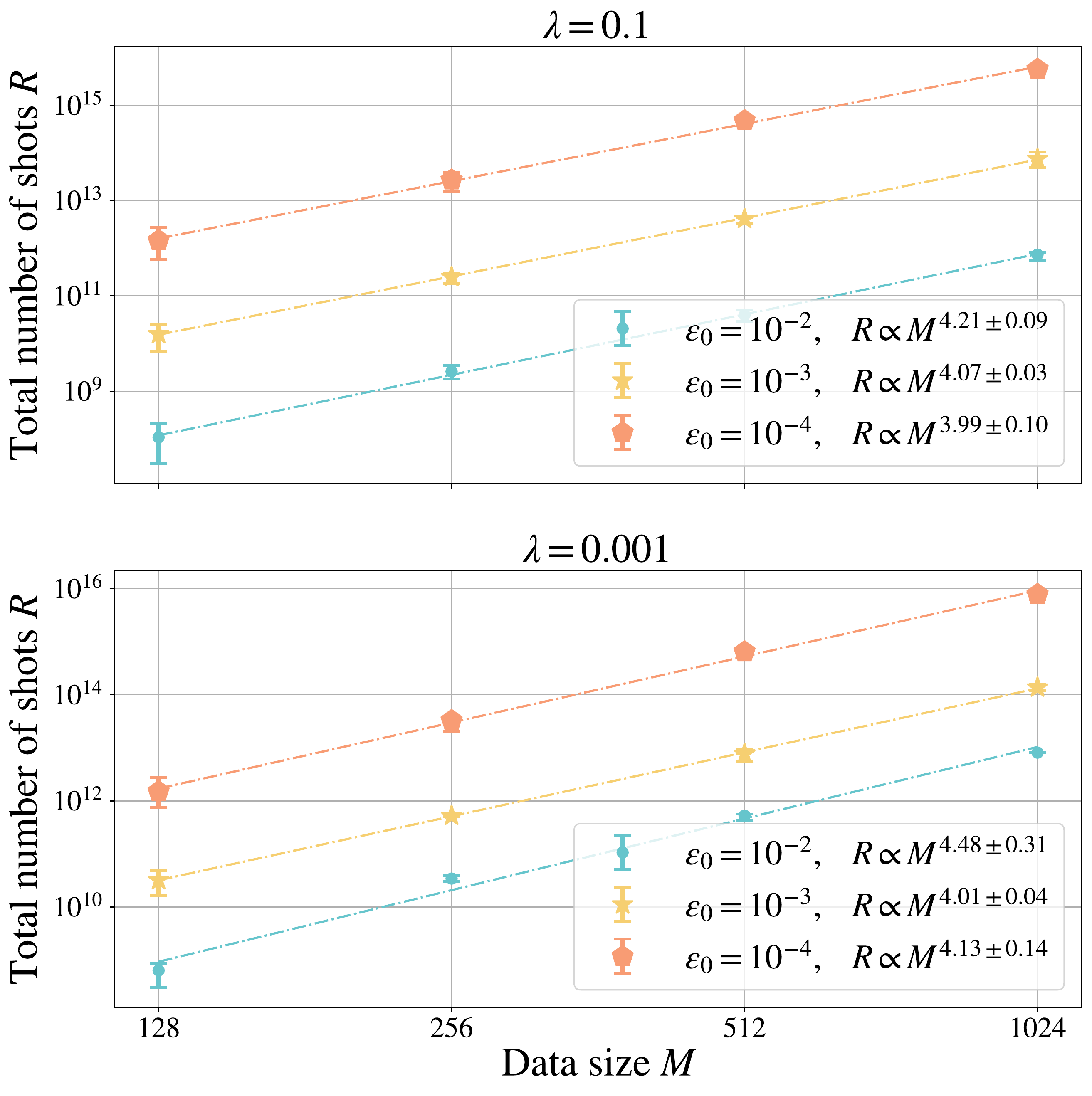}
  \caption{Overlapping data}

\end{subfigure}
\caption{\textbf{$\boldsymbol{M}$-scaling of the dual:} It is plotted how the minimal number of total shots necessary to achieve an $\eps_0$-accurate decision functions depends on $M$. The experiment is repeated for the different regularization parameters $\lambda = 1/10$ and $\lambda = 1/1000$. The markers shown are the means and the error bars indicate the interval between the 15.9 and 84.1 percentile.}
\label{fig:dual_M}
\end{figure}

\subsection{Primal optimization via \Pegasos}
\label{app:pegasos_scaling}
In order to determine the empirical scaling of the \Pegasos~algorithm, we observe how the error $\eps$ of the decision function changes when the number of shots per evaluation is varied. Again employing the feature map and training data described in \Cref{sec:training_data} (with size $M=100$), we first train a reference decision function $h_{\infty}(\mathbf{x})$ by
running \Pegasos~for $T=1000$ iterations with a statevector simulator. This number of steps is sufficient for convergence in our case as can be seen in~\Cref{fig:pegasos_convergence}. In a next step, QASM-simulators~\cite{qiskit} with fixed number of shots per kernel evaluation $R$ are used to run \Pegasos. After every iteration $t$, we calculate the hinge loss 
\begin{equation*}
   \cL_R^t = \frac{1}{M}\sum_{i=1}^M \max \left\lbrace 0, 1 - y_i h_R^t(\mathbf{x}_i)\right\rbrace\,,
\end{equation*} where $h_R^t(\mathbf{x})$ is the decision function after $t$ iterations and $y_i$ the true label. This procedure is repeated for $T$ iterations until convergence, which is defined to be reached when
\begin{equation}
    \left|\cL_R^t - \cL_R^{t-1}\right| < \tau
    \label{eq:empirical_pegasos_convergence}
\end{equation}
for some tolerance which is fixed to $\tau = 10^{-4}$ for this experiment. Finally, the decision function after convergence $h_R^T$ is compared to the reference decision function, defining an error
\begin{equation}
    \label{eq:eps_pegasos_empirical}
    \eps \coloneqq \max_{\mathbf{x}_i \in X} \left|h_R^T(\mathbf{x}_i) - h_{\infty}(\mathbf{x}_i)\right|.
\end{equation}
\Cref{fig:pegasos_exponent} shows the relationship between $\eps$ and $R$. \Cref{fig:pegasos_convergence} further suggests that the number of iterations needed until convergence is independent of the number of shots $R$ per kernel evaluation for sufficiently large R. Thus, we can assume the total number of shots to scale as $R_{\textnormal{tot}} = TR$ for some constant $T > 0$. Under this assumption, the experiments yield an empirical scaling of 
\begin{equation*}
    R_{\textnormal{tot}} = \cO(R) \approx \cO\left(\frac{1}{\eps^{9.5 \pm 1.0}}\right)
\end{equation*}
with respect to $\eps$.

\begin{figure}[!htb]
\centering
\begin{subfigure}{.5\textwidth}
  \centering
  \includegraphics[width=\linewidth]{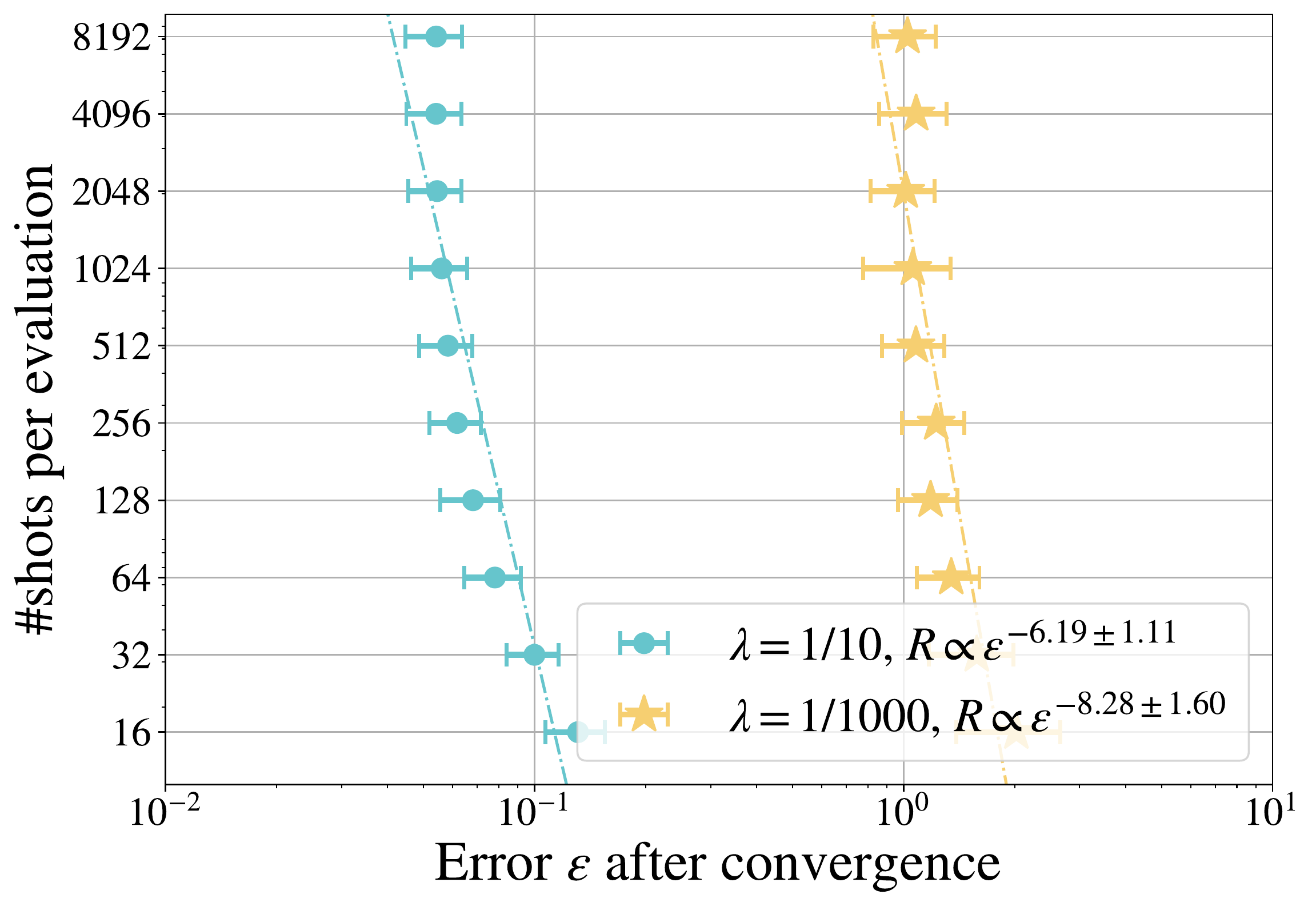}
  \caption{Linearly separable data}

\end{subfigure}%
\begin{subfigure}{.5\textwidth}
  \centering
  \includegraphics[width=\linewidth]{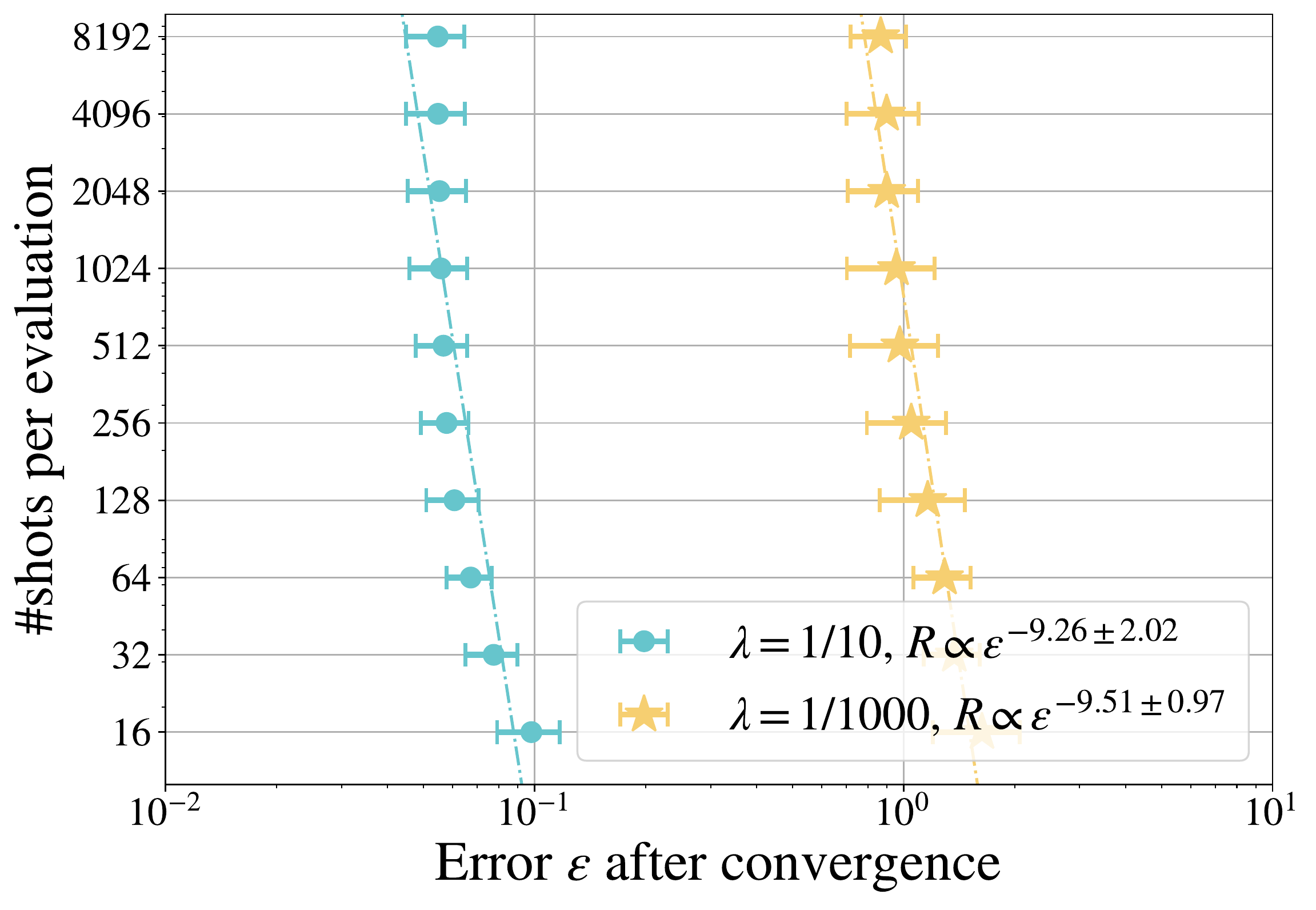}
  \caption{Overlapping data}

\end{subfigure}
\caption{\textbf{$\boldsymbol{\eps}$-scaling of \Pegasos:} Using QASM-simulators with 16 to 8192 shots per evaluation, the \Pegasos~algorithm is run for 750 iterations to ensure convergence. The number of shots is plotted as a function of $\eps$ on a doubly logarithmic scale, where $\eps$ is calculated as defined in \eqref{eq:eps_pegasos_empirical}. A linear fit is used to determine the empirical exponent. The experiment is repeated for the different regularization parameters $\lambda = 1/10$ and $\lambda = 1/1000$. Every data point corresponds to the mean over 50 random runs of the \Pegasos-algorithm and the error bars indicate the standard deviation.}

\label{fig:pegasos_exponent}
\end{figure}

\begin{figure}[!htb]
\centering
\begin{subfigure}{.5\textwidth}
  \centering
  \includegraphics[width=\linewidth]{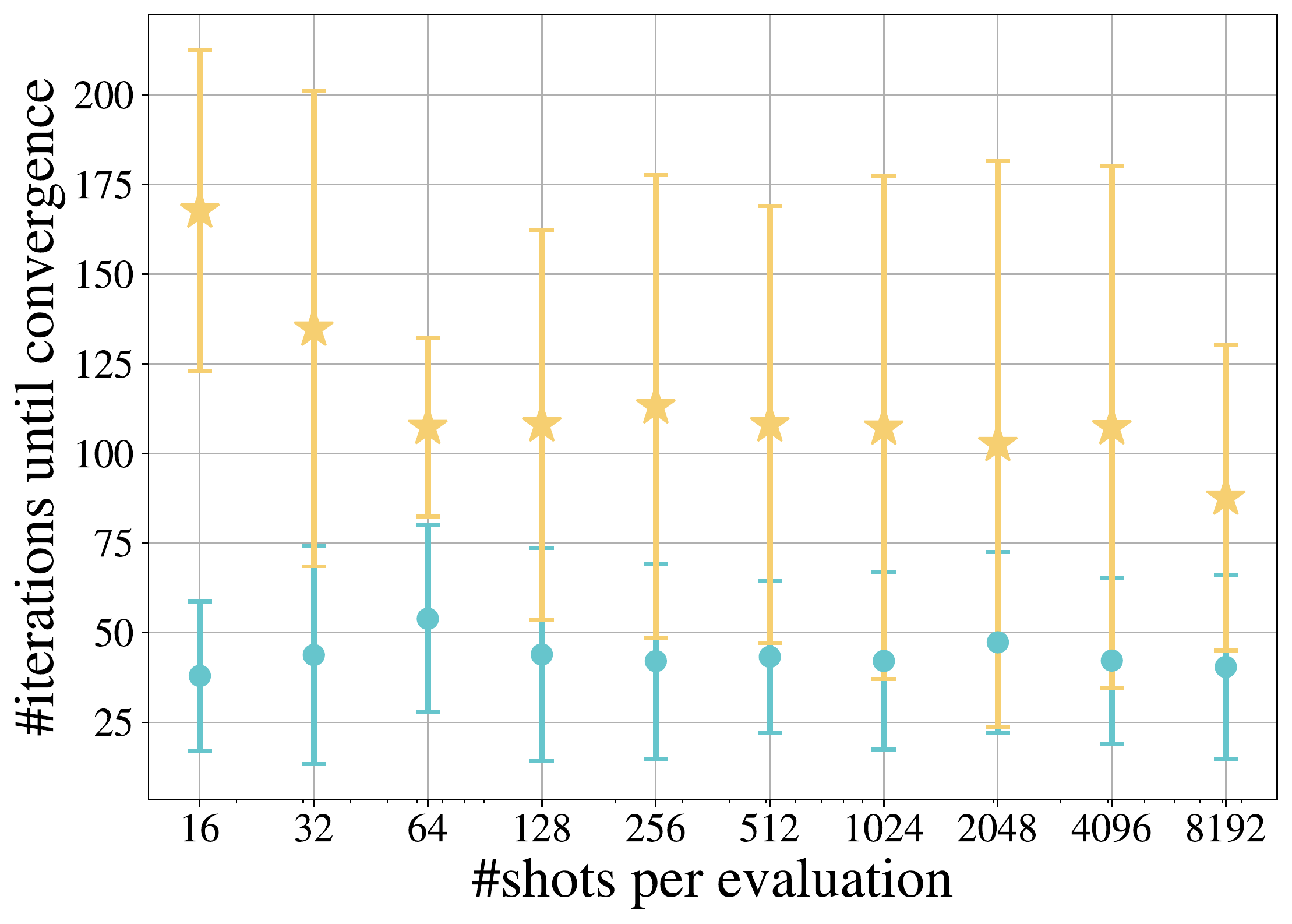}
  \caption{Linearly separable data}

\end{subfigure}%
\begin{subfigure}{.5\textwidth}
  \centering
  \includegraphics[width=\linewidth]{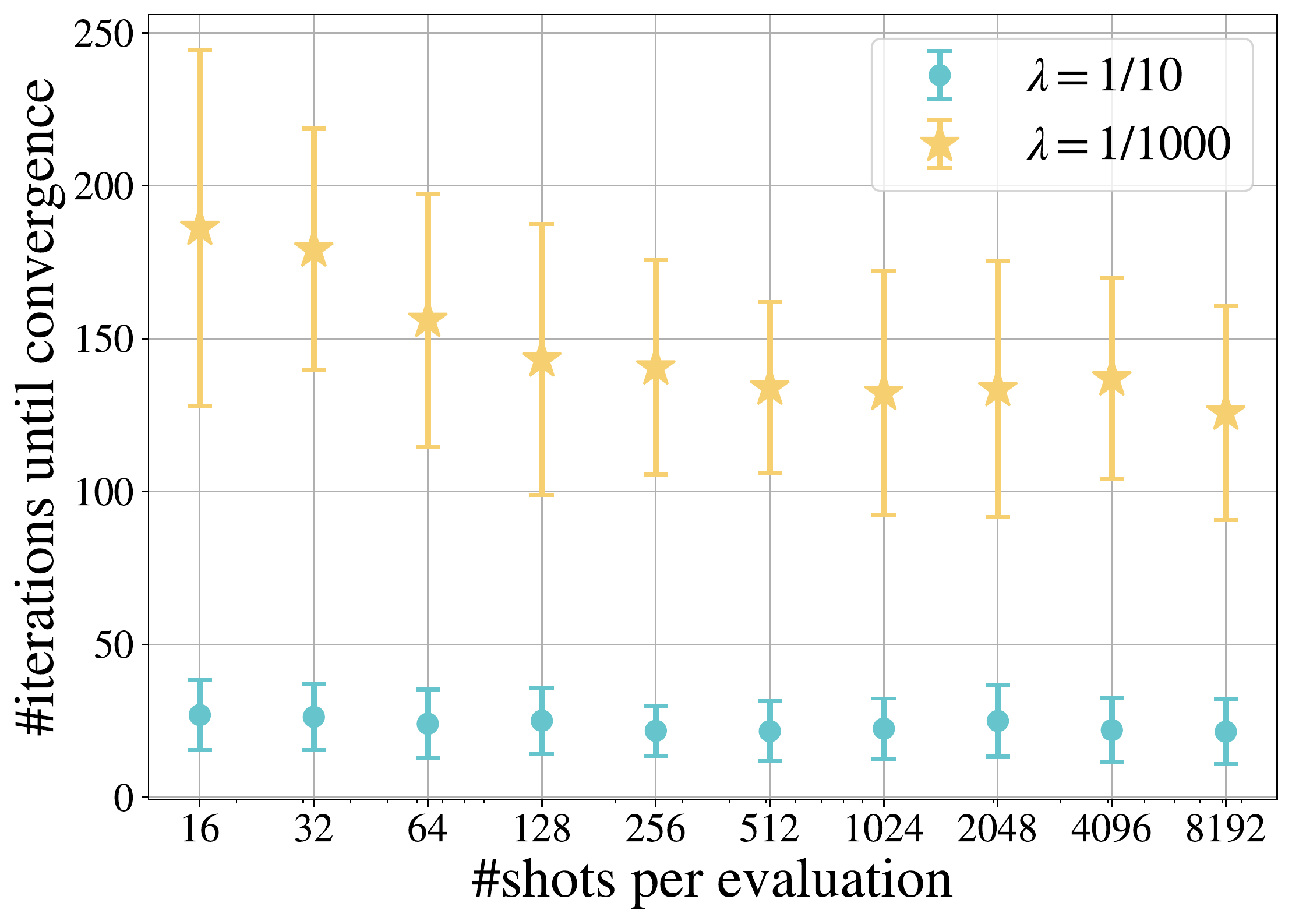}
  \caption{Overlapping data}

\end{subfigure}
\caption{\textbf{Convergence of \Pegasos:} Like in~\Cref{fig:pegasos_exponent}, QASM-simulators ranging from 16 to 8192 shots per evaluation are employed to run \Pegasos\ until convergence as defined in \Cref{eq:empirical_pegasos_convergence}. The number of iterations until convergence is plotted as a function of the number of shots per kernel evaluation. The experiment is repeated for the different regularization parameters $\lambda = 1/10$ and $\lambda = 1/1000$. For every plotted data point, 50 runs are performed using different random seeds for the choice of indices in the \Pegasos~algorithm. The markers shown are the means of the respective steps until convergence and the error bars mark the standard deviation.}
\label{fig:pegasos_convergence}
\end{figure}

\begin{figure}[!htb]
\centering
\begin{subfigure}{.5\textwidth}
  \centering
  \includegraphics[width=\linewidth]{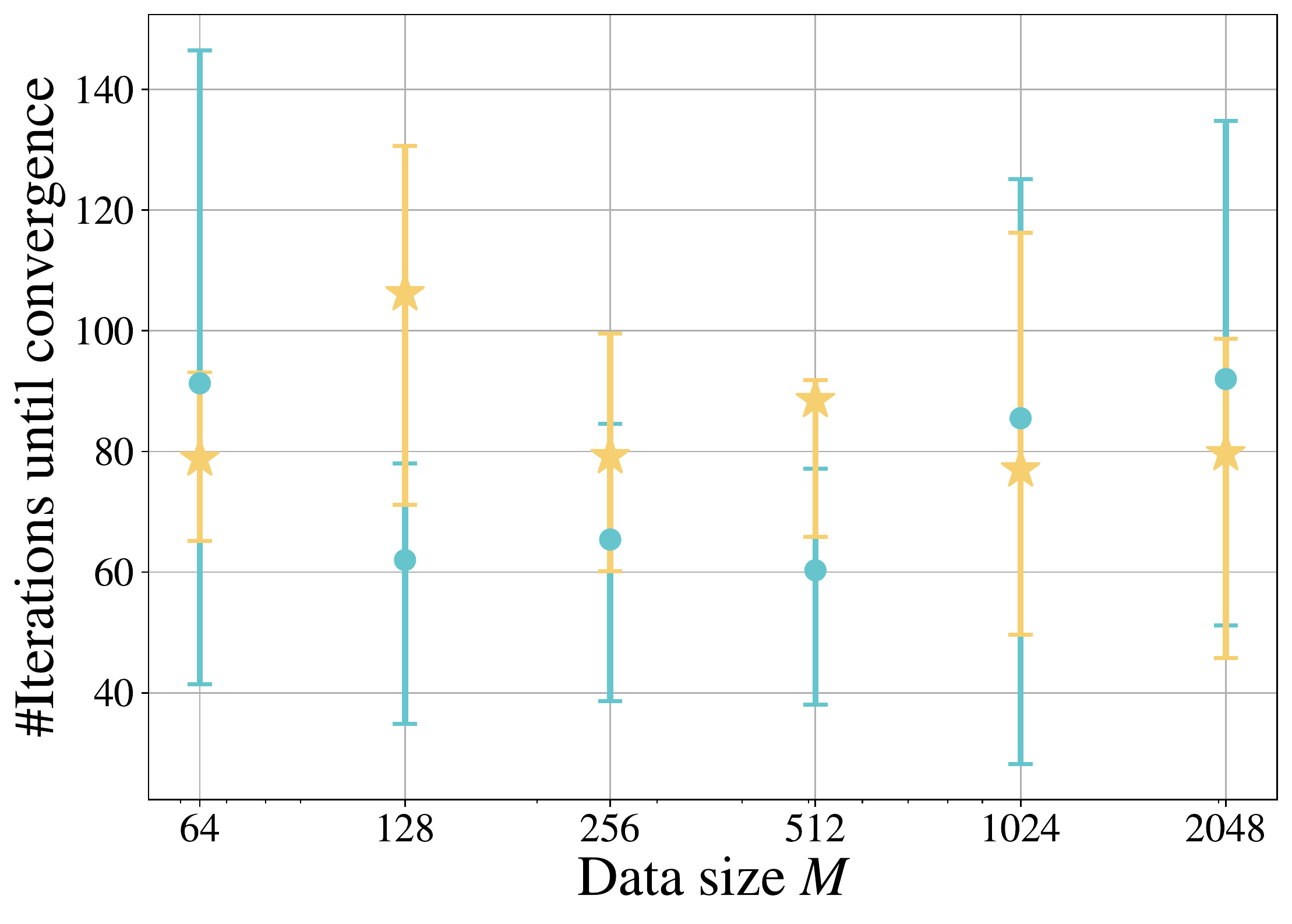}
  \caption{Linearly separable data}

\end{subfigure}%
\begin{subfigure}{.5\textwidth}
  \centering
  \includegraphics[width=\linewidth]{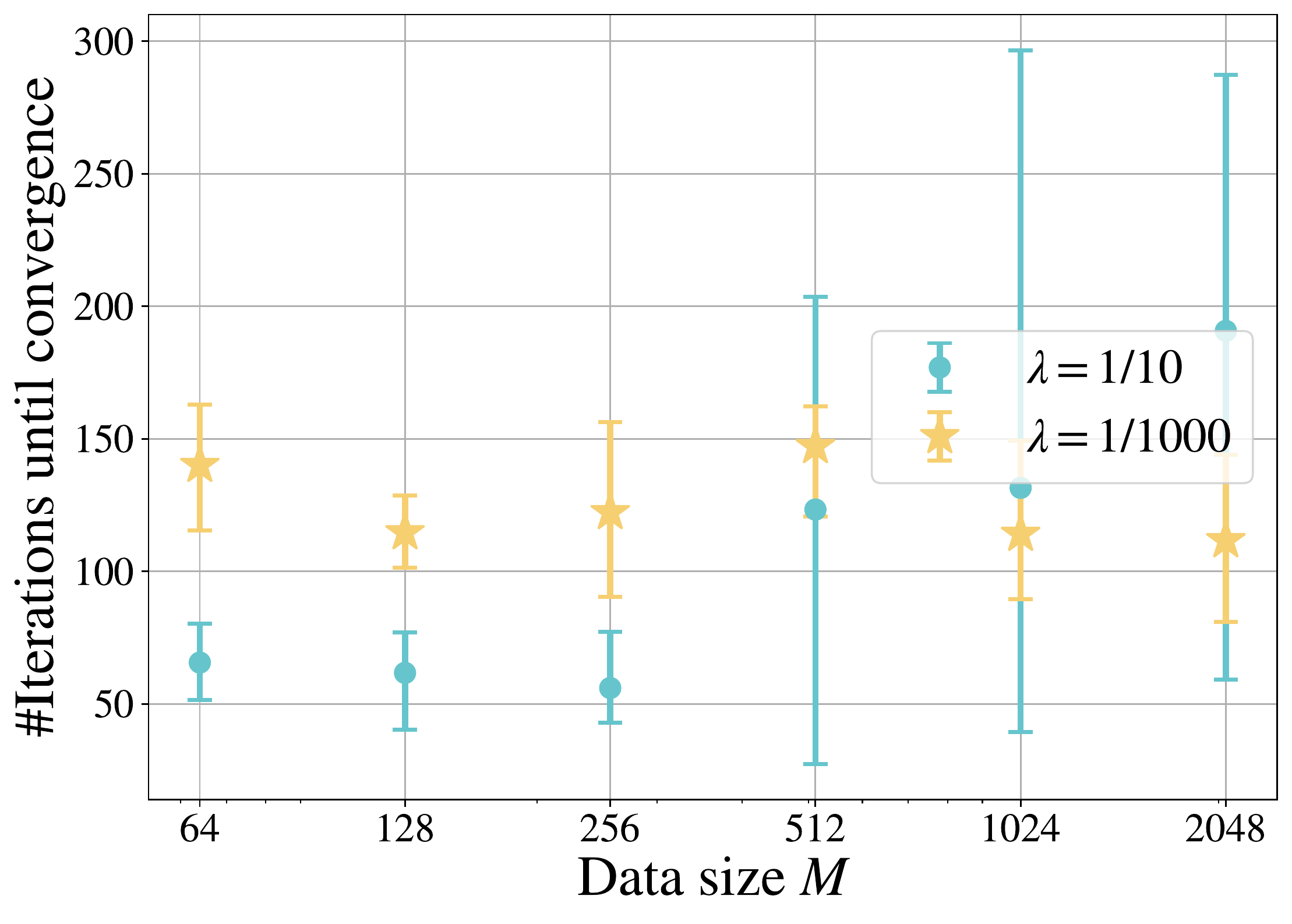}
  \caption{Overlapping data}

\end{subfigure}
\caption{\textbf{$\boldsymbol{M}$-scaling of \Pegasos:} \Pegasos~is applied on the artificial data introduced in~\Cref{sec:training_data} for different training set sizes $M$. The number of iterations until convergence using a statevector simulator is plotted as a function of $M$. The experiment is repeated for the two regularization parameters $\lambda = 1/10$ and $\lambda = 1/1000$. Every data point is the mean over 10 different runs of the probabilistic \Pegasos-algorithm and the error bars mark the interval between the 15.9 and 84.1 percentile.}
\label{fig:pegasos_M}
\end{figure}

In addition to the $\eps$-dependence of the complexity of \Pegasos, we should also remark on the $M$-dependence. \Cref{fig:pegasos_convergence} shows that for large enough $R$, the number of steps needed to converge is independent of $R$. Thus, it suffices to analyze how fast \Pegasos\ converges for varying $M$ when a statevector simulator corresponding to an infinite number of measurement shots is used. \Cref{fig:pegasos_M} shows that there is no discernible correlation between the number of steps until convergence of \Pegasos~and the data size $M$ as expected from the theoretical analysis in~\Cref{sec:complexity_pegasos}.


\subsection{Heuristic training of approximate QSVMs} \label{sec_approximate_QSVM}
Before we tackle the empirical scaling of the non-convex optimization problem that the training of approximate QSVMs poses, we analyze how the expectation value of $h_\theta(\mathbf{x})$ defined in~\eqref{eq_h_theta} is affected by finite sampling statistics. Let $Q \in \{-1,1\}$ denote the random measurement outcome, then $\frac{1}{2}(1 + Q) \sim \textnormal{Bernoulli}(p)$ follows a Bernoulli distribution where $p$ is the probability that $Q =1$. For an i.i.d.\ sample of size $R$, the standard deviation of the sample mean \smash{$Q_R \coloneqq \frac{1}{R}\sum_{i=1}^R Q_i$} is given as \smash{$\sigma_{Q_R} = \sigma_Q/\sqrt{R}$}, where $\sigma_Q$ denotes the standard deviation of the distribution from which the $Q_i$ are sampled. With $\sigma_{\textnormal{Bernoulli}}^2 = p(1 - p) \leq 1/4$ we find $\sigma_Q^2 = 4 \sigma_{\textnormal{Bernoulli}}^2 \leq 1$ and, thus, if we require the standard deviation to be bounded by $\sigma_{Q_R} = \cO(\eps)$ in order to get a confidence interval of width $\cO(\eps)$, we need to perform $R = \cO(1/\eps^2)$ measurement shots per expectation value. Assuming that the training converges in $\cO(1/\eps)$ steps\footnote{Although the theory is not applicable here, we assume a convergence rate of $\cO(1/\eps)$ like first order methods for convex problems~\cite{ref:nesterov-book-04}.}, the expected total number of shots required to fit the model to the training set then scales as
\begin{equation}
    \label{eq:qnn_expected_R}
   R_{\textnormal{tot}}=\cO\left(\frac{1}{\eps^3}\right).
\end{equation}

Because the optimization problem is non-convex, a rigorous proof of the conjectured scaling in~\Cref{eq:qnn_expected_R} remains an unsolved problem. Instead, we perform numerical experiments to determine an empirical scaling. The circuit employed to implement the approximate QSVM is of the form shown in~\Cref{fig:qnn_circuit}, where the unitaries $\cE(\mathbf{x})$ and $\cW(\theta)$ are \href{https://qiskit.org/documentation/stubs/qiskit.circuit.library.ZZFeatureMap.html}{\texttt{ZZFeatureMap}} and \href{https://qiskit.org/documentation/stubs/qiskit.circuit.library.RealAmplitudes.html#qiskit.circuit.library.RealAmplitudes}{\texttt{RealAmplitudes}} circuits, respectively, as provided by the Qiskit Circuit Library~\cite{qiskit}. The experiment is performed on both separable and overlapping data generated as outlined in~\Cref{sec:training_data} and a size of $M=100$. To train the parameters $\theta$, we minimize the \href{https://qiskit.org/documentation/machine-learning/stubs/qiskit_machine_learning.utils.loss_functions.CrossEntropyLoss.html}{\texttt{cross-entropy loss}} using a gradient based method. Calculating the full gradient scales linearly with both the training set size $M$ and the number of parameters $d$. To avoid this, we use SPSA~\cite{SPSA}, which ensures that the scaling is independent of $d$. In addition, we use stochastic gradient descent to approximate the gradient on a batch of 5 data points instead of the whole training set, removing the $M$-dependence. \Cref{fig:approx_d_and_M} includes empirical evidence in support of this claim. 

\begin{figure}[!htb]
\centering
\begin{subfigure}{.5\textwidth}
  \centering
  \includegraphics[width=\linewidth]{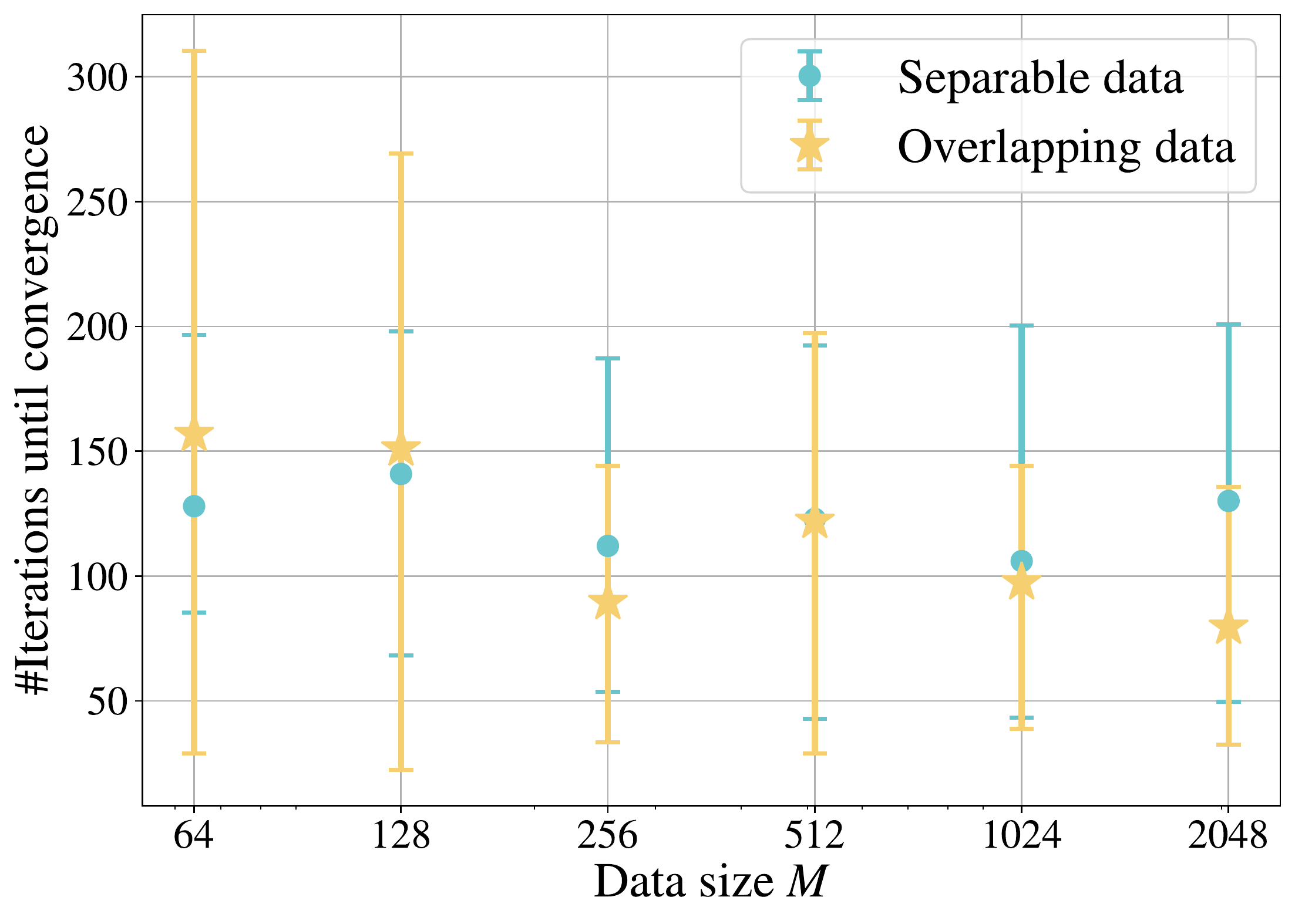}
  \caption{$M$-dependence}

\end{subfigure}%
\begin{subfigure}{.5\textwidth}
  \centering
  \includegraphics[width=\linewidth]{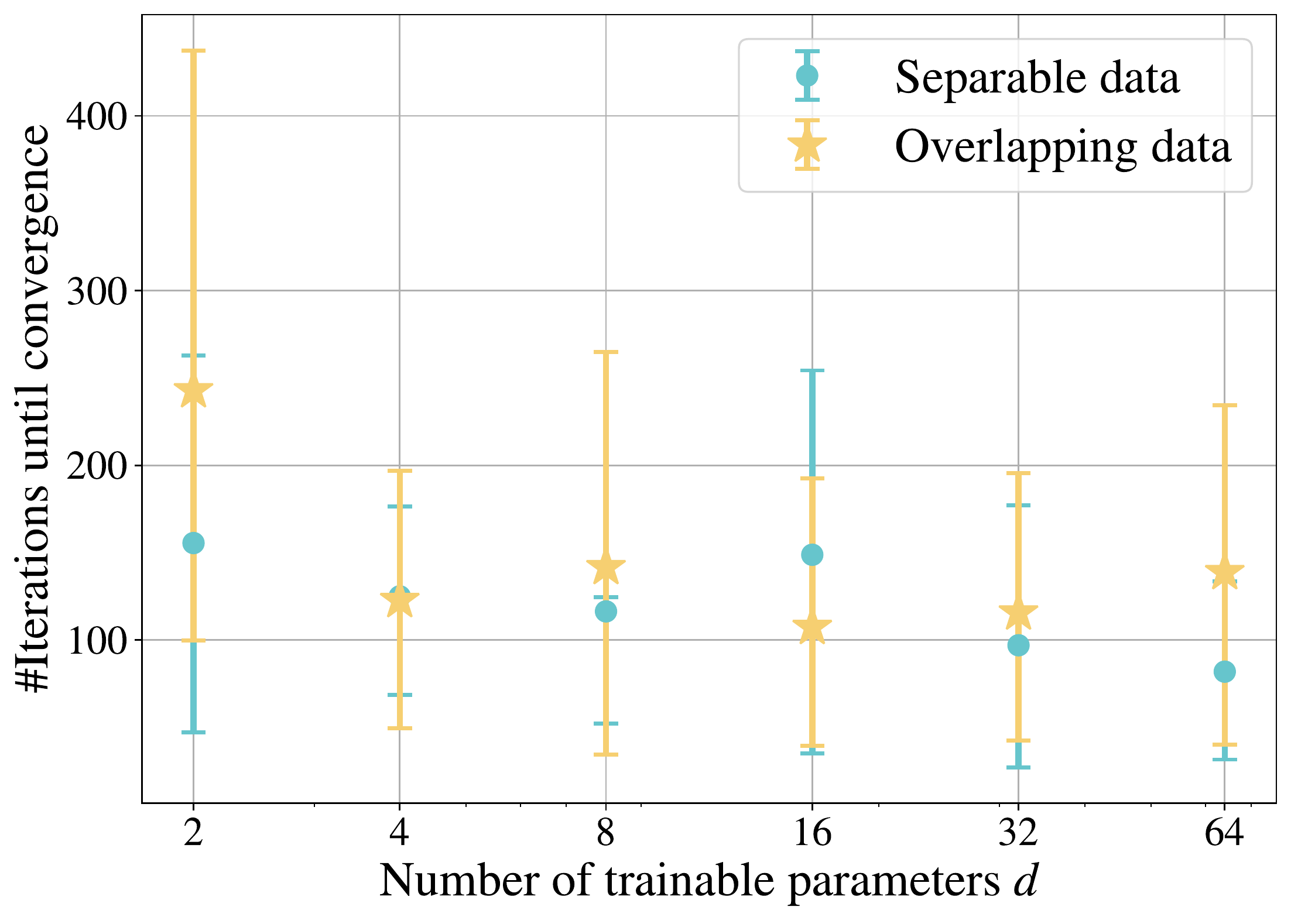}
  \caption{$d$-dependence}

\end{subfigure}
\caption{\textbf{$\boldsymbol{M}$- and $\boldsymbol{d}$-scaling of the approximate QSVM model:} The approximate QSVM is trained using a combination of the SPSA and SGD optimization algorithms and a statevector simulator. Convergence is defined as the iteration $T$ where the parameters fulfill $||\theta^T - \theta^{T-1}||/d < 10^{-4}$. The experiment is repeated for different (a) sizes $M$ of the 2-dimensional data set (with $d=8$ fixed) and (b) number of trainable parameters $d$ with ($M=256$ fixed). The total number of circuit evaluations for SPSA with a batch size of 5 is then given as $R_{tot} = 5\cdot 2 \cdot T$. Every experiment is repeated $n=10$ times, such that the markers correspond to the means and the error bars to the interval between the 15.9 and 84.1 percentile over these runs.}
\label{fig:approx_d_and_M}
\end{figure}
\begin{figure}[!htb]
\centering
\begin{subfigure}{.5\textwidth}
  \centering
  \includegraphics[width=\linewidth]{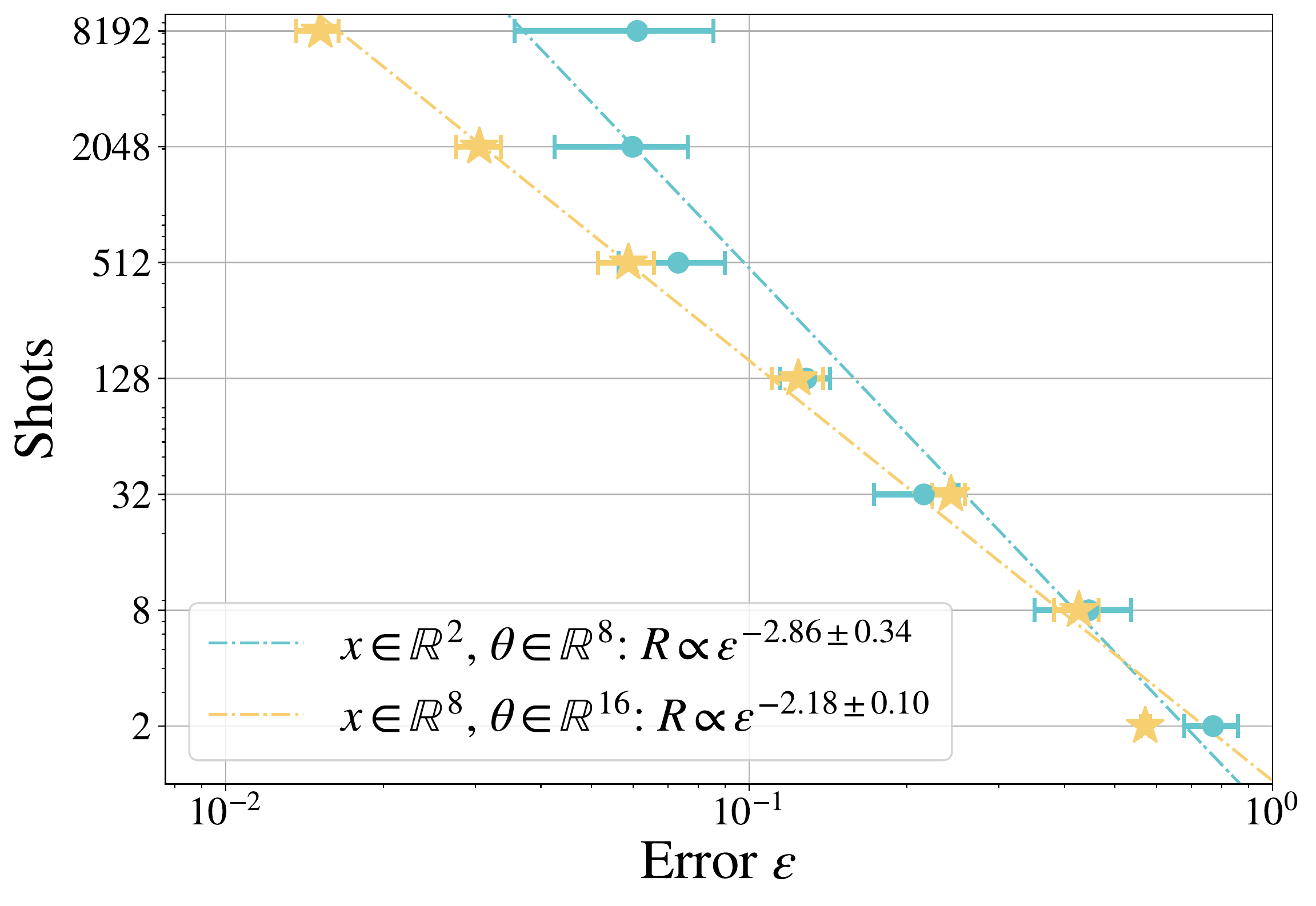}
  \caption{Linearly separable data}

\end{subfigure}%
\begin{subfigure}{.5\textwidth}
  \centering
  \includegraphics[width=\linewidth]{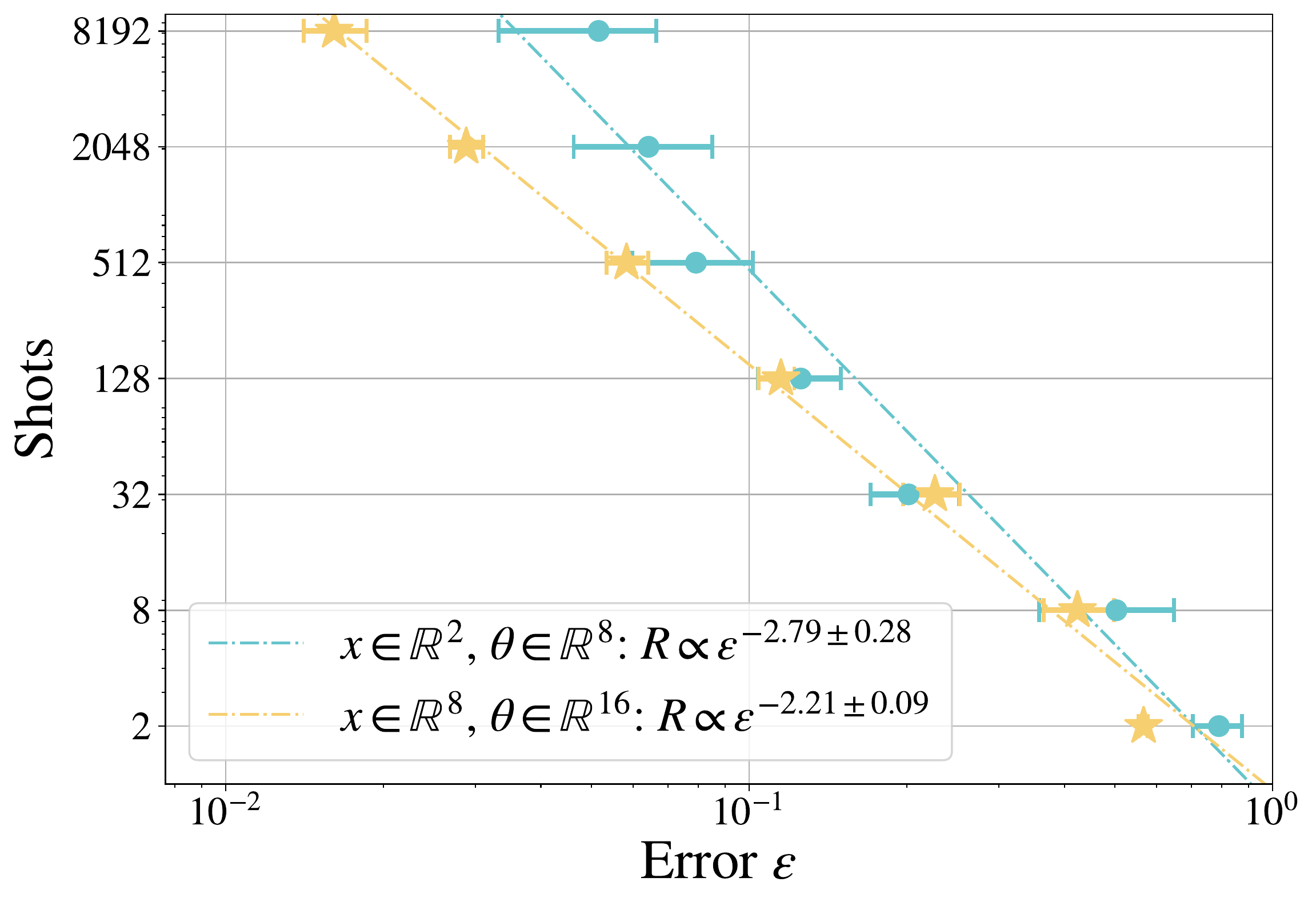}
  \caption{Overlapping data}

\end{subfigure}
\caption{\textbf{$\boldsymbol{\eps}$-scaling of the approximate QSVM model:} QASM-simulators with shots per expectation value ranging from 2 to 8192 are employed. The number of shots $R$ is plotted as a function of $\eps$ as defined in \Cref{eq:qnn_eps} for the training procedure detailed in the main text. A linear fit inside the log-log plot is used to determine the empirical exponent realized in our experimental setup. We analyze the case of 2-dimensional data with 8 trainable parameters (blue) and of 8-dimensional data with 16 trainable parameters (orange) both for separable and overlapping data. Every experiment has been repeated $n=10$ times and the markers shown are the means of the resulting errors, while the error bars mark the interval between the 15.9 and 84.1 percentile.}
\label{fig:qnn_exponent}
\end{figure}

To analyze the effect of finite sampling noise, the training is performed with Qiskit's shot based QASM-simulator~\cite{qiskit} using $R$ shots per expectation value. The optimization of the models is first performed for 1000 training steps, resulting in the noisy decision function $h_{\theta_{\! R}}(\mathbf{x})$. To enable a direct comparison to the ideal case, we minimize the loss using full gradient descent (i.e.~all parameters and all data in the training set are considered in the gradient evaluations) and a statevector simulator in an additional experiment. In this way, the noiseless decision function $h_{\theta_{\! \infty}}(\mathbf{x})$ is determined. Here, $\theta_{\! \infty}$ are the parameters that the statevector optimization ($R \to \infty$) converged to given the initial parameters $\theta_{\! R}$.
This procedure is repeated multiple times with different random initializations of the trained weights and varying $R$. 
For an error defined as
\begin{equation}
    \label{eq:qnn_eps}
    \eps \coloneqq \max_{\mathbf{x} \in X}\left|h_{\theta_{\! R}}(\mathbf{x}) - h_{\theta_{\! \infty}}(\mathbf{x})\right|,
\end{equation}
a least squares fit with respect to the number of shots is performed to conclude the experiment.
The result is presented in \Cref{fig:qnn_exponent}. From that experiment, the empirical complexity of the measurement shots needed for an $\eps$-accurate solution is found to be approximately given by
\begin{equation*}
    R_\text{tot} = \cO\left(\frac{1}{\eps^{2.9 \pm 0.3}}\right).
\end{equation*}


\section{Conclusion}
The results summarized in~\Cref{tab:qnn_qsvm_scaling} show that QSVMs can be trained in polynomial time with multiple training algorithms that differ in their computational complexity with respect to the size of the data set $M$ and the accuracy $\eps$ of the decision function. Both the dual- and primal-QSVM solvers benefit from the convexity of the optimization problem that guarantees convergence to the global optimum. For the dual optimization we find that the complexity scales as $\cO(M^{4.67}/\eps^2)$, which entails a significant overhead over training a classical SVM due to shot noise that is inherent in any quantum measurements. Conversely, the \Pegasos~algorithm provides a scaling that is independent of $M$ by solving the primal optimization problem using stochastic gradient descent. While this provides an advantage for training the classifier on large data sets, this algorithm scales worse with respect to the $\eps$. The cost of training the algorithm with quantum kernels increases to $\cO(1/\eps^{10})$ from $\cO(1/\eps^4)$ when using classical kernels. While this scaling looks daunting, it is important to remark that $\eps$ denotes the error between the noisy and noiseless decision function evaluated on the training set. However, when solving a classification task in practice, typically the generalization performance of the models on some test set is the quantity of interest. The relationship between $\eps$ and the generalization error is an open question and promising direction for future research.  

The best empirical scaling for training the QSVM optimization problem is found to be given by the approximate QSVM as the cost is also independent of $M$ and the dependence on $\eps$ is reduced to $\cO(1/\eps^3)$. However, by choosing the approximate QSVM, we lose any guarantees on finding the global optimum of the optimization problem as the loss landscape now becomes non-convex. In addition to ensuring that the quantum kernel does not suffer from exponential concentration, we now also need to make sure that the loss function and initial choice of parameters does not lead to trainability problems such as barren plateaus. Nevertheless, the efficient scaling which might render the approximate QSVM the only feasible variant in practical settings. 

Finally, we note that in this work we set the focus on the complexity of training QSVMs. We show that the optimization problem can be solved efficiently but the effect of statistical noise inherent to the quantum kernel introduces a polynomial overhead over the classical equivalent. It remains an open question whether quantum feature maps that provide a practical advantage over classical machine learning methods for real life classification tasks exist.

\paragraph{Author contributions} AT~and GG~contributed equally to this work. The theoretical analysis was mainly conducted by AT~while GG~performed most of the numerical experiments. The work was supervised by DS~and~SW. All authors contributed to the write-up of this manuscript.

\paragraph{Competing interests} The authors declare that there are no competing interests.

\paragraph{Acknowledgements}
We thank Amira Abbas for fruitful discussions about QSVMs,  Julien Gacon for help with the numerical experiments and Supanut Thanasilp for insightful discussions about exponential concentration of quantum kernels. GG acknowledges support by the NCCR MARVEL, a National Centre of Competence in Research, funded by the Swiss National Science Foundation (grant number 205602).
\paragraph{Data availability}
The code and data to reproduce all numerical experiments and plots in this paper is available at~\cite{GTSW_code}.

\bibliographystyle{arxiv_no_month}
\bibliography{notes}

\begin{thebibliography}{10}

\bibitem{QML_lloyd17}
J.~Biamonte, P.~Wittek, N.~Pancotti, P.~Rebentrost, N.~Wiebe, and S.~Lloyd.
\newblock Quantum machine learning.
\newblock {\em Nature}, 549(7671):195--202, 2017.
\newblock
  \texttt{\href{http://dx.doi.org/10.1038/nature23474}{DOI:\,10.1038/nature23474}}.

\bibitem{Havlicek2019}
V.~Havl{\'{i}}{\v{c}}ek, A.~D. C{\'{o}}rcoles, K.~Temme, A.~W. Harrow,
  A.~Kandala, J.~M. Chow, and J.~M. Gambetta.
\newblock {Supervised learning with quantum-enhanced feature spaces}.
\newblock {\em Nature}, 567(7747):209--212, 2019.
\newblock
  \texttt{\href{http://dx.doi.org/10.1038/s41586-019-0980-2}{DOI:\,10.1038/s41586-019-0980-2}}.

\bibitem{Abbas2020a}
A.~Abbas, D.~Sutter, C.~Zoufal, A.~Lucchi, A.~Figalli, and S.~Woerner.
\newblock {The power of quantum neural networks}.
\newblock {\em Nature Computational Science}, 1(June), 2020.
\newblock
  \texttt{\href{http://dx.doi.org/10.1038/s43588-021-00084-1}{DOI:\,10.1038/s43588-021-00084-1}}.

\bibitem{Liu2021}
Y.~Liu, S.~Arunachalam, and K.~Temme.
\newblock {A rigorous and robust quantum speed-up in supervised machine
  learning}.
\newblock {\em Nature Physics}, 2021.
\newblock
  \texttt{\href{http://dx.doi.org/10.1038/s41567-021-01287-z}{DOI:\,10.1038/s41567-021-01287-z}}.

\bibitem{Aaronson2015}
S.~Aaronson.
\newblock {Read the fine print}.
\newblock {\em Nature Physics}, 11(4):291--293, 2015.
\newblock
  \texttt{\href{http://dx.doi.org/10.1038/nphys3272}{DOI:\,10.1038/nphys3272}}.

\bibitem{tan19}
E.~Tang.
\newblock A quantum-inspired classical algorithm for recommendation systems.
\newblock In {\em Proceedings of the 51st Annual ACM SIGACT Symposium on Theory
  of Computing}, STOC 2019, page 217–228, New York, NY, USA, 2019.
  Association for Computing Machinery.
\newblock
  \texttt{\href{http://dx.doi.org/10.1145/3313276.3316310}{DOI:\,10.1145/3313276.3316310}}.

\bibitem{tan20}
N.-H. Chia, A.~Gily\'{e}n, T.~Li, H.-H. Lin, E.~Tang, and C.~Wang.
\newblock {\em Sampling-Based Sublinear Low-Rank Matrix Arithmetic Framework
  for Dequantizing Quantum Machine Learning}, page 387–400.
\newblock Association for Computing Machinery, New York, NY, USA, 2020.
\newblock Available online: \url{https://doi.org/10.1145/3357713.3384314}.

\bibitem{li2019}
T.~Li, S.~Chakrabarti, and X.~Wu.
\newblock Sublinear quantum algorithms for training linear and kernel-based
  classifiers.
\newblock In {\em International Conference on Machine Learning}, pages
  3815--3824. PMLR, 2019.

\bibitem{Thanasilp2022}
S.~Thanasilp, S.~Wang, M.~Cerezo, and Z.~Holmes.
\newblock Exponential concentration and untrainability in quantum kernel
  methods, 2022.
\newblock
  \texttt{\href{http://dx.doi.org/10.48550/ARXIV.2208.11060}{DOI:\,10.48550/ARXIV.2208.11060}}.

\bibitem{Shalev-Shwartz2008}
S.~Shalev-Shwartz and N.~Srebro.
\newblock {SVM optimization: Inverse dependence on training set size}.
\newblock {\em Proceedings of the 25th International Conference on Machine
  Learning}, pages 928--935, 2008.

\bibitem{Arne}
A.~Thomsen.
\newblock Comparing quantum neural networks and quantum support vector
  machines.
\newblock Master's thesis, ETH Zurich, 2021-09-06.
\newblock
  \texttt{\href{http://dx.doi.org/20.500.11850/527559}{DOI:\,20.500.11850/527559}}.

\bibitem{Vapnik1992}
B.~E. Boser, I.~M. Guyon, and V.~N. Vapnik.
\newblock {A Training Algorithm for Optimal Margin Classifiers}.
\newblock In {\em Proceedings of the Fifth Annual Workshop on Computational
  Learning Theory}, COLT '92, pages 144--152, New York, NY, USA, 1992.
  Association for Computing Machinery.
\newblock
  \texttt{\href{http://dx.doi.org/10.1145/130385.130401}{DOI:\,10.1145/130385.130401}}.

\bibitem{Cortes95}
C.~Cortes and V.~Vapnik.
\newblock Support-vector networks.
\newblock In {\em Machine Learning}, pages 273--297, 1995.

\bibitem{Vapnik2000}
V.~N. Vapnik.
\newblock {\em {The Nature of Statistical Learning Theory}}.
\newblock Springer Science+Business Media, LLC, 2000.

\bibitem{sklearn}
F.~Pedregosa et~al.
\newblock {Scikit-learn: Machine Learning in Python}.
\newblock {\em Journal of Machine Learning Research}, 12(85):2825--2830, 2011.
\newblock Available online: \url{http://jmlr.org/papers/v12/pedregosa11a.html}.

\bibitem{Boyd2004}
S.~Boyd and L.~Vandenberghe.
\newblock {\em {Convex Optimization}}.
\newblock Cambridge University Press, 2004.

\bibitem{Shalev-Shwartz2011a}
S.~Shalev-Shwartz, Y.~Singer, N.~Srebro, and A.~Cotter.
\newblock {Pegasos: Primal estimated sub-gradient solver for SVM}.
\newblock {\em Mathematical Programming}, 127(1):3--30, 2011.
\newblock
  \texttt{\href{http://dx.doi.org/10.1007/s10107-010-0420-4}{DOI:\,10.1007/s10107-010-0420-4}}.

\bibitem{qiskit}
M.~D.~S. Anis et~al.
\newblock {Qiskit: An Open-source Framework for Quantum Computing}, 2021.
\newblock
  \texttt{\href{http://dx.doi.org/10.5281/zenodo.2573505}{DOI:\,10.5281/zenodo.2573505}}.

\bibitem{Rebentrost2014}
P.~Rebentrost, M.~Mohseni, and S.~Lloyd.
\newblock {Quantum support vector machine for big data classification}.
\newblock {\em Physical Review Letters}, 113(3):1--5, 2014.
\newblock
  \texttt{\href{http://dx.doi.org/10.1103/PhysRevLett.113.130503}{DOI:\,10.1103/PhysRevLett.113.130503}}.

\bibitem{KBS21}
J.~K\"{u}bler, S.~Buchholz, and B.~Sch\"{o}lkopf.
\newblock The inductive bias of quantum kernels.
\newblock In M.~Ranzato, A.~Beygelzimer, Y.~Dauphin, P.~Liang, and J.~W.
  Vaughan, editors, {\em Advances in Neural Information Processing Systems},
  volume~34, pages 12661--12673. Curran Associates, Inc., 2021.
\newblock Available online:
  \url{https://proceedings.neurips.cc/paper_files/paper/2021/file/69adc1e107f7f7d035d7baf04342e1ca-Paper.pdf}.

\bibitem{HZ22}
V.~Heyraud, Z.~Li, Z.~Denis, A.~Le~Boit\'e, and C.~Ciuti.
\newblock Noisy quantum kernel machines.
\newblock {\em Phys. Rev. A}, 106:052421, 2022.
\newblock
  \texttt{\href{http://dx.doi.org/10.1103/PhysRevA.106.052421}{DOI:\,10.1103/PhysRevA.106.052421}}.

\bibitem{Burges1998}
C.~J.~C. Burges and C.~J.~C. Burges.
\newblock {A Tutorial on Support Vector Machines for Pattern Recognition}.
\newblock {\em Data Mining and Knowledge Discovery}, 2:121--167, 1998.
\newblock Available online:
  \url{https://www.microsoft.com/en-us/research/publication/a-tutorial-on-support-vector-machines-for-pattern-recognition/}.

\bibitem{marcoBarren21}
M.~Cerezo, A.~Sone, T.~Volkoff, L.~Cincio, and P.~J. Coles.
\newblock Cost function dependent barren plateaus in shallow parametrized
  quantum circuits.
\newblock {\em Nature Communications}, 12(1):1791, 2021.
\newblock
  \texttt{\href{http://dx.doi.org/10.1038/s41467-021-21728-w}{DOI:\,10.1038/s41467-021-21728-w}}.

\bibitem{Belis2021}
{Belis, Vasilis}, {Gonz\'alez-Castillo, Samuel}, {Reissel, Christina},
  {Vallecorsa, Sofia}, {Combarro, El\'{\i}as F.}, {Dissertori, G\"unther}, and
  {Reiter, Florentin}.
\newblock Higgs analysis with quantum classifiers.
\newblock {\em EPJ Web Conf.}, 251:03070, 2021.
\newblock
  \texttt{\href{http://dx.doi.org/10.1051/epjconf/202125103070}{DOI:\,10.1051/epjconf/202125103070}}.

\bibitem{review21}
M.~Cerezo, A.~Arrasmith, R.~Babbush, S.~C. Benjamin, S.~Endo, K.~Fujii, J.~R.
  McClean, K.~Mitarai, X.~Yuan, L.~Cincio, and P.~J. Coles.
\newblock Variational quantum algorithms.
\newblock {\em Nature Reviews Physics}, 3(9):625--644, 2021.
\newblock
  \texttt{\href{http://dx.doi.org/10.1038/s42254-021-00348-9}{DOI:\,10.1038/s42254-021-00348-9}}.

\bibitem{quadprog}
R.~McGibborn et~al.
\newblock quadprog: Quadratic programming solver (python).
\newblock \url{https://github.com/quadprog/quadprog}, 2022.

\bibitem{ref:nesterov-book-04}
Y.~Nesterov.
\newblock {\em Introductory Lectures on Convex Optimization: A Basic Course}.
\newblock Applied Optimization. Springer, 2004.
\newblock
  \texttt{\href{http://dx.doi.org/10.1007/978-1-4419-8853-9}{DOI:\,10.1007/978-1-4419-8853-9}}.

\bibitem{SPSA}
J.~Spall.
\newblock {An Overview of the Simultaneous Perturbation Method for Efficient
  Optimization}.
\newblock {\em John Hopkins APL Technical Digest, 19(4)}, pages 482--492, 1998.

\bibitem{GTSW_code}
G.~Gentinetta, A.~Thomsen, D.~Sutter, and S.~Woerner.
\newblock Code for manuscript ``{T}he complexity of quantum support vector
  machines".
\newblock 2022.
\newblock
  \texttt{\href{http://dx.doi.org/https://doi.org/10.5281/zenodo.6303725}{DOI:\,https://doi.org/10.5281/zenodo.6303725}}.

\bibitem{Hubregtsen2021}
T.~Hubregtsen, D.~Wierichs, E.~Gil-Fuster, P.-J. H.~S. Derks, P.~K. Faehrmann,
  and J.~J. Meyer.
\newblock Training quantum embedding kernels on near-term quantum computers.
\newblock {\em Phys. Rev. A}, 106:042431, 2022.
\newblock
  \texttt{\href{http://dx.doi.org/10.1103/PhysRevA.106.042431}{DOI:\,10.1103/PhysRevA.106.042431}}.

\bibitem{Lataa2005}
R.~Lata{\l}a.
\newblock {Some estimates of norms of random matrices}.
\newblock {\em Proceedings of the American Mathematical Society},
  133(5):1273--1282, 2005.
\newblock
  \texttt{\href{http://dx.doi.org/10.1090/s0002-9939-04-07800-1}{DOI:\,10.1090/s0002-9939-04-07800-1}}.

\bibitem{Vershynin2009}
R.~Vershynin.
\newblock {Introduction to the non-asymptotic analysis of random matrices}.
\newblock {\em Compressed Sensing: Theory and Applications}, pages 210--268,
  2009.
\newblock
  \texttt{\href{http://dx.doi.org/10.1017/CBO9780511794308.006}{DOI:\,10.1017/CBO9780511794308.006}}.

\bibitem{Hofmann2008}
T.~Hofmann, B.~Sch{\"{o}}lkopf, and A.~J. Smola.
\newblock {Kernel methods in machine learning}.
\newblock {\em Annals of Statistics}, 36(3):1171--1220, 2008.
\newblock
  \texttt{\href{http://dx.doi.org/10.1214/009053607000000677}{DOI:\,10.1214/009053607000000677}}.

\bibitem{Daniel1973}
J.~W. Daniel.
\newblock {Stability of the solution of definite quadratic programs}.
\newblock {\em Mathematical Programming}, 5(1):41--53, 1973.
\newblock
  \texttt{\href{http://dx.doi.org/10.1007/BF01580110}{DOI:\,10.1007/BF01580110}}.

\end{thebibliography}

\appendix

\section{Rigorous analysis of the dual approach}
\label{sec:appendix_complexity_dual}
In this appendix, we provide a rigorous mathematical treatment of the steps outlined in~\Cref{sec:complexity_dual}. In the first section, we derive how the error of the kernel matrix scales when its entries are subject to finite sampling noise. We then analyze how this error affects the result of the QSVM optimization problem in a second section.


\subsection{Justification of~\Cref{eq:latala_result}} \label{app:latala}

In a first step, define the kernel or Gram matrix $K \in \R^{M \times M}$ with entries
\begin{equation*}
	(K)_{ij} = k_{ij} = k(\mathbf{x}_i, \mathbf{x}_j) \quad \forall \, \mathbf{x}_i, \mathbf{x}_j \in X,
\end{equation*}
given by the kernel function evaluated for all combinations of data pairs in the training set. Writing $\ket{\psi(\mathbf{x})} = \cE(\mathbf{x}) \ket{0}$ for some unitary $\cE(\mathbf{x})$, we can approximate the expectation value $|\braket{\psi(\mathbf{x})}{\psi(\mathbf{x}')}|^2$ by preparing and measuring the state $\cE(\mathbf{x}')^\dagger \cE(\mathbf{x}) \ket{0}$ in the computational basis a total of $R$ times.

A more formal treatment of this method introduces the Bernoulli distributed random variable $\hat{k}_{ij}$ taking on the values
\begin{equation*}
	\hat{k}_{ij} = 
	\begin{cases}
		1 \quad \text{if the zero state $\ket{0}$ is recovered in the measurement}
		\\0 \quad \text{otherwise}
	\end{cases}.
\end{equation*}
The kernel is then equal to the true mean
\begin{align}\label{eq:kernel_expectation}
	k(\mathbf{x}_i, \mathbf{x}_j) = \E \left[\hat{k}_{ij} \right],
\end{align}
because the expectation value of the Bernoulli distribution is equal to the probability of $\hat{k}_{ij} = 1$. Fundamentally, we can only approximate this expectation value by the sample mean
\begin{align*}
	k_R(\mathbf{x}_i, \mathbf{x}_j) = \frac{1}{R} \sum_{l=1}^R \hat{k}_{ij}^{(l)},
\end{align*}
where the $\hat{k}_{ij}^{(l)}$ are the i.i.d.~outcomes of $R$ measurement shots performed for every single entry of $K$. Denote by $K_R$ the matrix with entries $k_R(\mathbf{x}_i, \mathbf{x}_j)$. Closely following \cite[Section V. C.]{Hubregtsen2021}, the goal is now to bound the operator distance between the ideal $K$ and obtainable $K_R$. 

To do so, define the error
\begin{align}\label{eq:noisy_kernel_matrix_distance}
	E_R \coloneqq K_R - K,
\end{align}
and observe that
\begin{align}
	\E \left[(E_R)_{ij}\right]
	= \E \left[ k_R(\mathbf{x}_i, \mathbf{x}_j) - k(\mathbf{x}_i, \mathbf{x}_j) \right]
	= \frac{1}{R} \sum_{l=1}^R \E \left[\hat{k}_{ij}^{(l)}\right] - \E \left[k(\mathbf{x}_i, \mathbf{x}_j)\right] = 0 \,,
	\label{eq:E_R_first_moment}
\end{align}
where the linearity of the expectation value and~\Cref{eq:kernel_expectation} have been used. The second moment of $E_R$ can be calculated as
\begin{align}
	\E \left[|(E_R)_{ij}|^2\right] 
	&= \E \left[ \left(\frac{1}{R} \sum_{\ell = 1}^R \hat{k}_{ij}^{(\ell)} - k_{ij}\right)^2\right] \nonumber\\
	&= \frac{1}{R^2} \, \E \left[ \left(\sum_{l=1}^R \hat{k}_{ij}^{(l)}\right)^2 \right] \underbrace{- \frac{2 k_{ij}}{R} \sum_{l=1}^R \underbrace{\E \left[\hat{k}_{ij}^{(l)} \right]}_{= k_{ij}} + k_{ij}^2}_{= - k_{ij}^2} \nonumber\\ 
	&= \frac{1}{R^2} \left(R \, \smash{\underbrace{\E \left[\left(\hat{k}_{ij}^{(l)}\right)^2\right]}_{(\star)}} + 2 {R\choose2} \underbrace{\E \left[ \hat{k}_{ij}^{(l)} \hat{k}_{ij}^{(m)}\right]}_{(\dagger)}\right) - k_{ij}^2,
	\label{eq:E_R_second_moment_star_and_dagger}
\end{align}
where in the step from the second to the third line, the sum inside the square is expanded into two terms collecting the two cases when both samples are identical and when they are different with $l \neq m$. The expression can be further simplified knowing
\begin{align*}
	(\star) = \Var \left[\hat{k}_{ij}\right] - \left( \E \left[\hat{k}_{ij}\right]\right)^2 = k_{ij} (1 - k_{ij}) + k_{ij}^2 = k_{ij},
\end{align*}
because $\hat{k}_{ij}$ is Bernoulli distributed and
\begin{align*}
	(\dagger) = \E \left[\hat{k}_{ij}^{(l)}\right] \E \left[\hat{k}_{ij}^{(m)}\right] = k_{ij}^2,
\end{align*}
because the $\hat{k}_{ij}^{(\cdot)}$ are independently and identically distributed. Plugging in these results back into \Cref{eq:E_R_second_moment_star_and_dagger} and making use of 
\begin{align*}
	{R \choose 2} = \frac{R!}{2! (R - 2)!} = \frac{R (R -1)}{2} \leq \frac{R^2}{2}
\end{align*}
finally yields
\begin{align}\label{eq:E_R_second_moment}
	\E \left[|(E_R)_{ij}|^2\right] &= \frac{1}{R^2} \left[ R \, k_{ij}+ 2 {R \choose 2} k_{ij}^2\right] - k_{ij}^2 \leq \frac{k_{ij}}{R} = \cO\left(\frac{1}{R}\right).
\end{align}
Analogously, it can be shown that the fourth moment scales as
\begin{align}\label{eq:E_R_fourth_moment}
	\E \left[|(E_R)_{ij}|^4\right] = \cO\left(\frac{1}{R^2}\right).
\end{align}
Knowing this, the following result from random matrix theory can be harnessed:
\begin{theorem}[{Latala's theorem~\cite[Theorem 2]{Lataa2005} and~\cite[Theorem 5.37]{Vershynin2009}}]
\label{th:latalas}
	For any matrix $X\in \R^{n\times n}$ whose entries are independent random variables $x_{ij}$ of zero mean, there exists a constant $c > 0$ such that
	\begin{equation}\label{eq:latala_bound}
		\E \left[ \norm{X}_2 \right] \leq c \left[ \max_{i \in [n]} \left(\sum_{j=1}^n \E \left[x_{ij}^2\right]\right)^{\frac{1}{2}} + \max_{j \in [n]} \left(\sum_{i=1}^n \E \left[x_{ij}^2\right]\right)^\frac{1}{2} + \left(\sum_{ij=1}^n \E \left[x_{ij}^4\right]\right)^\frac{1}{4} \right],
	\end{equation}
	where $\norm{\cdot}_2 = \sigma_{\max}(\cdot)$ denotes the operator norm induced by the Euclidean vector norm.
\end{theorem}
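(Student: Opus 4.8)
The plan is to prove the bound by the \emph{moment method}, which is the classical route to Latala's estimate: control the operator norm by high trace moments, expand those moments over closed walks, and use independence together with the zero-mean hypothesis to retain only structured contributions. As a preliminary reduction I would symmetrize, replacing each $x_{ij}$ by $\epsilon_{ij}x_{ij}$ for independent Rademacher signs $\epsilon_{ij}$. This leaves $\E[x_{ij}^2]$ and $\E[x_{ij}^4]$ (hence all three terms on the right-hand side) unchanged up to universal constants, while killing all odd moments and thereby simplifying the combinatorics below.

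The core inequality is $\E[\norm{X}_2] \le \big(\E\,\tr\!\big[(XX^\top)^p\big]\big)^{1/(2p)}$ for every integer $p\ge 1$, which follows from $\norm{X}_2^{2p}\le \tr[(XX^\top)^p]$ (since the trace sums all squared singular values) and Jensen's inequality. I would then expand the trace as a sum over closed walks,
$$\E\,\tr\!\big[(XX^\top)^p\big]=\sum \E\big[x_{i_1 j_1} x_{i_2 j_1} x_{i_2 j_2}\cdots x_{i_p j_p} x_{i_1 j_p}\big],$$
where the multi-index $(i_1,j_1,i_2,\dots,j_p)$ traces a closed walk of length $2p$ alternating between the row and column vertex sets. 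Independence and centring force the expectation to vanish unless every edge $\{i,j\}$ traversed by the walk is used at least twice; this is precisely the mechanism converting raw moments into the row/column $\ell^2$ sums and the fourth-moment sum appearing in the bound.

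The main work, and what I expect to be the chief obstacle, is the sharp combinatorial classification of the surviving walks. I would group them by the multigraph they trace: walks using every edge exactly twice contribute products of second moments and, after summing over the free indices, generate the quantities $\max_i(\sum_j \E[x_{ij}^2])^{1/2}$ and $\max_j(\sum_i \E[x_{ij}^2])^{1/2}$, whereas walks containing an edge of multiplicity $\ge 4$ produce the term $(\sum_{ij}\E[x_{ij}^4])^{1/4}$. Counting the number of walks of each topological type and matching it against the associated moment product is delicate, because a crude count overestimates the high-multiplicity contributions and destroys the clean three-term form; this bookkeeping is the technical heart of Latala's original argument.

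Finally, I would assemble the bounded contributions, take the $2p$-th root, and optimize over $p$ (choosing $p$ of order $\log n$ to balance the polynomial growth of the index sums against the combinatorial factors), which yields the stated bound with a universal constant $c$. As an alternative I would keep a generic-chaining / comparison argument in reserve (in the spirit of Seginer's bound and its later refinements), which bypasses the walk combinatorics but is less self-contained; given that only second- and fourth-moment control is needed here, the moment method is the more transparent choice.
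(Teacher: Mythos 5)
The paper does not prove this statement at all: it is Latala's theorem, imported verbatim from the literature (Latala 2005; Vershynin's lecture notes, Theorem 5.37) and used as a black box in Appendix~A to obtain the bound on $\E\left[\norm{K_R-K}_2\right]$. There is therefore no in-paper argument to measure your sketch against; it can only be judged against the known proofs of this result.

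Judged that way, your plan points at the right family of techniques (symmetrization plus the moment method is indeed how Seginer-type and Latala-type bounds are approached), but it has a genuine gap that is structural, not merely the combinatorial bookkeeping you explicitly defer. The theorem assumes only finite fourth moments --- the right-hand side of \eqref{eq:latala_bound} involves nothing beyond $\E[x_{ij}^2]$ and $\E[x_{ij}^4]$ --- yet your core inequality $\E[\norm{X}_2]\le\left(\E\,\tr\left[(XX^\top)^p\right]\right)^{1/(2p)}$ with $p\sim\log n$ requires finite moments of order $2p$. For entries with a density decaying like $|t|^{-5-\epsilon}$, all fourth moments are finite while $\E\,\tr\left[(XX^\top)^p\right]=\infty$ for every $p\ge 3$ (the expansion contains the term with all indices equal, i.e.\ $\E[x_{ij}^{2p}]$), so your inequality is vacuous and the argument collapses. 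Relatedly, even when all moments exist, closed walks traversing an edge six or more times contribute $\E[x_{ij}^6]$ and higher moments, which cannot in general be controlled by the three quantities appearing in the bound. Any complete proof must avoid running the moment method on the raw entries: the standard route is to symmetrize, condition on the moduli $|x_{ij}|$ and use the contraction principle to compare with a Gaussian matrix of variance profile $|x_{ij}|^2$ (where all moments exist and the trace combinatorics is legitimate), and then de-condition using maximal and interpolation inequalities that convert quantities of the form $\E\max_i(\sum_j x_{ij}^2)^{1/2}$ into $\max_i(\sum_j\E[x_{ij}^2])^{1/2}+(\sum_{ij}\E[x_{ij}^4])^{1/4}$. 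Without a truncation or Gaussian-comparison step of this kind --- and without the sharp walk classification, since a crude count at $p\sim\log n$ additionally loses a logarithmic factor over \eqref{eq:latala_bound} --- your outline cannot be completed as stated.
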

The matrix $E_R$ satisfies all of the assumptions in~\Cref{th:latalas} and by~\Cref{eq:E_R_first_moment,eq:E_R_second_moment,eq:E_R_fourth_moment}, yields
\begin{equation}\label{eq:latala_result_appendix}
	\E \left[\norm{E_R}_2\right] = \E \left[ \norm{K_R - K}_2 \right] = \cO\left( \frac{\sqrt{M}}{\sqrt{R}} \right).
\end{equation}
With this, a bound on the accuracy of the kernel matrix is established.

\subsection{Justification of~\Cref{eq:dual_result}} \label{app:daniel}
The next question is how the solution to the SVM optimization problem is affected by perturbations to the ideal $K$. The following treatment is based on \cite[Appendix C]{Liu2021}.
We show that we need $\cO(M^{8/3}/\varepsilon^2)$ measurement shots per kernel entry to ensure that $\left|h(\hat{\mathbf{x}}) - h_R(\hat{\mathbf{x}})\right| \leq \varepsilon$.
Given that $K$ is symmetric a total of $M(M+1)/2=\cO(M^2)$ unique entries need to be calculated, implying~\Cref{eq:dual_result}.

As a reminder, the goal is to solve the dual optimization problem~\Cref{opt:svm_soft_margin_feature_map_dual}, which can be expressed in terms of matrices. Define $Q$ as the matrix with entries $(Q)_{ij} \coloneqq y_i y_j \, k(\mathbf{x}_i, \mathbf{x}_j)$. The matrix $Q$ is positive semidefinite if $K$ is, since 
\begin{equation}\label{eq:q_matrix}
	Q = \diag(\mathbf{y}) \, K \, \diag(\mathbf{y}) \, ,
\end{equation}
for the vector of training labels $\mathbf{y} \in \R^M$, and $\diag(\mathbf{y})$ can be absorbed into the $\mathbf{v}$ in the definition of positive semidefiniteness of a matrix  $\mathbf{v}^\top Q \, \mathbf{v} \geq 0 \quad \forall \mathbf{v} \in \R^M \setminus \{\mathbf{0}\}$.
The kernel matrix $K$ is always positive semidefinite \cite[Section 2.2]{Hofmann2008}, implying that $Q$ must be too. 

The dual problem in~\Cref{opt:svm_soft_margin_feature_map_dual} can be written as the quadratic program~\cite[Eq.~(D19)]{Liu2021}
\begin{align}\label{opt:svm_l2_matrix_dual}
    \left \lbrace \begin{array}{rl}
    \min\limits_{\boldsymbol{\alpha} \in \R^M} & \frac{1}{2} \boldsymbol{\alpha}^\top \left(Q + \lambda \Id \right) \boldsymbol{\alpha} - \mathbf{1}^\top \boldsymbol{\alpha} \\
    \textnormal{s.t.} & \boldsymbol{\alpha} \geq 0 \, ,
    \end{array} \right.
\end{align}
where $\boldsymbol{\alpha}$ denotes the vector with components $\alpha_i$, $\mathbf{1} \in \R^M$ the all one vector and $\Id \in \R^{M \times M}$ the identity matrix.
Because the identity commutes with all matrices, $Q$ and $\lambda \Id$ can be simultaneously diagonalized. Since $Q$ is positive semidefinite, its eigenvalues are all greater than or equal to zero. So when $Q$ and $\lambda \Id$ are expressed in a basis where both matrices are diagonal, adding the two yields a diagonal matrix with entries greater than or equal to $\lambda$, implying that 
\begin{equation}\label{eq:lambda_lower_bound}
	\mu \geq \lambda
\end{equation}
for the the smallest eigenvalue $\mu$ of the matrix $(Q + \lambda\Id)$.

With this, the following theorem can be invoked to prove the robustness of the dual QSVM optimization when the kernel function is only evaluated up to some finite precision.
\begin{theorem}[Daniel's Theorem {\normalfont \cite[Theorem 2.1]{Daniel1973} \cite[Lemma 16]{Liu2021}}]\label{th:daniels}
	Let $x_0$ be the solution to the quadratic program
\begin{align}\label{opt:quadratic_program_theorem}
    \left \lbrace \begin{array}{rl}
    \min\limits_{x} &\frac{1}{2} x^\top K x - c^\top x \\
    \textnormal{s.t.} & G x \leq g \\
    & D x = d \, ,
    \end{array} \right.
\end{align}	
	where $K$ is positive definite with smallest eigenvalue $\mu > 0$ and the dimensions of the vectors $c, g, d$ and matrices $G, D$ are such that all operations are well-defined. Let $K'$ be a symmetric matrix such that $\norm{K' - K}_2 \leq \varepsilon < \mu$.\footnote{Positive definiteness of $K'$ is then implied.} Let $x_0'$ be the solution to \Cref{opt:quadratic_program_theorem} with $K$ replaced by $K'$. Then
	\begin{equation}\label{eq:daniels_result}
		\norm{x_0' - x_0} \leq \frac{\varepsilon}{\mu - \varepsilon} \norm{x_0}.
	\end{equation}
\end{theorem}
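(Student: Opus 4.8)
The plan is to exploit the fact that both $x_0$ and $x_0'$ minimize \emph{strongly convex} quadratics over the \emph{same} convex feasible set $C = \{x : Gx \leq g,\ Dx = d\}$, and then to play the two first-order optimality conditions against each other. First I would note that since $\norm{K' - K}_2 \leq \varepsilon < \mu$, Weyl's eigenvalue perturbation inequality forces the smallest eigenvalue of $K'$ to be at least $\mu - \varepsilon > 0$; hence $K'$ is positive definite, the perturbed program is strongly convex, and $x_0'$ is its unique minimizer (this is the positive-definiteness claimed in the footnote of the statement).

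Next, writing $f(x) = \tfrac{1}{2}x^\top K x - c^\top x$ and $f'(x) = \tfrac{1}{2}x^\top K' x - c^\top x$ with gradients $Kx - c$ and $K'x - c$, the convexity of $C$ lets me characterize each minimizer by a variational inequality: $(Kx_0 - c)^\top(y - x_0) \geq 0$ and $(K'x_0' - c)^\top(y - x_0') \geq 0$ for all $y \in C$. The crucial step is to test the first inequality at $y = x_0'$ and the second at $y = x_0$ — both are legitimate because the feasible set is shared — and then add the two. Setting $\Delta \coloneqq x_0' - x_0$, the $c$-terms cancel and this collapses to $(Kx_0 - K'x_0')^\top \Delta \geq 0$. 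Inserting the telescoping identity $Kx_0 - K'x_0' = (K - K')x_0 - K'\Delta$ and using the symmetry of $K'$ rearranges this to
\begin{equation*}
    \Delta^\top K' \Delta \leq \big((K - K')x_0\big)^\top \Delta.
\end{equation*}

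To finish, I would bound the two sides separately. The left-hand side is at least $(\mu - \varepsilon)\norm{\Delta}^2$ by the strong convexity of $f'$ established in the first step, while the right-hand side is at most $\norm{(K - K')x_0}\,\norm{\Delta} \leq \varepsilon\,\norm{x_0}\,\norm{\Delta}$ by Cauchy--Schwarz together with the hypothesis $\norm{K' - K}_2 \leq \varepsilon$. Combining these and dividing by $\norm{\Delta}$ (the case $\Delta = 0$ being trivial) gives $(\mu - \varepsilon)\norm{\Delta} \leq \varepsilon \norm{x_0}$, which is exactly~\eqref{eq:daniels_result}. I expect the main conceptual obstacle to be identifying the correct pair of test points for the two variational inequalities; once they are added and the identity for $Kx_0 - K'x_0'$ is applied, the remaining estimates are routine appeals to strong convexity and the operator-norm bound. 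The only further points needing minor care are verifying that $C$ is genuinely identical for both programs — so that $x_0$ and $x_0'$ are mutually feasible test points — and dispatching the degenerate case $\Delta = 0$.
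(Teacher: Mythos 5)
Your proof is correct. Every step checks out: Weyl's perturbation inequality gives $\lambda_{\min}(K') \geq \mu - \varepsilon > 0$ (which indeed is what the footnote of the statement asserts, and also guarantees that $x_0'$ exists and is unique, since the feasible set is nonempty --- it contains $x_0$ --- closed, and convex); the two variational inequalities tested at the swapped points are legitimate precisely because the feasible set $C$ is the same for both programs; the cancellation of the $c$-terms, the identity $Kx_0 - K'x_0' = (K-K')x_0 - K'\Delta$, the use of symmetry of $K'$ to write $(K'\Delta)^\top\Delta = \Delta^\top K'\Delta$, and the final Cauchy--Schwarz and operator-norm estimates are all sound, yielding $(\mu-\varepsilon)\norm{\Delta}^2 \leq \varepsilon \norm{x_0}\norm{\Delta}$ and hence \eqref{eq:daniels_result}.

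One point of comparison worth noting: the paper does not actually prove this theorem --- it is imported by citation from Daniel~(1973) and from Lemma~16 of Liu et al., and is used as a black box in the proof of Lemma~\ref{lemma:qsvm_optimization_under_finite_sampling}. So there is no internal proof to measure your argument against; what you have written is a self-contained reconstruction, and it follows what is essentially the classical route (first-order optimality conditions for the two programs over the shared feasible set, played against each other), which is also the spirit of Daniel's original argument. A small stylistic remark: in the last step you could equally bound $\Delta^\top K'\Delta \geq \Delta^\top K \Delta - |\Delta^\top(K'-K)\Delta| \geq (\mu - \varepsilon)\norm{\Delta}^2$ directly, without invoking Weyl at all; Weyl is then needed only for the existence/uniqueness of $x_0'$, not for the quantitative bound.
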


From the ground up, classification with QSVMs can be divided into two subsequent steps, training and prediction. For training, the quantum kernel matrix $K$ defined in \Cref{eq:svm_kernel_entries} is initially evaluated on a quantum computer. Then, the QSVM is fit by running the dual program in \Cref{opt:svm_l2_matrix_dual} on a classical computer, yielding a solution vector $\boldsymbol{\alpha}^\star$. For prediction, a new datum $\hat{\mathbf{x}} \in \R^d$ is assigned a class membership $\hat{y} \in \{-1, +1\}$ via the classification function $c_\text{SVM}(\hat{\mathbf{x}})$ in \Cref{eq:class_qsvm_dual}. To this end, the quantum computer has to be employed again to determine the quantum kernel value $k(\hat{\mathbf{x}}, \mathbf{x}_i)$ for all $\mathbf{x}_i$ in the training set $X$. 

From these two steps, it is clear that there are two separate instances in the QSVM algorithm where quantum kernels are evaluated and the statistical uncertainty inherent to quantum expectation values unavoidably enters. Because even on a fault tolerant quantum computer, fundamentally, only an approximate kernel function $k_R(\hat{\mathbf{x}}, \mathbf{x}_i)$ as defined in \Cref{eq:kernel_sample_mean} is feasible. Consequently, for training, $K$ has to be replaced by its approximation $K_R$, in turn leading to a noisy $Q_R$ analogous to \Cref{eq:q_matrix}. The solution to the dual~\Cref{opt:svm_l2_matrix_dual} when $Q$ is replaced by $Q_R$ is then denoted as $\boldsymbol{\alpha}_R^\star$. In the same way, the kernel evaluations in the decision function are replaced by evaluations of $k_R(\hat{\mathbf{x}}, \mathbf{x}_i)$ in the prediction step. Putting all of this together, a statement on the robustness of the overall QSVM with respect to measurement uncertainty can be made on the level of
\begin{equation*}
	h(\hat{\mathbf{x}}) \coloneqq \sum_{i=1}^M \alpha_i^\star y_i \, k(\hat{\mathbf{x}}, \mathbf{x}_i) \qquad \textnormal{and} \qquad
	h_R(\hat{\mathbf{x}}) \coloneqq \sum_{i=1}^M \alpha_{R,i}^\star y_i \, k_R(\hat{\mathbf{x}} \, , \mathbf{x}_i),
\end{equation*}
the ideal and noisy version of the term inside the sign function in the classification function \Cref{eq:class_qsvm_dual}. The goal is to show the robustness of QSVMs with respect to finite sampling noise by bounding the difference between these terms with high probability.

The following lemma is adapted from \cite[Lemma 19]{Liu2021}.
\begin{lemma}\label{lemma:qsvm_optimization_under_finite_sampling}
	Let $\varepsilon > 0$ and suppose $R = \cO(M^{8/3}/\varepsilon^2)$ measurement shots are taken for each quantum kernel estimation circuit. Furthermore, assume the setting of noisy halfspace learning (see Assumption~\ref{ass_noisy_halfspace}). Then, with fixed probability $p > \frac{1}{2}$, where the probability is stemming from the choice of random training samples and uncertainty coming from the finite sampling statistics, for every $\mathbf{x} \in \R^d$ we have
	\begin{equation*}
		|h_R(\mathbf{x}) - h(\mathbf{x})| \leq \varepsilon \, .
	\end{equation*} 
\end{lemma}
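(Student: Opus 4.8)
The plan is to separate the two distinct sources of statistical error entering $h_R(\hat{\mathbf{x}})$: the perturbation of the optimal dual vector caused by replacing $Q$ with $Q_R$ in the training problem, and the perturbation of the kernel values $k(\hat{\mathbf{x}},\mathbf{x}_i)$ appearing in the prediction step. Inserting the hybrid term $\sum_i \alpha_i^\star y_i k_R(\hat{\mathbf{x}},\mathbf{x}_i)$ and applying the triangle inequality gives
\begin{equation*}
 |h_R(\hat{\mathbf{x}}) - h(\hat{\mathbf{x}})| \leq \Big| \sum_{i=1}^M (\alpha_{R,i}^\star - \alpha_i^\star) y_i \, k_R(\hat{\mathbf{x}},\mathbf{x}_i) \Big| + \Big| \sum_{i=1}^M \alpha_i^\star y_i \big(k_R(\hat{\mathbf{x}},\mathbf{x}_i) - k(\hat{\mathbf{x}},\mathbf{x}_i)\big) \Big| \, ,
\end{equation*}
and I would bound both summands by Cauchy--Schwarz.

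For the first (training) term, since kernel values are probabilities we have $|k_R| \leq 1$, so the vector $(y_i k_R(\hat{\mathbf{x}},\mathbf{x}_i))_i$ has Euclidean norm at most $\sqrt{M}$ and the term is at most $\sqrt{M}\,\norm{\boldsymbol{\alpha}_R^\star - \boldsymbol{\alpha}^\star}$. The displacement $\norm{\boldsymbol{\alpha}_R^\star - \boldsymbol{\alpha}^\star}$ is exactly what Daniel's theorem (\Cref{th:daniels}) controls: the quadratic form of program~\eqref{opt:svm_l2_matrix_dual} is $Q + \lambda\Id$, positive definite with smallest eigenvalue $\mu \geq \lambda$ by~\eqref{eq:lambda_lower_bound}, the inequality constraint $\boldsymbol{\alpha}\geq 0$ is unchanged under the perturbation, and the perturbation of the quadratic form is $Q_R - Q$. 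Since $\diag(\mathbf{y})$ in~\eqref{eq:q_matrix} is orthogonal, $\norm{Q_R - Q}_2 = \norm{K_R - K}_2$, which by Latala's bound~\eqref{eq:latala_result_appendix} is $\cO(\sqrt{M/R})$. Hence, once $R$ is large enough that this is below $\lambda$, \Cref{th:daniels} yields $\norm{\boldsymbol{\alpha}_R^\star - \boldsymbol{\alpha}^\star} \leq \frac{\norm{K_R - K}_2}{\lambda - \norm{K_R - K}_2}\norm{\boldsymbol{\alpha}^\star}$, so the first term is $\cO(M/\sqrt{R})\,\norm{\boldsymbol{\alpha}^\star}$.

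For the second (prediction) term, Cauchy--Schwarz bounds it by $\norm{\boldsymbol{\alpha}^\star}$ times the norm of the vector with entries $k_R(\hat{\mathbf{x}},\mathbf{x}_i) - k(\hat{\mathbf{x}},\mathbf{x}_i)$; each such entry is a Bernoulli sample-mean deviation with second moment $\cO(1/R)$ by the computation in~\eqref{eq:E_R_second_moment}, so the vector has expected squared norm $\cO(M/R)$ and the term is $\cO(\sqrt{M/R})\,\norm{\boldsymbol{\alpha}^\star}$, which is subdominant to the first. The uniformity over $\hat{\mathbf{x}}$ claimed in the statement follows since these bounds use only the uniform magnitude bound $|k_R|\leq 1$ and the $\hat{\mathbf{x}}$-independent second-moment estimate.

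It remains to control $\norm{\boldsymbol{\alpha}^\star}$, and this is where Assumption~\ref{ass_noisy_halfspace} enters. From optimality, $\tfrac{1}{2}\boldsymbol{\alpha}^{\star\top}(Q+\lambda\Id)\boldsymbol{\alpha}^\star \leq \mathbf{1}^\top\boldsymbol{\alpha}^\star = \norm{\boldsymbol{\alpha}^\star}_1$ (using $\boldsymbol{\alpha}^\star \geq 0$), which together with $\mu \geq \lambda$ gives $\norm{\boldsymbol{\alpha}^\star} \leq \sqrt{2\norm{\boldsymbol{\alpha}^\star}_1/\lambda}$; I expect the noisy halfspace margin guarantee to bound $\norm{\boldsymbol{\alpha}^\star}_1 = \cO(M^{2/3})$, yielding $\norm{\boldsymbol{\alpha}^\star} = \cO(M^{1/3})$ and hence a dominant contribution $\cO(M^{4/3}/\sqrt{R})$. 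Setting this $\leq \varepsilon$ reproduces $R = \cO(M^{8/3}/\varepsilon^2)$ shots per circuit. Finally, to promote these expectation estimates into the stated high-probability guarantee with $p > \tfrac{1}{2}$, I would apply Markov's inequality to $\norm{K_R - K}_2$ and to the prediction-error norm, absorb the resulting constants into the $\cO$, and combine by a union bound with the constant-probability event on which the halfspace bound on $\norm{\boldsymbol{\alpha}^\star}$ holds. The main obstacle I anticipate is pinning down the exact polynomial exponent that Assumption~\ref{ass_noisy_halfspace} furnishes for $\norm{\boldsymbol{\alpha}^\star}$: it is precisely this input---together with the sharper operator-norm bound of Latala's theorem in place of a crude Frobenius estimate---that lowers the exponent from the earlier $\cO(M^6)$ of~\cite{Liu2021} to $\cO(M^{14/3})$.
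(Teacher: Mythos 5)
Your proposal is correct and takes essentially the same route as the paper's proof: the same ingredients (Latala's bound giving $\norm{K_R-K}_2 = \cO(\sqrt{M/R})$, Daniel's theorem applied to $Q+\lambda\Id$ with smallest eigenvalue $\mu \geq \lambda$, Cauchy--Schwarz with $k_R \in [0,1]$, and Markov/Chebyshev to turn expectation bounds into a constant-probability guarantee) yield the same dominant error term $\cO(M^{4/3}/\sqrt{R})$ and hence $R = \cO(M^{8/3}/\varepsilon^2)$, with your two-term hybrid decomposition being merely a regrouping of the paper's three-term expansion in $\delta_i$ and $\nu_i$. The one input you flag as uncertain is settled in the paper by directly citing \cite[Remark~2]{Liu2021} for $\E\left[\norm{\boldsymbol{\alpha}^\star}\right] = \cO(M^{1/3})$ under Assumption~\ref{ass_noisy_halfspace}, so your strong-convexity derivation of the $\ell_2$ bound from an assumed $\ell_1$ bound, while a nice self-contained alternative, is not needed.
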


\begin{proof}
	Start by considering the operator distance between the noisy matrix $Q_R$ resulting from quantum kernel evaluations with a finite number of measurement shots $R$ per entry and the ideal matrix $Q$. Making use of the connection between $Q$ and $K$ in \Cref{eq:q_matrix}, we find
	\begin{align*}
		\norm{Q_R - Q}_2 
		&= \sup_{\mathbf{v} \neq 0} \frac{\norm{(Q_R - Q) \mathbf{v}}}{\norm{\mathbf{v}}} \\
		&= \sup_{\mathbf{v} \neq 0} \frac{\norm{\diag(\mathbf{y}) \, (K_R - K) \, \diag(\mathbf{y}) \mathbf{v}}}{\norm{\mathbf{v}}} \\
		&= \sup_{\mathbf{v} \neq 0} \frac{\sqrt{\sum_{i=1}^{M} \left|\sum_{j=1}^{M} y_i y_j (k_R(\mathbf{x}_i, \mathbf{x}_j) - k(\mathbf{x}_i, \mathbf{x}_j)) v_j\right|^2}}{\norm{\mathbf{v}}} \\
		&\leq \sup_{\mathbf{v} \neq 0} \frac{\sqrt{\sum_{i=1}^{M} \left|\sum_{j=1}^{M} (k_R(\mathbf{x}_i, \mathbf{x}_j) - k(\mathbf{x}_i, \mathbf{x}_j)) v_j\right|^2}}{\norm{\mathbf{v}}} \\
		&= \norm{K_R - K}_2\,,
	\end{align*}
	where in the inequality step uses $|y_i| = 1$ and the triangle inequality. Then, \Cref{eq:latala_result} implies
	\begin{equation}\label{eq:q_distance_expectation}
		\E \left[\norm{Q_R - Q}_2\right] = \cO \left(\frac{\sqrt{M}}{\sqrt{R}}\right).
	\end{equation}
	
	For the setting of noisy halfspace learning as defined in \cite[Lemma 14, Definition 15]{Liu2021}, it can be shown \cite[Remark 2]{Liu2021} that 
	\begin{equation}\label{eq:alpha_0_expectation}
		\E \left[ \norm{\boldsymbol{\alpha}^\star} \right] = \cO \left(M^{1/3}\right),
	\end{equation}
	where $\boldsymbol{\alpha}^\star$ is the solution to \Cref{opt:svm_l2_matrix_dual}. 
	With Markov's inequality
	\begin{equation*}
		\Pr\big[X \leq a \, \E[X]\big] \geq 1 - \frac{1}{a} \overset{!}{=} p'\,,
	\end{equation*}
	the statements \Cref{eq:alpha_0_expectation,eq:q_distance_expectation} in expectation can be transformed into statements in probability. To this end, choose $a$ such that $p' > 1 - \frac{1}{a} > \frac{1}{\sqrt[3]{2}}$ is fulfilled, yielding
	\begin{equation}\label{eq:prob_bound_eta}
		\eta \coloneqq \norm{Q_R - Q}_2 = \cO \left(\frac{\sqrt{M}}{\sqrt{R}}\right) \qquad \textnormal{and} \qquad  \norm{\boldsymbol{\alpha}^\star} = \cO\left(M^{1/3}\right) \, ,
	\end{equation}
	which hold with a probability greater than or equal to $p'$ respectively.
	
	Now, \Cref{th:daniels} can be leveraged as $Q$ is positive-definite because there is a lower bound on its smallest eigenvalue $\mu$ in \Cref{eq:lambda_lower_bound}. Then, for $\delta_i \coloneqq \alpha^\star_{R,i} - \alpha_i^\star$ and with probability at least $p'^2$, \Cref{eq:daniels_result} yields
	\begin{align*}
			\norm{\delta}
			&= \norm{\boldsymbol{\alpha}_R^\star - \boldsymbol{\alpha}^\star}\\ 
			&\leq \frac{\eta}{\mu - \eta} \norm{\boldsymbol{\alpha}^\star}\\
			& = \left(\frac{\eta}{\mu} + \cO\left(\frac{\eta^2}{\mu^2}\right)\right) \norm{\boldsymbol{\alpha}^\star}\\
			&= \cO\left(\frac{\sqrt{M}}{\sqrt{R}}\right) \cO\left(M^{1/3}\right)\\ 
			&= \cO\left(\frac{M^{5/6}}{\sqrt{R}}\right)\, ,
	\end{align*}
	where $R$ has to be chosen such that $\eta < \mu$, which is always feasible due to the lower bound on $\mu$.

	In a next step, denote by $\nu_i \coloneqq k_R(\hat{\mathbf{x}}, \mathbf{x}_i) - k(\hat{\mathbf{x}}, \mathbf{x}_i)$ for $i = 1, 2, ..., M$ the difference between the obtainable and ideal kernel function evaluated for pairs of the datum to be classified $\hat{\mathbf{x}}$ and the elements in the training set $\mathbf{x}_i$. As $\nu_i$ is a special case of the components in $E_R$ in \Cref{eq:noisy_kernel_matrix_distance}, $\E[\nu_i] = 0$ is implied by \Cref{eq:E_R_first_moment} and $\E[\nu_i^2] = \cO\left(\frac{1}{R}\right)$ by \Cref{eq:E_R_second_moment}. By making use of Chebyshev's inequality
	\begin{equation*}
		\Pr\left(|X - \E[X]| \leq a \sqrt{\Var[X]}\right) \geq 1 - \frac{1}{a^2} \overset{!}{=} p'\,,
	\end{equation*}
	and $\Var[X] = \E[X^2] - \E[X]^2$, the above expectation values can be converted into the statement that
	\begin{equation}\label{eq:prob_bound_nu}
		\norm{\nu} = \cO\left(\frac{\sqrt{M}}{\sqrt{R}}\right)
	\end{equation}
	holds with probability at least $p' > 1/\sqrt[3]{2}$, where the $\sqrt{M}$ scaling comes from the Euclidean norm and the fact that the vector $\nu$ has $M$ components $\nu_i$.
	
	Returning to the ideal and noisy decision functions, the bounds \Cref{eq:prob_bound_eta,eq:prob_bound_nu} can be combined to give
	\begin{align*}
		\left| h_R(\hat{\mathbf{x}}) - h(\hat{\mathbf{x}}) \right|
		&= \left| \sum_{i=1}^M \alpha^\star_{R,i} y_i \, k_R(\hat{\mathbf{x}}, \mathbf{x}_i) - \sum_{i=1}^M \alpha_i^\star y_i \, k(\hat{\mathbf{x}}, \mathbf{x}_i) \right| \\
		&= \left| \sum_{i=1}^M (\alpha_i^\star + \delta_i) y_i (k(\hat{\mathbf{x}}, \mathbf{x}_i) + \nu_i) - \sum_{i=1}^M \alpha_i^\star y_i \, k(\hat{\mathbf{x}}, \mathbf{x}_i) \right| \\
		&= \left| \sum_{i=1}^M \alpha_i^\star \nu_i + \delta_i k(\hat{\mathbf{x}}, \mathbf{x}_i) + \delta_i\nu_i \right| \\
		&\leq \left|\sum_{i=1}^{M}  \alpha_i^\star \nu_i \right| + \left|\sum_{i=1}^{M}  \delta_i k(\hat{\mathbf{x}}, \mathbf{x}_i) \right| + \left|\sum_{i=1}^{M}  \delta_i\nu_i \right| \\
		&\leq \norm{\alpha^\star}  \norm{\nu} + \sqrt{M} \norm{\delta} + \norm{\delta} \norm{\nu}\\
		&= \cO \left(\frac{M^{4/3}}{\sqrt{R}}\right),
	\end{align*}
	which is true with probability at least $p \coloneqq (p')^3 > \frac{1}{2}$ for any $\hat{\mathbf{x}} \in \R^d$. The penultimate step uses Cauchy-Schwarz and property $k(\hat{\mathbf{x}}, \mathbf{x}_i) \in [0, 1]$.
Hence, the desired accuracy $| h_R(\hat{\mathbf{x}}) - h(\hat{\mathbf{x}})| \leq \varepsilon$ for the QSVM decision function is obtained with probability greater than or equal to $p$ for
	\begin{equation}\label{eq:r_my_result}
		R = \cO\left(\frac{M^{8/3}}{\varepsilon^2}\right)
	\end{equation}
	measurement shots per quantum kernel evaluation.
\end{proof}
As remarked at the beginning of this appendix, this result justifies~\Cref{eq:dual_result} in the main text.

\begin{remark}
We note that this analysis is an improvement of the results derived in~\cite[Lemma 19]{Liu2021}, where they showed that $R_\text{tot} = \cO(M^6/\varepsilon^2)$ shots are necessary under the same assumptions.\footnote{ Note that the result~\cite[Section~V.C.]{Hubregtsen2021} claiming that $R_\text{tot} = \cO(M^3/\varepsilon^2)$ suffice is not comparable, since there, the analysis is restricted to the kernel matrix and $\varepsilon$ is defined differently as a result.}
\end{remark}

\section{Extended analysis of the primal approach}
\label{sec:appendix_eps_del}
In this appendix, we provide justifications for the steps discussed in~\Cref{sec:complexity_pegasos}. We first provide experiments supporting Assumption~\ref{ass_Pegasos_noise}. Then, we present a rigorous mathematical derivation of the connection between $\eps$ and $\delta$ in~\Cref{eq:epsilon-delta} in the main text.

\subsection{Empirical justification of Assumption~\ref{ass_Pegasos_noise}} \label{sec_justification_ass}
\label{sec:experiments_pegasos}
Assumption~\ref{ass_Pegasos_noise} claims that \Pegasos\, converges independently of the sampling noise when the number of measurement shots is large enough. To provide numerical evidence for this statement, we compare the evolution of the algorithm for different numbers of shots per kernel evaluation $R$. 

For the experiments presented in~\Cref{fig:pegasos_training_accuracy}, every optimization step of the \Pegasos~algorithm is performed with a noisy kernel value $k_R$ (see~\Cref{eq:kernel_sample_mean}) in line~\ref{pegasos:condition} of the algorithm. The accuracy plotted is then computed according to~\Cref{eq:class_qsvm_dual} evaluated for exact kernel evaluations, which facilitates a direct comparison of the different optimization runs.
We observe that for $R$ sufficiently large, the algorithm converges to a solution which is close to the one obtained by performing an infinite number of quantum circuit evaluations. Furthermore, visually the rate of convergence is not significantly slower for smaller but sufficiently large $R$. Taken together, these observations suggests that Assumption~\ref{ass_Pegasos_noise} is fulfilled.
\begin{figure}[!htb]
\centering
\begin{subfigure}{.5\textwidth}
  \centering
  \includegraphics[width=\linewidth]{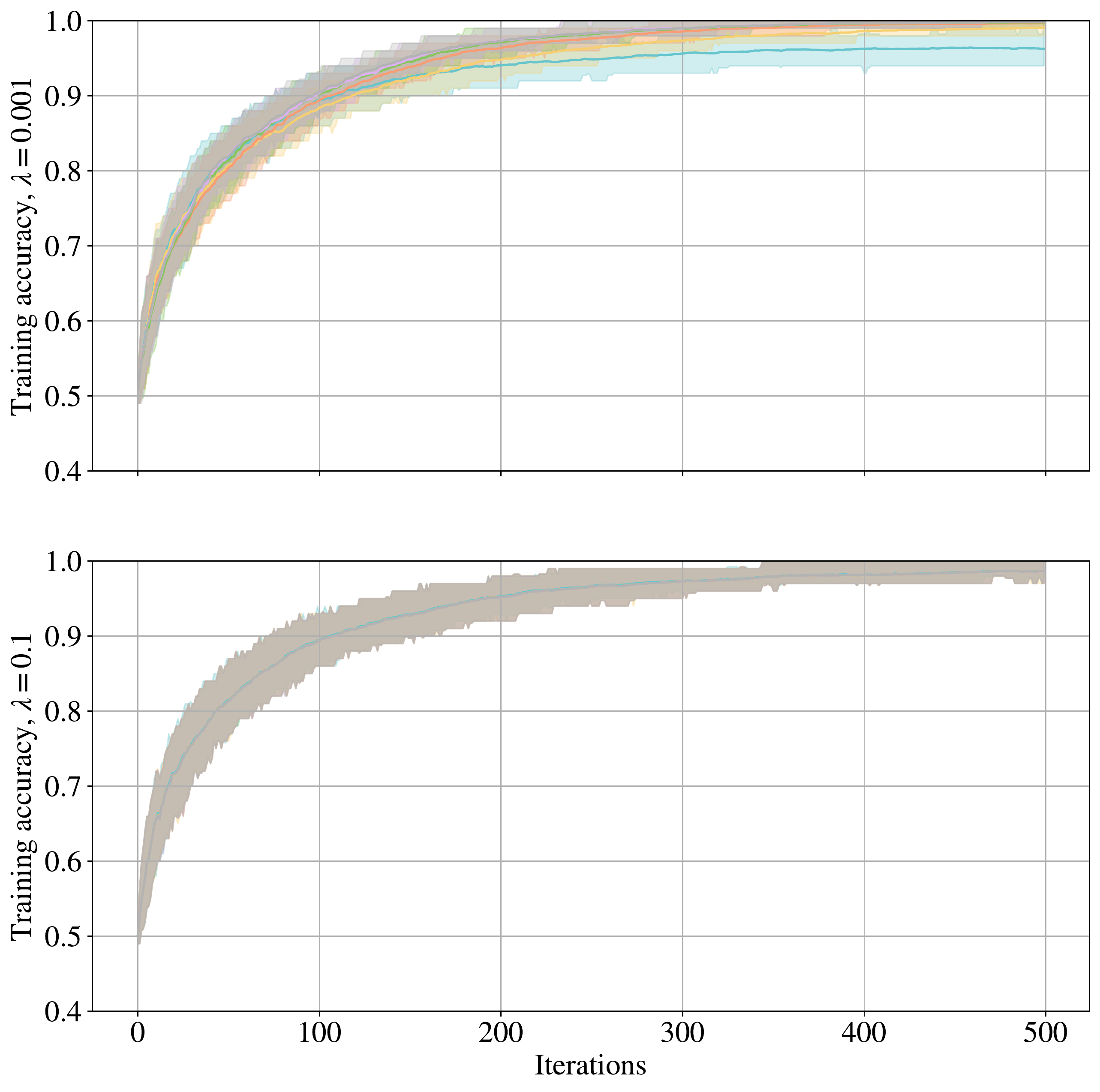}
  \caption{Linearly separable data}

\end{subfigure}%
\begin{subfigure}{.5\textwidth}
  \centering
  \includegraphics[width=\linewidth]{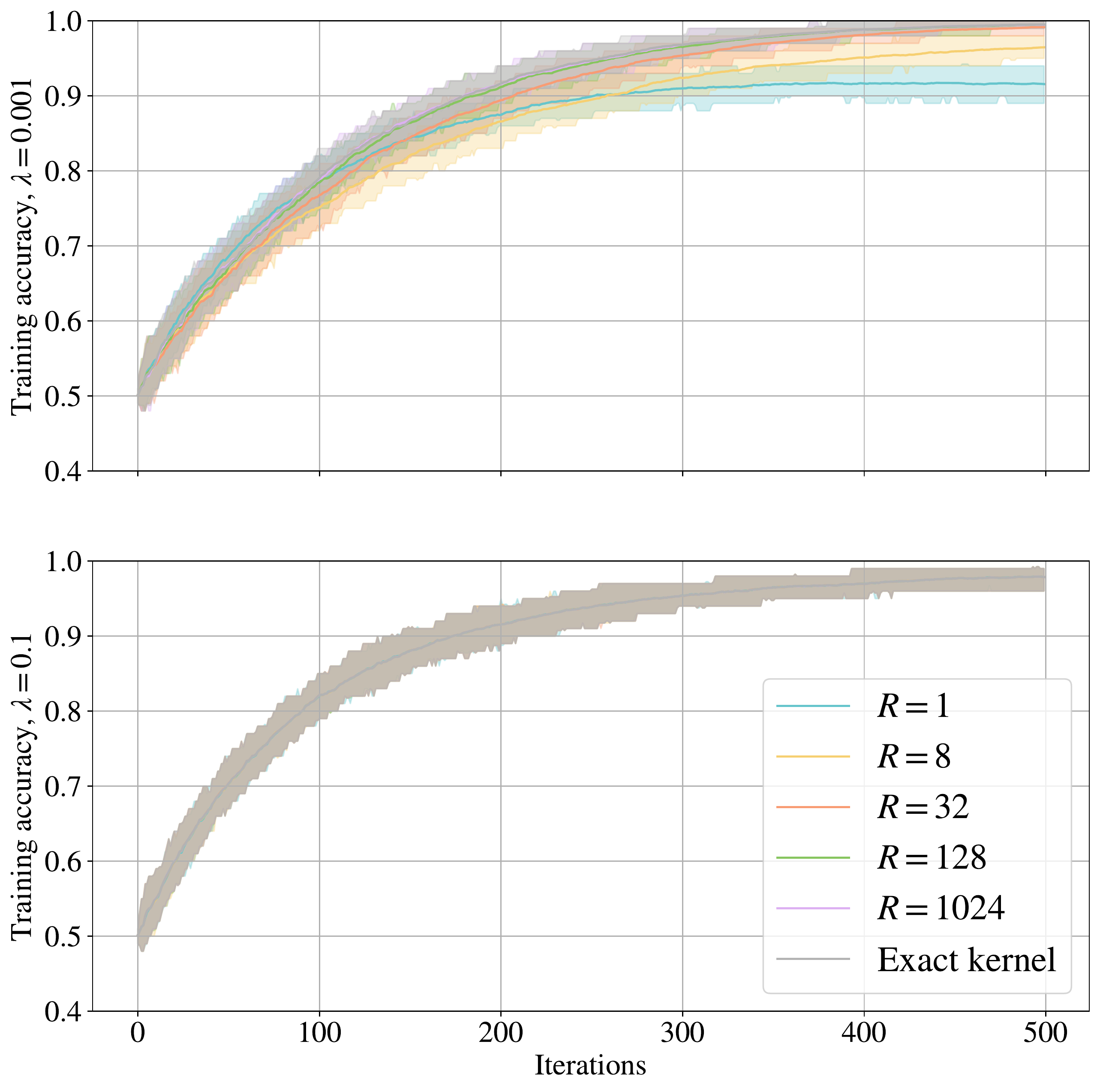}
  \caption{Overlapping data}

\end{subfigure}
\caption{\textbf{Training accuracy of \Pegasos:} The QSVM is trained with \Pegasos\ for different numbers of shots per kernel evaluation $R$. The class labels are then predicted for the whole training set using the exact kernels and compared to the ground truth. The resulting accuracy is plotted as a function of the iteration steps $t$. In the top plots, the regularization constant is $\lambda=1/1000$, while the lower plots are regularized with $\lambda=1/10$. Every experiment has been performed $n=100$ times with different random seeds, such that the lines shown are the means over the runs and the shaded areas mark the interval between the 15.9 and 84.1 percentile. }
\label{fig:pegasos_training_accuracy}
\end{figure}

In~\Cref{fig:pegasos_a_error}, we additionally present experiments probing how the coefficients behave under noisy kernel evaluations. The integer coefficients from \Pegasos\, are scaled by a factor of $\lambda/t$, similar to the sum in line~\ref{pegasos:condition}. Unsurprisingly, the error is quite large in the first few iterations as $t$ in the denominator is small. However, after around $t=40$ iterations the error stabilizes and in some cases even decreases. This provides further evidence that the convergence of \Pegasos\, is not strongly affected by finite sampling noise. 
\begin{figure}[!htb]
\centering
\begin{subfigure}{.5\textwidth}
  \centering
  \includegraphics[width=\linewidth]{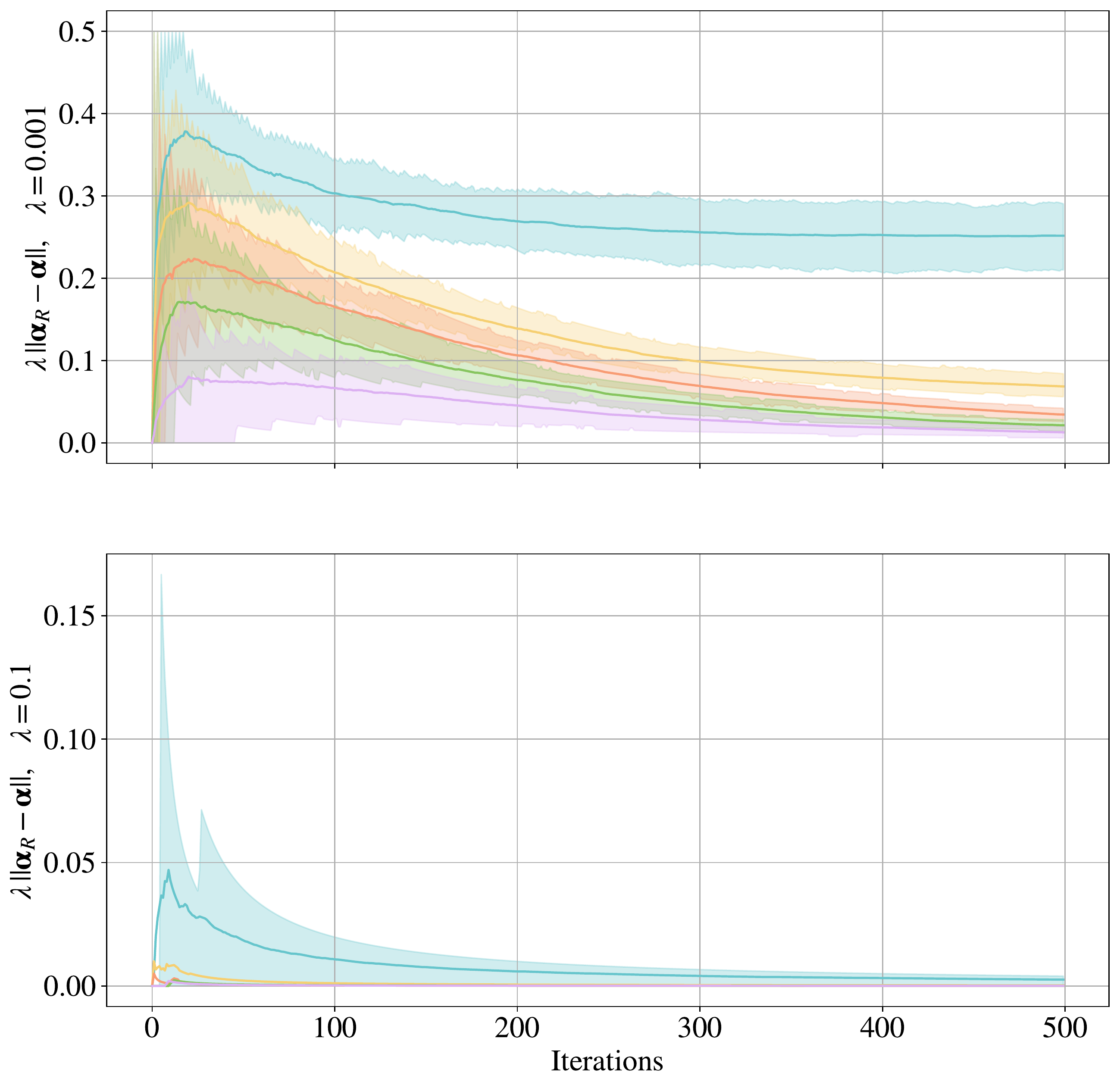}
  \caption{Linearly separable data}

\end{subfigure}%
\begin{subfigure}{.5\textwidth}
  \centering
  \includegraphics[width=\linewidth]{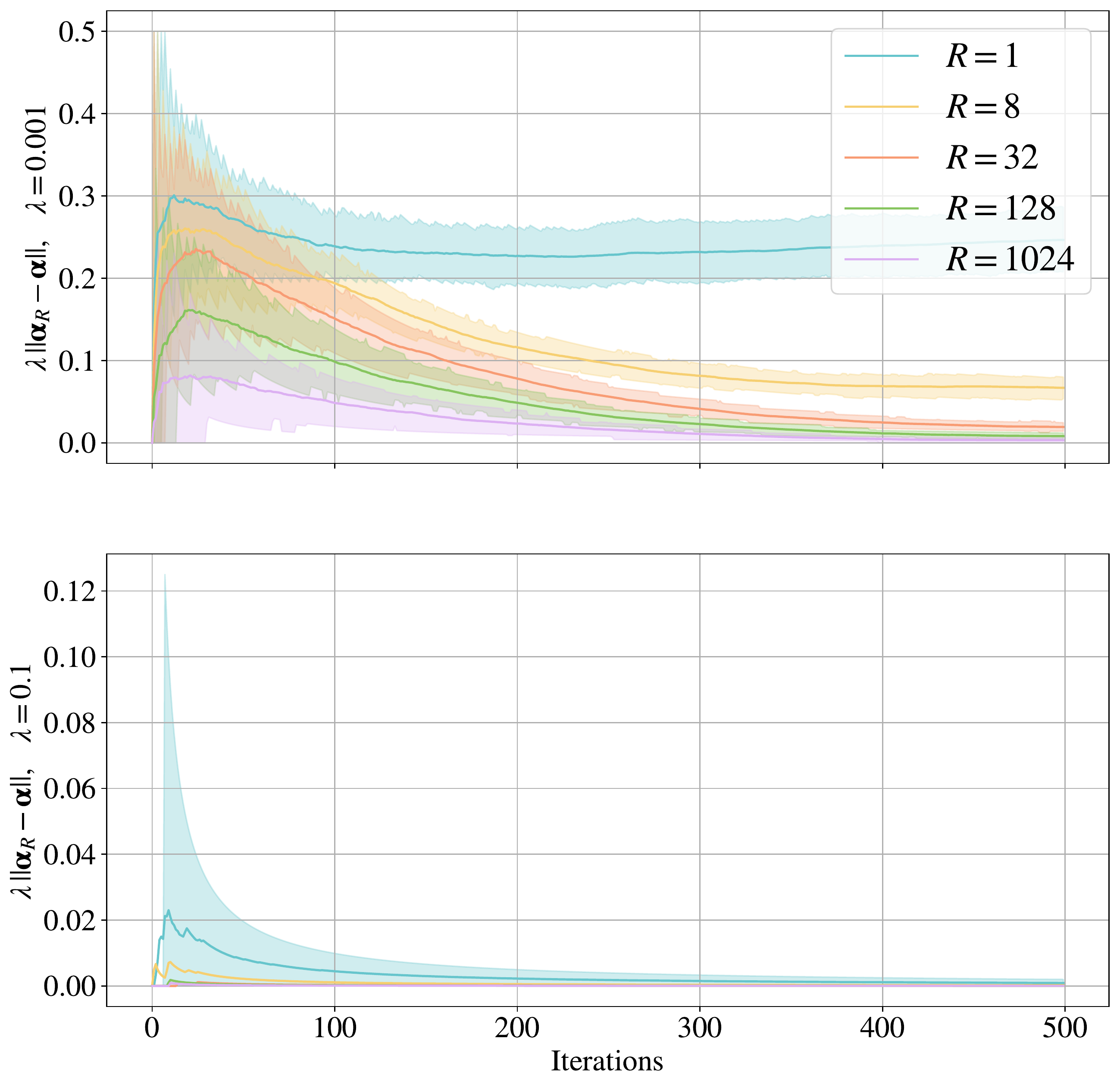}
  \caption{Overlapping data}

\end{subfigure}
\caption{\textbf{Error evolution of \Pegasos:} We plot the error in the coefficients $\boldsymbol{\alpha}$ for different numbers of shots per kernel evaluation $R$ as functions of the iteration steps $t$. The top plots were generated with a regularization constant $\lambda=1/1000$ while the lower plots are regularized with $\lambda=1/10$. Every experiment has been performed $n=100$ times with different random seeds such that the lines shown are the means over these runs and the shaded areas mark the interval between the 15.9 and 84.1 percentile. }
\label{fig:pegasos_a_error}
\end{figure}

\subsection{Justification of~\Cref{eq:epsilon-delta}}
\label{sec:appendix_pegasos_delta_eps}
In the following we denote by $f(\mathbf{w})$ the objective function of the primal optimization problem~\cref{opt:svm_soft_margin_primal_hinge_loss} where $\mathbf{w^\star}$ is the optimizer.
\begin{lemma}\label{lemma:strongly_convex}
The objective function $f(\mathbf{w})$ is $\lambda$-strongly convex in $\mathbf{w}$.
\end{lemma}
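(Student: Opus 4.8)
The plan is to reduce the claim to the standard characterization of strong convexity: a function $f$ is $\lambda$-strongly convex exactly when $\mathbf{w} \mapsto f(\mathbf{w}) - \frac{\lambda}{2}\norm{\mathbf{w}}^2$ is (ordinarily) convex. First I would split the objective of~\eqref{opt:svm_soft_margin_primal_hinge_loss} as $f(\mathbf{w}) = \frac{\lambda}{2}\norm{\mathbf{w}}^2 + g(\mathbf{w})$, where $g(\mathbf{w}) = \sum_{i=1}^M \max\{0, 1 - y_i(\mathbf{w}^\top \varphi(\mathbf{x}_i))\}$ collects the hinge-loss terms. With this splitting the subtracted quadratic cancels, so proving the lemma reduces to showing that $g$ is convex: then $f - \frac{\lambda}{2}\norm{\cdot}^2 = g$ is convex, which is precisely the strong-convexity criterion with modulus $\lambda$.

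To establish convexity of $g$, I would argue term by term. Each summand $\mathbf{w} \mapsto \max\{0, 1 - y_i(\mathbf{w}^\top \varphi(\mathbf{x}_i))\}$ is the pointwise maximum of two affine functions of $\mathbf{w}$, namely the constant $0$ and the affine map $\mathbf{w} \mapsto 1 - y_i(\mathbf{w}^\top \varphi(\mathbf{x}_i))$. Since a pointwise maximum of convex (in particular affine) functions is convex, each hinge term is convex, and a finite sum of convex functions is convex; hence $g$ is convex. Combined with the previous paragraph, this yields the $\lambda$-strong convexity of $f$.

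Alternatively, and perhaps more transparently, I could verify the defining inequality directly: for all $\mathbf{w}, \mathbf{w}'$ and $t \in [0,1]$,
\begin{equation*}
f(t\mathbf{w} + (1-t)\mathbf{w}') \leq t f(\mathbf{w}) + (1-t) f(\mathbf{w}') - \frac{\lambda}{2}\, t(1-t)\,\norm{\mathbf{w} - \mathbf{w}'}^2 .
\end{equation*}
Here the quadratic part contributes exactly the deficit $\frac{\lambda}{2} t(1-t)\norm{\mathbf{w}-\mathbf{w}'}^2$ via the algebraic identity $\norm{t\mathbf{w}+(1-t)\mathbf{w}'}^2 = t\norm{\mathbf{w}}^2 + (1-t)\norm{\mathbf{w}'}^2 - t(1-t)\norm{\mathbf{w}-\mathbf{w}'}^2$, while the convex hinge-loss part $g$ satisfies Jensen's inequality with no deficit; adding the two recovers the display.

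There is essentially no hard step here: the only point requiring care is that the hinge loss is non-differentiable, so one should avoid a Hessian-based argument for $g$ and instead rely on its max-of-affine structure (or on Jensen's inequality directly). The smooth, uniformly curved part $\frac{\lambda}{2}\norm{\mathbf{w}}^2$, whose Hessian equals $\lambda$ times the identity, is what supplies the strong-convexity modulus $\lambda$.
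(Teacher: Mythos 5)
Your proof is correct and follows essentially the same route as the paper's: both split $f(\mathbf{w}) = \frac{\lambda}{2}\norm{\mathbf{w}}^2 + g(\mathbf{w})$ and invoke the characterization that $f$ is $\lambda$-strongly convex if and only if $f - \frac{\lambda}{2}\norm{\cdot}^2$ is convex, reducing everything to convexity of the hinge-loss sum $g$. You merely fill in a detail the paper asserts without justification (each hinge term is convex as a pointwise maximum of affine functions) and add an optional direct verification via Jensen's inequality, both of which are fine.
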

\begin{proof}
The hinge loss $\cL_\text{hinge}(y_i, \mathbf{w}^\top \mathbf{x} + b) = \max \left[0, 1 - y_i (\mathbf{w}^\top \mathbf{x} + b)\right]$ is convex and thus the sum 
\begin{equation*}
    g(\mathbf{w}) = \sum_{i=1}^M \cL_\text{hinge}(y_i, \mathbf{w}^\top \mathbf{x}_i + b)
\end{equation*}
is convex as well. Now $f$ is $\lambda$-strongly convex if and only if $f(\mathbf{w}) -  \frac{\lambda}{2}\norm{\mathbf{w}}$ is convex. But $f(\mathbf{w}) -  \frac{\lambda}{2}\norm{\mathbf{w}} = g(\mathbf{w})$ is convex and, thus, $f$ is indeed strongly convex.
\end{proof}
\begin{corollary} \label{cor_stronglyConvex}
Let $\varepsilon > 0$ and $\mathbf{w} \in \R^s$ be such that $f(\mathbf{w}) \leq f(\mathbf{w^\star}) + \varepsilon$. Then, $\norm{\mathbf{w} - \mathbf{w^\star}} \leq \sqrt{\frac{2\varepsilon}{\lambda}}$.
\end{corollary}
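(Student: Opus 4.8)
The plan is to exploit the quadratic lower bound that is the hallmark of $\lambda$-strong convexity, which Lemma~\ref{lemma:strongly_convex} has just established for $f$. Recall that a convex function $f$ is $\lambda$-strongly convex precisely when, for every $\mathbf{w}$ and every subgradient $\mathbf{g} \in \partial f(\mathbf{w^\star})$ at the minimizer, one has
\begin{equation*}
    f(\mathbf{w}) \geq f(\mathbf{w^\star}) + \mathbf{g}^\top(\mathbf{w} - \mathbf{w^\star}) + \frac{\lambda}{2}\norm{\mathbf{w} - \mathbf{w^\star}}^2.
\end{equation*}
First I would invoke the first-order optimality condition for the unconstrained minimizer $\mathbf{w^\star}$ of the convex objective $f$: since $\mathbf{w^\star}$ minimizes $f$ over all of $\R^s$, the zero vector lies in the subdifferential $\partial f(\mathbf{w^\star})$, so we may take $\mathbf{g} = \mathbf{0}$ and the linear term vanishes. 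This yields the clean bound
\begin{equation*}
    f(\mathbf{w}) - f(\mathbf{w^\star}) \geq \frac{\lambda}{2}\norm{\mathbf{w} - \mathbf{w^\star}}^2.
\end{equation*}

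Next I would simply combine this with the hypothesis $f(\mathbf{w}) \leq f(\mathbf{w^\star}) + \varepsilon$, which gives $\frac{\lambda}{2}\norm{\mathbf{w} - \mathbf{w^\star}}^2 \leq \varepsilon$; rearranging and taking square roots produces the claimed inequality $\norm{\mathbf{w} - \mathbf{w^\star}} \leq \sqrt{2\varepsilon/\lambda}$.

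The only point requiring care — and the main (mild) obstacle — is that $f$ is not differentiable, because the hinge loss introduces kinks, so I cannot appeal to an ordinary gradient and must argue at the level of subgradients, in particular justifying that $\mathbf{0} \in \partial f(\mathbf{w^\star})$. Since $f$ is convex and the minimization is unconstrained, this is exactly Fermat's optimality condition for convex functions, so no boundary or constraint-qualification subtleties arise, and the derivation of the quadratic growth bound from strong convexity goes through verbatim despite the non-smoothness.
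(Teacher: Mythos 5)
Your proof is correct and follows essentially the same route as the paper's: both apply the $\lambda$-strong convexity lower bound at the minimizer $\mathbf{w^\star}$, kill the linear term via first-order optimality, and rearrange. In fact your version is slightly more careful than the paper's, which writes $\nabla f(\mathbf{w^\star})$ and asserts ``the gradient vanishes at the minimum'' even though the hinge loss makes $f$ non-differentiable; your appeal to $\mathbf{0} \in \partial f(\mathbf{w^\star})$ and the subgradient form of strong convexity fixes that minor imprecision.
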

\begin{proof}
By definition of strong convexity and using Lemma~\ref{lemma:strongly_convex} we have
\begin{equation*}
    f(\mathbf{w}) \geq f(\mathbf{w^\star}) + \nabla f(\mathbf{w^\star})^T(\mathbf{w} - \mathbf{w^\star}) + \frac{\lambda}{2}\norm{\mathbf{w} - \mathbf{w^\star}}^2 \, .
\end{equation*}
Noting that the gradient vanishes at the minimum we find
\begin{equation*}
    \norm{\mathbf{w} - \mathbf{w^\star}}^2 \leq \frac{2}{\lambda}\big(f(\mathbf{w}) - f(\mathbf{w^\star}) \big) \leq \frac{2\varepsilon}{\lambda} \, .
\end{equation*}
\end{proof}

We are finally ready to prove the formal statement of~\Cref{eq:epsilon-delta}.
\begin{lemma}
For $\varepsilon = |h^\star(x) - h^N(x)|$ the distance between in the decision function $h^N(x) = \sum_{j=1}^{M}\alpha^N_jy_jk(x,x_j)$ after $N$ iterations of the \Pegasos~algorithm and the optimal decision function $h^\star$, we find
\begin{equation*}
    \varepsilon \leq \sqrt{\frac{2\delta}{\lambda}}\,,
\end{equation*}
where $\delta = |f(\mathbf{w^\star}) - f(\mathbf{w}^N)|$ is the deviation from the optimum.
\end{lemma}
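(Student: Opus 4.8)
The plan is to reduce the function-space gap $\varepsilon$ to a weight-space distance and then invoke the strong-convexity bound already established in~\Cref{cor_stronglyConvex}. First I would recall that both decision functions are linear in feature space: using the Karush--Kuhn--Tucker relation $\mathbf{w} = \sum_{j=1}^M \alpha_j y_j \varphi(\mathbf{x}_j)$, the Pegasos coefficients $\boldsymbol{\alpha}^N$ correspond to a weight vector $\mathbf{w}^N$ with $h^N(x) = (\mathbf{w}^N)^\top \varphi(x)$, and likewise $h^\star(x) = (\mathbf{w}^\star)^\top \varphi(x)$ for the optimizer. Subtracting and applying Cauchy--Schwarz then gives
\begin{equation*}
    \varepsilon = \left|h^\star(x) - h^N(x)\right| = \left|(\mathbf{w}^\star - \mathbf{w}^N)^\top \varphi(x)\right| \leq \norm{\mathbf{w}^\star - \mathbf{w}^N} \, \norm{\varphi(x)} \, .
\end{equation*}

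The next step is to control the factor $\norm{\varphi(x)}$. Since $\norm{\varphi(x)}^2 = \varphi(x)^\top \varphi(x) = k(x,x)$ and the quantum kernel satisfies $k(x,x) = |\!\braket{\psi(x)}{\psi(x)}\!|^2 = 1$ for normalized feature states, we have $\norm{\varphi(x)} = 1$, so the bound simplifies to $\varepsilon \leq \norm{\mathbf{w}^\star - \mathbf{w}^N}$.

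Finally I would apply~\Cref{cor_stronglyConvex}. By definition $\delta = |f(\mathbf{w}^\star) - f(\mathbf{w}^N)| = f(\mathbf{w}^N) - f(\mathbf{w}^\star)$, the last equality holding because $\mathbf{w}^\star$ is the global minimizer of the convex objective. Hence $\mathbf{w}^N$ meets the hypothesis $f(\mathbf{w}^N) \leq f(\mathbf{w}^\star) + \delta$ of the corollary with the role of $\varepsilon$ played by $\delta$, which yields $\norm{\mathbf{w}^N - \mathbf{w}^\star} \leq \sqrt{2\delta/\lambda}$. Chaining this with the preceding inequality gives the claimed $\varepsilon \leq \sqrt{2\delta/\lambda}$.

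The computation is essentially routine once the reduction is arranged; the only point deserving care is the normalization $\norm{\varphi(x)} = 1$, which relies specifically on the quantum feature map producing normalized states. For a generic classical feature map this step would instead contribute a factor $\sqrt{\max_x k(x,x)}$, so the clean form of the bound is tied to the quantum setting. The remaining ingredients---the linearity of the decision function in feature space and the strong-convexity corollary---are already in hand, so no further obstacle arises.
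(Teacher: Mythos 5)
Your proposal is correct and follows essentially the same route as the paper's proof: express both decision functions as $\varphi(x)^\top \mathbf{w}$ via the kernel trick, apply Cauchy--Schwarz, bound the feature-vector norm, and invoke Corollary~\ref{cor_stronglyConvex} from strong convexity. The only cosmetic difference is that you argue $\norm{\varphi(x)} = 1$ exactly from the normalization of the quantum feature states, while the paper only uses the weaker (and sufficient) bound $\norm{\varphi(x)} \leq 1$.
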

\begin{proof}
Using the Kernel trick we write in general $\mathbf{w} = \sum_{i=1}^{M}\alpha_i y_i \mathbf{\phi}(x_i)$, allowing the following calculation:
\begin{align*}
    |h^\star(x) - h^N(x)| &= |\sum_{j=1}^{M}\alpha^\star_jy_jk(x,x_j) - \sum_{j=1}^{M}\alpha^N_jy_jk(x,x_j)| \\
    &= |\sum_{j=1}^{M}\alpha^\star_jy_j\phi(x)^T\phi(x_i) - \sum_{j=1}^{M}\alpha^N_jy_j\phi(x)^T\phi(x_i)| \\
    &= |\phi(x)^T\Big(\sum_{j=1}^{M}\alpha^\star_jy_j\phi(x_i) - \sum_{j=1}^{M}\alpha^N_jy_j\phi(x_i)\Big)| \\
    &= |\phi(x)^T(\mathbf{w}^N - \mathbf{w^\star})| \\
    &\leq \norm{\phi}\norm{\mathbf{w}^N - \mathbf{w^\star}} \\
    &\leq \sqrt{\frac{2\delta}{\lambda}} \, ,
\end{align*}
where the first inequality uses Cauchy-Schwarz and in the final follows from $\norm{\phi(x)} \leq 1$ as well as Corollary~\ref{cor_stronglyConvex}.
\end{proof}

\section{Generating artificial data}
In this appendix we provide the algorithm used to generate the training data for the numerical experiments in section~\Cref{sec_complexity}. In addition to the data set size $M$ and the margin $\mu$ separating the two classes, the algorithm requires a decision function $h_\theta$ as an input. For the approximate QSVM model, this decision function is given by~\eqref{eq:qnn_c}. To generate data for the experiments in~\Cref{app:dual_scaling,app:pegasos_scaling,sec:experiments_pegasos}, the variatonal form $\cW(\theta)$ in~\eqref{eq_h_theta} is chosen as the identity and thus trivial. In that case, the different realizations of the training set correspond to differing samples from the uniform distribution in~\Cref{algo:generating_artificial_data}, while the underlying decision boundary remains fixed. In~\Cref{sec_approximate_QSVM}, the $\theta$ in $\cW(\theta)$ are additionally randomly sampled.
\begin{algorithm}[H] 
	\caption{Generating artificial data}
	\label{algo:generating_artificial_data}
	\begin{algorithmic}[1]						
		\STATE \textbf{Inputs:} 
		\STATE $h_\theta(\mathbf{x})$ with fixed $\theta$,
		\STATE size of the data set $M \in \N$, 
		\STATE size of the margin $\mu \in \R$ (negative $\mu$ results in overlapping data).	
		\STATE
		\STATE $i = 0$
		\WHILE{$|y = -1| \leq \frac{M}{2} \land |y = 1| \leq \frac{M}{2}$}
		\STATE Sample $\mathbf{x} \sim \cU_{[0, 1]} \times \cU_{[0, 1]}$ once
		\STATE $\tilde{y} = h_\theta(\mathbf{x})$
		
		\IF{$\tilde{y} \leq - \frac{\mu}{2}$ \textbf{and} $\tilde{y} < + \frac{\mu}{2}$}
		\STATE $\mathbf{x}_i \gets \mathbf{x}$
		\STATE $y_i \gets -1$
		\STATE $i \gets i + 1$
		\ENDIF
		
		\IF{$\tilde{y} \geq + \frac{\mu}{2}$ \textbf{and} $\tilde{y} > - \frac{\mu}{2}$}
		\STATE $\mathbf{x}_i \gets \mathbf{x}$
		\STATE $y_i \gets +1$
		\STATE $i \gets i + 1$
		\ENDIF
		\IF{$\tilde{y} \geq + \frac{\mu}{2}$ \textbf{and} $\tilde{y} \leq - \frac{\mu}{2}$}
		\STATE Sample $p \sim \cU_{[0, 1]}$
		\IF{$p > \frac{1}{2}$}
		\STATE $y_i \gets +1$
		\ELSE
		\STATE $y_i \gets -1$
		\ENDIF
		\STATE $i \gets i + 1$
		\ENDIF
		\ENDWHILE
		\STATE
		\STATE \textbf{Output:} Balanced training data $X = \{\mathbf{x}_1,\mathbf{x}_2, ..., \mathbf{x}_M\}$ with labels $y = \{y_1, y_2, ..., y_M\}$.
	\end{algorithmic}
\end{algorithm}


\end{document}